\newcommand{\nonviableset}            {\ensuremath{\mathcal{F}}}
\newcommand{\permutation}         [1] {\ensuremath{\langle}\mbox{#1}\ensuremath{\rangle}}
\newcommand{\relevantset}         [2] {\ensuremath{R_{#1}(#2)}}
\newcommand{\Sym}                 [1] {\ensuremath{\mathfrak{S}_{#1}}}
\newcommand{\ie}{i.e.\ }
\newcommand{\eg}{e.g.\ }
\newcommand{\mainproblem}{{\sc Hy-stCon}}
\newcommand{\uN}[1]{N(#1)}
\newcommand{\inN}[1]{N^{\texttt{in}}(#1)}
\newcommand{\outN}[1]{N^{\texttt{out}}(#1)}
\newcommand{\suchthat}{\;\ifnum\currentgrouptype=16 \middle\fi|\;}
\newcommand{\N}{\mathbb{N}\xspace}
\def\X{{\cal X}}
\def\H{{\cal H}}
\def\F{{\cal F}}
\def\G{{\cal G}}
\def\K{{\cal K}}
\def\M{{\cal M}}
\def\S{{\cal S}}
\def\R{{\cal R}}
\def\T{{\cal T}}
\def\P{{\cal P}}
\def\QQ{{\cal Q}}
\newcommand{\figref}[1]{Fig.~\ref{#1}}
\providecommand*{\cupdot}{%
  \mathbin{%
    \mathpalette\@cupdot{}%
  }%
}
\newcommand*{\@cupdot}[2]{%
  \ooalign{%
    $\m@th#1\cup$\cr
    \sbox0{$#1\cup$}%
    \dimen@=\ht0 %
    \sbox0{$\m@th#1\cdot$}%
    \advance\dimen@ by -\ht0 %
    \dimen@=.5\dimen@
    \hidewidth\raise\dimen@\box0\hidewidth
  }%
}
\newcommand\restr[2]{{
  \left.\kern-\nulldelimiterspace 
  #1 
  \vphantom{\big|} 
  \right|_{#2} 
}}
\newcommand{\removelatexerror}{\let\@latex@error\@gobble}
\let\oldnl\nl
\newcommand{\nonl}{\renewcommand{\nl}{\let\nl\oldnl}}
\tikzset{
  shorten/.style={/tikz/shorten >={#1},/tikz/shorten <={#1}}}
\tikzstyle{node}=[circle,draw,inner sep=2pt,transform shape,minimum size=1.75em]
\Crefname{observation}{Observation}{Observations}
\Crefname{hypothese}{Assumption}{Assumptions}
\title{Sorting With Forbidden Intermediates}
\author{Carlo Comin\inst{1,2} \and Anthony Labarre\inst{1} \and Romeo Rizzi\inst{3} \and Stéphane Vialette\inst{1}}
\authorrunning{Carlo Comin et al.}
\institute{
Universit{\'e} Paris-Est, LIGM (UMR 8049), UPEM, CNRS, ESIEE, ENPC, F-77454, Marne-la-Vall{\'e}e, France\\
\email{\{Anthony.Labarre,Stephane.Vialette\}@u-pem.fr}
\and 
Department of Mathematics, University of Trento, Italy \\
\and
Department of Computer Science, University of Verona, Italy \\
\email{Carlo.Comin@unitn.it}, \email{Romeo.Rizzi@univr.it}
}
\begin{document}

\maketitle

\begin{abstract}
A wide range of applications, most notably in comparative genomics, 
involve the computation of a shortest sorting sequence of operations for a given permutation,
where the set of allowed operations is fixed beforehand. 
Such sequences are useful for instance when reconstructing potential scenarios of evolution between species,
or when trying to assess their similarity. We revisit those problems by adding a new constraint on the sequences to be computed:
they must avoid a given set of \emph{forbidden intermediates},
which correspond to species that cannot exist because the mutations that would be involved in their creation are lethal.
We initiate this study by focusing on the case where the only mutations 
that can occur are exchanges of any two elements in the permutations,
and give a polynomial time algorithm for solving that problem when the permutation to sort is an involution.
\end{abstract}
\textbf key
Guided Sorting, Lethal Mutations, Forbidden Vertices, Permutation Sorting, Hypercube Graphs, st-Connectivity.

\thispagestyle{plain}  

\section{Introduction}

Computing distances between permutations, or sequences of operations that transform them into one another,
are two generic problems that arise in a wide range of applications,
including comparative genomics~\cite{Fertin2009}, ranking~\cite{Deza:1998},
and interconnection network design~\cite{Lakshmivarahan1993}.
Those problems are well-known to reduce to constrained sorting problems of the following form:
given a permutation $\pi$ and a set $S$ of allowed operations,
find a sequence of elements from $S$ that sorts $\pi$ and is as short as possible.
In the context of comparative genomics, the sequence to be reconstructed
yields a possible scenario of evolution between the genomes represented by $\pi$ and the target
identity permutation $\iota$, where all permutations obtained inbetween are successive descendants of
$\pi$ (and ancestors of $\iota$). The many possible choices that exist for $S$,
as well as other constraints or cost functions with which they can be combined,
have given rise to a tremendous number of variants whose algorithmic and mathematical
aspects have now been studied for decades~\cite{Fertin2009}.
Specific issues that biologists feel need to be addressed to improve
the applicability of these results in a biological context include:
\begin{enumerate}
    \item the oversimplicity of the model (permutations do not take duplications into account),
    \item the rigid definition of allowed operations, which fails to capture the complexity of evolution, and
    \item the complexity of the resulting problems,
where algorithmic hardness results abound even for deceivingly simple problems.
\end{enumerate}
A large body of work has been devoted to addressing those issues, namely by proposing richer models for genomes,
encompassing several operations with different weights~\cite{Fertin2009}.
Some approaches for increasing the reliability of rearrangement methods
by adding additional biologically motivated constraints have been investigated
(for instance,
\citet{Bergeron:Blanchette:Chateau:Chauve:WABI:2004} consider \emph{conserved intervals},
\citet{Figeac:Varre:WABI:2004} restrict the set of allowed inversions and
\citet{Berard:Bergeron:Chauve:RECOMB:2004} take into account
the number of inversions in the wanted scenario which commute with all \emph{common intervals}).
However, another critical issue has apparently been overlooked: to the best of our knowledge,
no model takes into account the fact that the solutions it produces may involve allele mutations that are lethal
to the organism on which they act.
\emph{Lethals} are usually a result of mutations in genes
that are essential to growth or development~\cite{Gluecksohn-Waelsch63};
they have been known to occur for more than a century~\cite{Cuenot1905},
dating back to the works of
Cu\'enot in 1905
who was studying the inheritance of coat colour in mice.
As a consequence, solutions that may be perfectly valid from a
mathematical point of view should nonetheless be rejected on the grounds that some of
the intermediate ancestors they produce are nonviable
and can therefore not have had any descendants.
We revisit the family of problems mentioned above by adding a natural constraint which, as far as we know,
has not been previously considered in this
form (see e.g.~\cite{Bergeron:Blanchette:Chateau:Chauve:WABI:2004,Figeac:Varre:WABI:2004,Berard:Bergeron:Chauve:RECOMB:2004} for connected attempts): namely, the presence of a set of forbidden intermediate permutations,
which the sorting sequence that we seek must avoid. We refer to this family of problems
as {\sc guided sorting} problems, since they take additional guidance into account.
In this paper, we focus our study on the case where
only \emph{exchanges} (i.e., algebraic transpositions) are allowed;
furthermore, we simplify the problem by demanding that the solutions we seek
be \emph{optimal} in the sense that no shorter sorting sequence of exchanges exists even
when no intermediate permutation is forbidden.
We choose to focus on exchanges because of their connection to the underlying
\emph{disjoint cycle structure} of permutations, which plays an important role
in many related sorting problems where a similar cycle-based approach,
using this time the ubiquitous \emph{breakpoint graph}, has proved extremely fruitful~\cite{Labarre2013}.
Therefore, we believe that progress on this particular
variant will be helpful when attempting to solve related variants based on more complex operations.

\subsection{Contribution}
Our main contribution in this work is a polynomial time algorithm for solving {\sc guided sorting}
by exchanges when the permutation to sort is an \emph{involution}.
We show that, in that specific case,
the space of all feasible sorting sequences admits a suitable
description in terms of directed $(s,t)$-paths in hypercube graphs.
We achieve this result by reducing {\sc guided sorting}
to the problem of finding directed $(s,t)$-paths that avoid a prescribed set
$\F\subseteq V$ of \emph{forbidden vertices}.
Our main contribution, therefore, consists in solving this latter problem in time polynomial in just the encoding length of $\F$, 
if $G$ is constrained to be a \emph{hypercube} graph;
which is a novel algorithmic result that may be of independent interest.
Specific properties that will be described later
on~\cite{GoldsLehmanRon2001, LehmanRon2001} allow us to avoid the full construction of that graph,
which would lead to an exponential time algorithm.

\subsection{Related Works}
We should mention that constrained variants of the $(s,t)$-connectivity
problem have been studied already to some extent.
For instance, in the early '70s, motivated by some problems in the field of automatic software testing and validation,
\citet{KSG73} introduced the \emph{path avoiding forbidden pairs} problem, namely,
that of finding a directed $(s,t)$-path in a graph $G=(V,E)$ that contains at most one vertex
from each pair in a prescribed set $\mathcal{P}\subseteq V\times V$ of \emph{forbidden pairs} of vertices.
\citet{GMO76} proved that the problem is \NP-complete on DAGs.
A number of special cases were shown to admit polynomial time algorithms,
\eg \citet{Yin97} studied the problem in directed graphs under a \emph{skew-symmetry} condition.
However, the involved techniques and the related results do not extend to our problem,
for which we are aware of no previously known algorithm that runs in time polynomial in just the encoding length of $\F$.

A preliminary version of this article appeared in the proceedings of the
\emph{3rd International Conference on Algorithms for Computational Biology (AlCoB 2016)}, see~\cite{Comin2016}.
Here, the previous results are improved and the presentation is extended:

\emph{(1)} The time complexity of Algorithm~\ref{ALGO:solve} is improved by a factor of $d_{S,T}\cdot n$ (see Theorem~\ref{thm:main} for the actual time bound).

\emph{(2)} Subsection~\ref{sect:VertexDisjointPaths} is extended by presenting Algorithm~\ref{ALGO:vertex-disjoint-paths}
  plus all the details of its correctness and running time analysis.

\emph{(3)} Subsection~\ref{sect:st-dPaths} is extended by including a detailed correctness and complexity analysis of Algorithm~\ref{ALGO:solve}.

\emph{(4)} \figref{fig:auxiliary_network}, \figref{fig:undirected_bipartite} and \figref{fig:undirected_bipartite_vertex_cover}
  have been added to support some of the more technical constructions with an illustration.
\subsection{Organization}
The remainder of the article is organized as follows.
Section~\ref{sec:background} provides some background notions and notation on which the rest of this work relies.
The main contribution is offered in Section~\ref{sect:main_sect}. In Subsection~\ref{sct:problem_formulation},
the problem {\mainproblem} is formulated. The reduction from {\sc guided sorting} for exchanges (and adjacent exchanges) to {\mainproblem} is
offered in Subsection~\ref{sec:guided-sorting-for-exchanges-details}~(and \ref{sec:guided-sorting-for-adjacent-exchanges-details}, respectively).
The formal statement of our main algorithmic contribution is detailed in Subsection~\ref{sct:main_result}.
Next, Subsection~\ref{sect:VertexDisjointPaths} concerns the specific properties~\cite{GoldsLehmanRon2001, LehmanRon2001} that allow us to avoid the full construction of the hypercube search space.
Subsection~\ref{sect:st-dPaths} presents the polynomial-time algorithm for solving \mainproblem.
In Subsection~\ref{sect:remark_dec_search}, it is shown how to speed up the algorithm in the case in which one is interested just in the decision task of \mainproblem.
The correctness analysis of the main algorithm is carried on in Subsection~\ref{subsect:correctness}, while the complexity is analyzed in Subsection~\ref{subsect:complexity}.
We conclude in Section~\ref{sect:conclusion} with a discussion of several open problems.

\section{Background and Notation}\label{sec:background}


We use the notation $\pi=\langle\pi_1\ \pi_2\ \cdots\ \pi_k\rangle$ when viewing permutations as sequences, i.e. $\pi_i=\pi(i)$ for $i\in[k]=\{1, 2, \ldots, k\}$. 
Our aim is to sort a given permutation $\pi$, i.e. to transform it into the identity permutation $\iota=\langle 1\ 2\ \cdots\ k\rangle$, using a predefined set of allowed operations 
specified as a generating set $S$ of the symmetric group $\Sym{k}$. We seek a sorting sequence that uses only elements from $S$ and:
\begin{enumerate}
    \item \emph{avoids} a given set $\nonviableset$ of \emph{forbidden permutations}, 
i.e. no intermediary permutation produced by applying the operations specified 
by the sorting sequence belongs to $\nonviableset$, and
    \item is \emph{optimal}, 
i.e. no shorter sorting sequence exists for $\pi$ even if $\nonviableset=\emptyset$. 
\end{enumerate}
We refer to the general problem of finding a sorting sequence under these constraints as {\sc guided sorting}, 
and restrict in this paper the allowed operations to \emph{exchanges} of any two elements (\ie \emph{algebraic transpositions}). 
For instance, let $\pi=\permutation{2 3 1 4}$ and $\nonviableset=\{\permutation{1 3 2 4}, \permutation{3 2 1 4}\}$. 
Then $\permutation{2 3 1 4}\mapsto\permutation{2 1 3 4}\mapsto\permutation{1 2 3 4}$ 
is a valid solution since it is optimal and avoids $\nonviableset$, but neither 
$\permutation{2 3 1 4}\mapsto\permutation{4 3 1 2}\mapsto\permutation{4 3 2 1}
\mapsto\permutation{4 2 3 1}\mapsto\permutation{1 2 3 4}$ nor 
$\permutation{2 3 1 4}\mapsto\permutation{1 3 2 4}\mapsto\permutation{1 2 3 4}$ can be accepted:
 the former is too long, and the latter does not avoid $\nonviableset$. 

We use standard notions and notation from graph theory (see e.g. \citet{Diestel2005} for undefined concepts), 
using $\{u, v\}$ (resp. $(u, v)$) to denote the edge (resp. arc) between vertices 
$u$ and $v$ of an undirected (resp. directed) graph $G=(V,E)$. 
All graphs we consider are \emph{simple}: they contain neither loops nor parallel edges / arcs.
If $\nonviableset\subseteq V$, 
a directed path $\p=v_0v_1\cdots v_n$ \emph{avoids} $\nonviableset$ when $v_i\not\in\nonviableset$ for every $i$.
If $\S\subseteq V$ and $\T\subseteq V$, we say that $\p$ \emph{goes from $\S$ to $\T$ in $G$} if $v_0\in\S$ and $v_n\in \T$. 
When $G$ is directed, 
we partition the neighbourhood $\uN{u}$ of a vertex $u$ into the sets $\outN{u} = \{ v\in V \mid (u,v) \in E \}$ and $\inN{u}= \{v\in V\mid (v,u) \in E \}$. 
Some of our graphs may be vertex-labelled, using any injective mapping $\ell:V\rightarrow \N$.
For any $n\in \N$, $\wp_n=\wp([n])$ denotes the power set of $[n]$.
The \emph{hypercube graph on ground set} $[n]$, denoted by $\H_n$, is the graph with
vertex set $\wp_n$ and in which the arc $(U, V)$ connects vertices $U,V\subseteq [n]$ if 
there exists some $q\in [n]$ such that $U=V\setminus \{q\}$. 
If $S,T\in\wp_n$ and $|S|\leq |T|$, then $d_{S,T}=|T|-|S|$ is the \emph{distance} between $S$ and $T$.
Finally, $\H_n^{(i)}$ denotes the family of all subsets of $\wp_n$ of size $i$.

\section{Solving {\sc guided sorting} For Involutions}\label{sect:main_sect}

The \emph{Cayley graph} $\Gamma(\Sym{n}, S)$ of $\Sym{n}$ for a given generating set $S$ of $\Sym{k}$
contains a vertex for each permutation
in $\Sym{k}$ and an edge between any two permutations that can be obtained from one another using one element from $S$.
A na\"{\i}ve approach for solving any variant of the {\sc guided sorting} problem would build
the part of $\Gamma(\Sym{k}, S)$ that is needed (i.e. without the elements of $\nonviableset$),
then run a shortest path algorithm to compute an optimal sequence that avoids all elements of $\nonviableset$.
This is highly impractical, since the size of $\Gamma$ is exponential in $k$.

We describe in this section a polynomial time algorithm
for the case where $S$ is the set of all exchanges and $\pi$ is an \emph{involution},
i.e. a permutation such that for each $1\leq i\leq |\pi|$,
either $\pi_i=i$ or there exists an index $j$ such that $\pi_i=j$ and $\pi_j=i$.
From our point of view, involutions reduce to collections of disjoint pairs of elements that each need to be swapped by an exchange until we obtain the identity permutation,
and the only forbidden permutations that could be produced by an optimal sorting sequence are involutions whose pairs of unsorted elements all appear in $\pi$.
Therefore, we can reformulate our {\sc guided sorting} problem in that setting as that of finding a directed $(\pi, \iota)$-path in $\H_n$ that avoids all vertices in $\F$,
where the permutation to sort $\pi$ corresponds to the top vertex $[n]$ of $\H_n$ and the identity permutation $\iota$ corresponds to the bottom vertex $\emptyset$ of $\H_n$. We give more details on the reduction in \Cref{sec:guided-sorting-for-exchanges-details}.

\subsection{Problem Formulation}\label{sct:problem_formulation}
We shall focus on the following problem from here on.


\smallskip
\framebox{
\begin{minipage}{11cm}
\textsc{Problem:} \mainproblem.

\rule{\textwidth}{.9pt}
\textsc{Input}:
the size $n\in \N$ of the underlying ground set $[n]$,
a family of \emph{forbidden vertices} $\nonviableset\subseteq \wp_n$,
a \emph{source} set $S\in \wp_n$ and a \emph{target} set $T\in \wp_n$.
\rule{\textwidth}{0.5pt}
\textsc{Decision-Task:} Decide whether there exists a directed path $\p$ in $\H_n$ that goes
from source $S$ to target $T$ avoiding $\nonviableset$; \\
\textsc{Search-Task:} Compute a directed path $\p$ in $\H_n$ that goes from source $S$ to target $T$
avoiding $\nonviableset$, provided that at least one such path exists.
\end{minipage}
}
\smallskip

We examine in this section specific instances of {\sc guided sorting} which can be solved through a
reduction to {\mainproblem}. We say that permutations that may occur in an optimal sorting sequence for a given permutation $\pi$ are \emph{relevant}, and all others are \emph{irrelevant}. The distinction will matter when sorting a particular permutation since, as we shall see, the structure of $\pi$ (however it is measured) will have implications on that of relevant permutations and will allow us to simplify the set of forbidden permutations by discarding irrelevant ones.
For a fixed set $S$ of operations, we let $\relevantset{S}{\pi}$ denote the set of permutations that are relevant to $\pi$. Undefined terms and unproven properties of permutations below are well-known, and details are in standard references, such as~\citet{DBLP:books/aw/Knuth73}.

\subsection{{\sc guided sorting} For Exchanges}\label{sec:guided-sorting-for-exchanges-details}


Recall that every permutation $\pi$ in $\Sym{k}$ decomposes in a single way into
\emph{disjoint cycles} (up to the ordering of cycles and of elements within each cycle).
This decomposition corresponds to the cycle decomposition of the directed graph $G(\pi)=(V, A)$,
where $V=[k]$ and $A=\{(i, \pi_i)\ |\ 1\leq i \leq k\}$.
The \emph{length} of a cycle of $\pi$ is then simply the number of elements it contains, and the number of cycles of $\pi$ is denoted by $c(\pi)$.

The \emph{Cayley distance} of a permutation $\pi$ is the length of an optimal sorting sequence of exchanges for $\pi$, and its value is $|\pi|-c(\pi)$.
Therefore, when searching for an optimal sorting sequence, we may restrict our attention to exchanges that split a particular cycle into two smaller ones.

Let $(\pi, \nonviableset, S, K)$ be an instance of {\sc guided sorting} such that $S$
is the set of all exchanges and where the permutation $\pi$ to sort is an \emph{involution},
i.e. a permutation whose cycles have length at most two. It is customary to omit cycles of length $1$,
and to write a permutation $\pi=\langle\pi_1\ \pi_2\ \cdots\ \pi_k\rangle$ with $n$ cycles of length $2$ as $c_1c_2\cdots c_n$.
Since we are looking for an optimal sorting sequence, we may assume that all permutations in $\nonviableset$ are relevant,
which in this case means that every permutation $\phi$ in $\nonviableset$ is an involution and its $2$-cycles form a proper subset of those of $\pi$.
Our instance of {\sc guided sorting} then translates to the following instance of {\mainproblem}:
\begin{itemize}
    \item $\pi\mapsto [n]$ in the following way: $c_i\mapsto i$ for $1\leq i\leq n$;
    \item each permutation $\phi$ in $\nonviableset$ is mapped onto a subset of $[n]$ by replacing its cycles with the indices obtained in the first step;
let $\nonviableset'$ denote the collection of subsets of $[n]$ obtained by applying that mapping to each $\phi$ in $\nonviableset$.
\end{itemize}
The resulting {\mainproblem} instance is then $\langle [n], \emptyset, \nonviableset', n\rangle$,
and a solution to instance $(\pi, \nonviableset, S, K)$ of {\sc guided sorting} exists if and only if a solution
to instance $\langle [n], \emptyset, \nonviableset', n\rangle$ of {\mainproblem} exists;
the translation of the solution from the latter formulation to the former is straightforward.

\subsection{{\sc guided sorting} For Adjacent Exchanges}\label{sec:guided-sorting-for-adjacent-exchanges-details}

Recall that an \emph{inversion} in a permutation $\pi$ in $\Sym{k}$ is a pair $(\pi_i, \pi_j)$
with $1\leq i<j\leq k$ and $\pi_i>\pi_j$. Let $(\pi, \nonviableset, S, K)$ be an instance
of {\sc guided sorting} where $S$ is the set of all \emph{adjacent} exchanges, i.e. exchanges that act on consecutive positions.
It is well-known that in this case, any optimal sorting sequence for $\pi$ has length equal to the number of inversions of $\pi$,
which means that in the search for an optimal sorting sequence,
we may restrict our attention to adjacent exchanges that act on inversions that consist of adjacent elements.

Let us now assume that all $n$ inversions of $\pi$ are made of adjacent elements, and denote $\pi=i_1i_2\cdots i_n$, where each $i_j$ is an inversion.
Since we are looking for an optimal sorting sequence, we may assume that all permutations in $\nonviableset$ are relevant,
which in this case means that all inversions of any permutation $\phi$ in $\nonviableset$ form a proper subset of those of $\pi$.
The reduction to {\mainproblem} in that setting is very similar to that given in the case of exchanges:
\begin{itemize}
    \item $\pi\mapsto [n]$ in the following way: $i_j\mapsto j$ for $1\leq j\leq n$;
    \item each permutation $\phi$ in $\nonviableset$ is mapped onto a subset of $[n]$ by replacing its inversions with the indices obtained in the first step;
let $\nonviableset'$ be the collection of subsets of $[n]$ obtained by applying that mapping to each $\phi$ in $\nonviableset$.
\end{itemize}
The resulting {\mainproblem} instance is then $\langle [n], \emptyset, \nonviableset', n\rangle$,
and a solution to instance $(\pi, \nonviableset, S, K)$ of {\sc guided sorting} exists if and only if a solution to instance
$\langle [n], \emptyset, \nonviableset', n\rangle$ of {\mainproblem} exists; the translation of the solution from the latter formulation to the former is straightforward.

\subsection{Main Result}\label{sct:main_result}

In the rest of this section,
we will show how to solve {\mainproblem} in time polynomial in $|\nonviableset|$ and $n$.
The algorithm mainly consists in the continuous iteration of two
phases:
\begin{enumerate}
\item \emph{Double-BFS}. This phase explores the outgoing neighbourhood of the source $S$ by a breadth-first search denoted by $\texttt{BFS}_{\uparrow}$
going from lower to higher levels of $\H_n$ while avoiding the vertices in $\nonviableset$.
$\texttt{BFS}_{\uparrow}$ collects a certain (polynomially bounded) amount
of visited vertices.
Symmetrically, the
incoming
neighbourhood of the target vertex $T$ is also explored by another breadth-first search $\texttt{BFS}_{\downarrow}$ going
from higher to lower levels of $\H_n$ while avoiding the vertices in $\nonviableset$,
also collecting a certain (polynomially bounded) amount
of visited vertices.
\item \emph{Compression}. If a valid solution has not yet been determined,
then a
compression technique is devised in order to shrink the size of the remaining search space.
This is possible thanks to some nice regularities of the search space
and to certain connectivity properties of hypercube graphs~\cite{GoldsLehmanRon2001, LehmanRon2001}.
This allows us to reduce the search space in a suitable way
and, therefore, to 
continue with the Double-BFS phase in order to keep the search towards valid solutions going.
\end{enumerate}

Our main contribution is summarized in the following theorem. We devote the rest of this section to
an in-depth description of the algorithms it mentions.

\begin{theorem}\label{thm:main}
Concerning the {\mainproblem}
problem, the following propositions hold on any input $\langle S, T, \nonviableset, n \rangle$,
where $d_{S,T}$ is the distance between $S$ and $T$.
\begin{enumerate}
\item There exists an algorithm for solving the \textsc{Decision-Task} of
{\mainproblem} whose time complexity is:
\[O(\min(\sqrt{|\nonviableset|\, d_{S,T}\, n}, |\nonviableset|)\, |\nonviableset|^2\, d^3_{S,T}\, n ).\]
\item There exists an algorithm for solving the \textsc{Search-Task} of
{\mainproblem} whose time complexity is:
\[O(\min(\sqrt{|\nonviableset|\, d_{S,T}\, n}, |\nonviableset|)\, |\nonviableset|^2\, d^3_{S,T}\, n + |\nonviableset|^{5/2} n^{3/2} d_{S,T} ).\]
\end{enumerate}
\end{theorem}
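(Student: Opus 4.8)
The plan is to design an algorithm that exploits the highly symmetric, level-structured nature of the hypercube $\H_n$ to avoid ever materializing its exponentially many vertices. The central idea, as signposted in the prose, is to alternate a bidirectional breadth-first search with a compression step. First I would observe that every $(S,T)$-path in $\H_n$ is a monotone chain that passes through exactly one vertex at each intermediate level $|S|, |S|+1, \dots, |T|$, so the relevant search space consists of the $d_{S,T}+1$ consecutive levels $\H_n^{(|S|)}, \dots, \H_n^{(|T|)}$. The key leverage is that, in the absence of forbidden vertices, any two comparable vertices of the hypercube are joined by many vertex-disjoint paths (this is the connectivity property attributed to~\cite{GoldsLehmanRon2001, LehmanRon2001}); consequently the forbidden set $\nonviableset$ can block reachability only if it is, in a precise sense, large enough to form a cut, and small cuts can be localized and compressed.

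Concretely, I would run $\texttt{BFS}_{\uparrow}$ from $S$ and $\texttt{BFS}_{\downarrow}$ from $T$, but crucially I would \emph{not} expand the full neighbourhood at each step. Instead, at each frontier I would use the disjoint-paths property to argue that either (a) the number of reachable vertices at the current level already exceeds $|\nonviableset|$, in which case enough escape routes survive that the two searches are guaranteed to meet and a path can be extracted; or (b) the reachable frontier stays bounded by a polynomial in $|\nonviableset|$, in which case the forbidden vertices are doing real work and we may restrict attention to the small ``relevant'' portion of each level. This dichotomy is what keeps every frontier polynomially bounded. When neither search has yet certified a solution, the \emph{Compression} phase collapses the middle levels: the connectivity guarantees let us replace a block of levels by an equivalent instance on fewer levels whose forbidden set is no larger, so the distance $d_{S,T}$ effectively shrinks and the Double-BFS can resume on a smaller problem. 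Iterating until the searches meet or the space is exhausted yields a decision procedure; for the \textsc{Search-Task} I would additionally thread the actual path through the compression steps, reconstructing an explicit monotone chain, which accounts for the extra additive term $|\nonviableset|^{5/2} n^{3/2} d_{S,T}$ in the bound.

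For the complexity bookkeeping, I would charge each iteration of the Double-BFS the cost of expanding a polynomially bounded frontier while testing membership in $\nonviableset$ (a factor of $|\nonviableset|\, n$ per vertex-neighbour pair), times the number of levels $O(d_{S,T})$, times the number of compression rounds. The factor $\min(\sqrt{|\nonviableset|\, d_{S,T}\, n},\, |\nonviableset|)$ reflects two competing upper bounds on the number of rounds before the search space is either exhausted or provably connected: one argument caps the rounds by $|\nonviableset|$ directly (each round must ``consume'' at least one forbidden vertex), and a sharper counting argument via the disjoint-path multiplicity gives the $\sqrt{|\nonviableset|\, d_{S,T}\, n}$ bound, so the algorithm enjoys whichever is smaller. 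Multiplying through the per-round cost $|\nonviableset|^2 d_{S,T}^3 n$ delivers the stated running times, with the search variant incurring the additional path-reconstruction overhead.

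The hard part, I expect, is the \emph{correctness of the compression step}: proving that collapsing several hypercube levels into a lower-dimensional equivalent instance preserves exactly the existence of an avoiding $(S,T)$-path. This demands a careful invariant showing that the disjoint-path structure guaranteed by~\cite{GoldsLehmanRon2001, LehmanRon2001} is not destroyed by the compression, so that no spurious path is created and no genuine path is lost, while simultaneously ensuring the forbidden set does not grow. Establishing this invariant — and bounding how much the frontier and forbidden set can expand per round so that the polynomial size guarantees hold throughout the iteration — is where the bulk of the technical effort will lie.
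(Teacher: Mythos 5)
Your proposal reproduces the paper's high-level skeleton (level-wise bidirectional BFS with polynomially bounded frontiers, a compression step, Lehman--Ron connectivity, and the decision/search split), but both mechanisms you rely on to make it work are not the ones in the paper, and neither is substantiated. First, your dichotomy (a) --- that once a frontier exceeds $|\F|$ vertices ``enough escape routes survive that the two searches are guaranteed to meet'' --- is not a valid certificate. Frontier \emph{size} does not lower-bound the number of surviving routes: two large families on different levels can admit few or even no containment pairs $U\subset V$ at all (think of families with disjoint supports), so nothing connects them. The paper's certificate is relational, not cardinal: it builds the undirected bipartite graph $\G$ on $(\S,\T)$ with edges given by $\subset$, computes a matching $\M$ capped at size $|\F|+1$, and only when $|\M|>|\F|$ does Theorem~\ref{thm:LehmanRon} yield $|\M|$ vertex-disjoint paths joining the matched vertices, whence by pigeonhole one avoids $\F$. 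Indeed, even when both frontiers exceed the paper's (larger) threshold $|\F|\,d_{S,T}$, the algorithm does \emph{not} declare success; it computes $\M$ and, if $|\M|\le|\F|$, compresses. Second, your compression --- ``collapse a block of levels into an equivalent instance on fewer levels so that $d_{S,T}$ shrinks'' --- is not the paper's compression, and you give no construction for it; it is unclear that any such level-collapsing reduction preserving avoiding-path existence exists. In Algorithm~\ref{ALGO:compression_phase}, when $|\M|\le|\F|$ the matching is maximum, so K\"onig's theorem gives a vertex cover $\X$ of $\G$ with $|\X|=|\M|\le|\F|$; since $\X$ covers every edge, every $\S$-to-$\T$ path must pass through $\X_\S=\X\cap\S$ or $\X_\T=\X\cap\T$. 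The algorithm accumulates $\X_\T$ into a compressed \emph{target frontier} $\T'$ and restarts the upward BFS from the small set $\X_\S$; after at most $d_{S,T}$ inner iterations one gets $|\T'|\le|\F|\,d_{S,T}$ and the invariant that every avoiding path from $\S$ to $\T$ in fact reaches $\T'$ (Proposition~\ref{prop:compressed_frontier}). The distance never shrinks; what shrinks is the frontier set, which is exactly what restores the BFS size threshold and lets the search resume.

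Your complexity bookkeeping also misattributes the terms. The factor $\min(\sqrt{|\F|\,d_{S,T}\,n},\,|\F|)$ is not a bound on the number of rounds (rounds do not ``consume'' forbidden vertices); it is the per-call cost of the capped matching computation, $O(\min(\sqrt{|V_\G|},|\F|)\,|E_\G|)$ via Hopcroft--Karp~\cite{HK73} stopped after at most $|\F|+1$ augmentation phases, combined with the self-reduction of Algorithm~\ref{ALGO:self_reduction} that shrinks $|E_\G|$ to $O(|\F|^2 d_{S,T}\,n)$. The $d^3_{S,T}$ arises because the outer loop and the inner compression loop each iterate at most $d_{S,T}$ times (each iteration increases $\ell_\uparrow+\ell_\downarrow$ by at least one), with one more factor of $d_{S,T}$ hidden in $|E_\G|$. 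Finally, the additive term $|\F|^{5/2}n^{3/2}d_{S,T}$ is specifically the cost of the algorithmic Lehman--Ron theorem (Theorem~\ref{thm:algo_lehmanron}), i.e.\ recursively materializing the vertex-disjoint paths by one bipartite matching per level, invoked at most once, at the very end; the Decision-Task avoids it entirely by returning \texttt{YES} as soon as $|\M|>|\F|$ (Subsection~\ref{sect:remark_dec_search}). Without the containment-matching and K\"onig vertex-cover machinery, neither your positive certificate nor your compression step can be carried out, so the proposal as written does not yield a proof of the theorem.
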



\subsection{On Vertex-Disjoint Paths in Hypercube Graphs}\label{sect:VertexDisjointPaths}

The proof of Theorem~\ref{thm:main} relies on connectivity properties of hypercube graphs~\cite{GoldsLehmanRon2001}.
The next result, which proves the existence of a family of certain vertex-disjoint paths in $\H_n$
that are called \emph{Lehman-Ron paths}, will be particularly useful.

\begin{theorem}[Lehman, Ron~\cite{LehmanRon2001}]\label{thm:LehmanRon}
Given $n,m\in\N$, let $\R\subseteq\H^{(r)}_n$ and $\S\subseteq\H^{(s)}_n$ with $|\R|=|\S|=m$ and $0\leq r<s\leq n$.
Assume there exists a bijection $\varphi:\S\rightarrow\R$ such that $\varphi(S)\subset S$ for every $S\in \S$.
Then, there exist $m$ vertex-disjoint directed paths in $\H_n$ whose union contains all subsets in $\S$ and $\R$.
\end{theorem}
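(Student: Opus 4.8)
The plan is to recast the statement as a vertex-disjoint paths problem and attack it with Menger's theorem. Since every arc of $\H_n$ raises the level by exactly one, any directed path from a vertex of $\R$ (level $r$) to a vertex of $\S$ (level $s$) is a saturated chain meeting each level in $\{r,\dots,s\}$ exactly once; in particular it touches $\R$ only at its first vertex and $\S$ only at its last. Hence there can be at most $m$ pairwise vertex-disjoint $\R$--$\S$ paths, and any $m$ such paths use $m$ distinct first vertices (so all of $\R$) and $m$ distinct last vertices (so all of $\S$), whence their union contains $\S$ and $\R$ as required. It therefore suffices to show that every vertex set separating $\R$ from $\S$ in $\H_n$ has at least $m$ vertices: then Menger's theorem (equivalently, max-flow--min-cut on the network obtained by splitting each vertex into an in/out pair of unit capacity) yields the desired $m$ vertex-disjoint $\R$--$\S$ paths.

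To prove the separator bound I would induct on $d=s-r$. The base case $d=1$ is immediate, since the arcs $\varphi(S)\to S$ are $m$ length-one paths, vertex-disjoint because $\varphi$ is a bijection and the two levels differ. For the inductive step I would interpolate one intermediate layer: for each $S\in\S$ choose a vertex $R'_S$ with $\varphi(S)\subset R'_S\subset S$ and $|R'_S|=r+1$, \emph{with all the $R'_S$ distinct}. Setting $\R'=\{R'_S\mid S\in\S\}\subseteq\H_n^{(r+1)}$, the map $\varphi'\colon S\mapsto R'_S$ is a bijection $\S\to\R'$ with $\varphi'(S)\subset S$, so the induction hypothesis applied to $(\R',\S)$, now at distance $d-1$, yields $m$ vertex-disjoint paths covering $\R'\cup\S$; prepending to each the arc $\varphi(S)\to R'_S$ extends them down to $\R$ while preserving disjointness, since the newly added vertices are exactly the pairwise-distinct elements of $\R$ and sit one level below everything used so far.

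The crux---and the step I expect to be the real obstacle---is proving that the distinct representatives $R'_S$ exist, which is a Hall / system-of-distinct-representatives condition on the bipartite betweenness graph joining $S$ to every level-$(r+1)$ vertex $R'$ with $\varphi(S)\subset R'\subset S$. This cannot be finessed by a uniform choice: a single level-$(r+1)$ vertex can be interior to many of the intervals $[\varphi(S),S]$, so splitting each maximal chain evenly overloads such a vertex (one checks small explicit families where the uniform flow routes more than one unit through a central vertex). The hypotheses must be used in full: because $\varphi$ is a bijection with $\varphi(S)\subsetneq S$, the candidate set $A_S=\{R'\mid \varphi(S)\subset R'\subset S,\ |R'|=r+1\}$ satisfies $\bigcap A_S=\varphi(S)$ and $\bigcup A_S=S$, so both the bottoms and the tops of the intervals are pairwise distinct, and it is this two-sided distinctness that should force $|\bigcup_{S\in\S_0}A_S|\ge|\S_0|$ for every $\S_0\subseteq\S$. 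Verifying this inequality---equivalently, proving the min-separator bound directly---is the heart of the Lehman--Ron result; I would either argue it by an exchange/counting argument exploiting the distinct cores and tops, or, as a more robust alternative, establish the separator bound by a second induction on $n$ that branches on whether the element $n$ of the ground set belongs to the separating set. Essentially all the genuine difficulty is concentrated in this point.
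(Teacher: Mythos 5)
Your overall skeleton is sound and in fact mirrors the structure of Lehman and Ron's own inductive proof, which the paper does not reprove but cites and follows algorithmically in Subsection~\ref{sect:VertexDisjointPaths}: reduce to a vertex-disjoint-paths statement via Menger's theorem, induct on the distance $d=s-r$, interpolate one intermediate level, obtain a bijection one level closer, recurse, and extend the resulting paths by one arc. The base case $d=1$, the observation that $m$ disjoint paths must cover all of $\R\cup\S$, and the prepending step are all correct. The genuine gap is the step you yourself flag as the crux: the existence of distinct representatives $R'_S$ with $\varphi(S)\subset R'_S\subset S$, i.e., the Hall condition for your constrained bipartite system, equivalently a minimum-separator bound. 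This is not a deferrable detail --- it is the entire content of the theorem. The paper handles it by invoking Lehman--Ron's separator result (Proposition~\ref{prop:m_vertex_separator}: the auxiliary three-layer network has minimum $(\textbf{s},\textbf{t})$-vertex-separator of size $m$) and then applying Menger's theorem (Theorem~\ref{thm:menger}). You instead assert that ``two-sided distinctness'' should force the Hall inequality and sketch two possible strategies (an exchange/counting argument, or a second induction on $n$), but you carry out neither, so beyond $d=1$ nothing is actually proved.

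Moreover, your intermediate claim is strictly stronger than what Lehman--Ron establish, and its truth is genuinely in doubt. You demand that the representative of $S$ lie in the interval $[\varphi(S),S]$, i.e., that the new bijection refine the given $\varphi$ at the first level. Lehman--Ron's argument makes no such demand: their intermediate layer $\QQ$ consists of \emph{all} vertices of $\H^{(s-1)}_n$ lying on $\R$--$\S$ paths, the separator bound is proved for that three-layer network, and Menger's theorem is then free to output disjoint triples $(R_i,Q_i,S_i)$ whose induced pairing of $\R$ with $\S$ need not agree with $\varphi$ at all. This freedom matters: as the paper records, Lehman and Ron showed the theorem \emph{fails} if one insists that the disjoint chains correspond exactly to $\varphi$. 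Your requirement sits strictly between ``no compatibility with $\varphi$'' (what is actually proved) and ``full compatibility with $\varphi$'' (which is false), so it would require its own proof; it does not follow from the Lehman--Ron theorem, and simple counting cannot deliver it, since --- as you concede --- a single level-$(r+1)$ vertex can have up to $r+1>d$ of the cores $\varphi(S)$ beneath it. Thus the proposal concentrates all of the difficulty into one unproven (and possibly unprovable, as stated) combinatorial claim.
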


We call tuples $\langle \R, \S, \varphi, n\rangle$ that satisfy the hypotheses of \Cref{thm:LehmanRon} \emph{Lehman-Ron} tuples,
and we refer to the quantity $d=s-r$ as the \emph{distance} between $\R\subseteq\H^{(r)}_n$ and $\S\subseteq\H^{(s)}_n$.
\citet{LehmanRon2001} give an elementary inductive proof of \Cref{thm:LehmanRon};
also, they showed that \Cref{thm:LehmanRon} does not hold if one requires that the disjoint chains exactly correspond to the given bijection $\varphi$.
Anyway, a careful and in-depth analysis of their proof, from the algorithmic perspective, yields a polynomial time algorithm for computing all the Lehman-Ron paths.
\begin{theorem}\label{thm:algo_lehmanron}
There exists an algorithm for computing all the Lehman-Ron paths within time
$O\big( m^{5/2} n^{3/2} d \big)$ on any Lehman-Ron input $\langle \R, \S$, $\varphi, n \rangle$ with $|\R|=|\S|=m$,
where $d$ is the distance between $\R$ and $\S$ and $n$ is the size of the underlying ground set.
\end{theorem}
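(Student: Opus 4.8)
The plan is to convert the inductive existence proof of \Cref{thm:LehmanRon} into a constructive recursion, performing induction on the distance $d=s-r$ between $\R$ and $\S$ and replacing each existential ``a suitable choice exists'' step by an explicit bipartite matching. In the \emph{base case} $d=1$, every $S\in\S$ satisfies $|S\setminus\varphi(S)|=1$, so each pair $\{\varphi(S),S\}$ is an arc of $\H_n$; as $\varphi$ is a bijection these $m$ arcs are pairwise vertex-disjoint and already cover $\R\cup\S$, so we output them directly. For the \emph{inductive step} (for $d\ge 2$) we reduce the distance by one: for each $S\in\S$ we choose a down-neighbour $S'=S\setminus\{x_S\}$ at level $s-1$ with $x_S\notin\varphi(S)$ (possible since $\varphi(S)\subsetneq S$), subject to the sets $S'$ being pairwise distinct. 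Writing $\S'$ for the resulting level-$(s-1)$ family and setting $\varphi'(S')=\varphi(S)$ produces a valid Lehman-Ron tuple $\langle\R,\S',\varphi',n\rangle$ of distance $d-1$ (the containment $\varphi'(S')\subset S'$ is proper because $r<s-1$), on which we recurse to obtain $m$ vertex-disjoint paths covering $\R\cup\S'$.

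Correctness of the assembly step is then immediate: since $|\R|=|\S'|=m$, each recursively produced path terminates at a distinct vertex of $\S'$, and we extend it by the single arc $(S',S)$ up to the corresponding $S\in\S$. These extensions stay vertex-disjoint because every recursive path lives in levels $r,\dots,s-1$, the vertices of $\S$ are pairwise distinct, and the map $S\mapsto S'$ is injective; hence the extended paths remain vertex-disjoint and now cover $\R\cup\S$, as required.

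The only point that is not routine is guaranteeing, at \emph{every} level of the recursion, that a system of distinct down-neighbours actually exists, i.e.\ that the descent never gets stuck. By Hall's theorem this is exactly the condition that for each subfamily $\mathcal T\subseteq\S$ the admissible down-neighbours $\{\,S\setminus\{x\}\mid S\in\mathcal T,\ x\in S\setminus\varphi(S)\,\}$ number at least $|\mathcal T|$, and this Hall condition is precisely the combinatorial content of Lehman and Ron's inductive argument, which proceeds by the same one-level reduction; the rigidity of single-element deletions in the hypercube (two distinct level-$s$ sets cannot share two common level-$(s-1)$ down-neighbours) is what ultimately forces it to hold. I expect this --- certifying that the required matching is always present rather than merely computing it --- to be the main obstacle.

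For the running time, the recursion has exactly $d$ stages. Every family that we manipulate (namely $\R$ and the successive $\S,\S',\dots$) has size $m$, and the admissible down-neighbours examined at a stage number at most $m\,d\le mn$; consequently we never materialise the exponentially large graph $\H_n$ and work with only $O(mn)$ relevant vertices at a time, which is what makes a polynomial bound possible at all. Each stage is dominated by one bipartite matching on this auxiliary single-element-deletion graph, which we compute with the Hopcroft--Karp algorithm; bounding its vertex and edge counts, adding the $O(md)$ cost of storing and extending the $m$ paths, and summing over the $d$ stages yields the claimed bound $O\big(m^{5/2}n^{3/2}d\big)$.
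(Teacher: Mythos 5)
Your inductive step carries the input bijection down the recursion: each $S\in\S$ is sent to a down-neighbour $S'\supseteq\varphi(S)$ and the new tuple pairs $S'$ with $\varphi(S)$. Unwinding the recursion, every path you output therefore joins $\varphi(S)$ to $S$; that is, your algorithm constructs disjoint chains that \emph{exactly correspond to the given bijection} $\varphi$. This is precisely the strengthening of \Cref{thm:LehmanRon} that Lehman and Ron showed to be false (the paper recalls this right after \Cref{thm:LehmanRon}), so the Hall condition you appeal to is \emph{not} ``the combinatorial content of Lehman and Ron's inductive argument''---it simply fails in general, and the rigidity property you cite (two distinct level-$s$ sets share at most one down-neighbour) does not prevent this. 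A concrete counterexample, with $n=6$, $r=2$, $s=4$ (hence $d=2$, a single descent step) and $m=7$: take $S_1=\{1,2,3,4\}$, $S_2=\{1,2,4,5\}$, $S_3=\{2,3,4,5\}$, $S_4=\{1,2,3,5\}$, $S_5=\{1,2,4,6\}$, $S_6=\{1,3,4,6\}$, $S_7=\{1,2,3,6\}$, with $\varphi(S_1)=\{1,2\}$, $\varphi(S_2)=\{2,4\}$, $\varphi(S_3)=\{2,5\}$, $\varphi(S_4)=\{2,3\}$, $\varphi(S_5)=\{1,4\}$, $\varphi(S_6)=\{1,6\}$, $\varphi(S_7)=\{1,3\}$. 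This is a valid Lehman-Ron tuple, and each $S_i$ has exactly two admissible down-neighbours (since $x_{S_i}$ must avoid $\varphi(S_i)$), but the union of all these candidates consists of only six sets, namely $\{1,2,3\}$, $\{1,2,4\}$, $\{2,4,5\}$, $\{2,3,5\}$, $\{1,4,6\}$, $\{1,3,6\}$. Seven sets competing for six distinct representatives: no injective choice exists, so your descent is stuck at the very first level, even though the seven vertex-disjoint paths promised by \Cref{thm:LehmanRon} do exist---they just pair $\R$ with $\S$ differently from $\varphi$.

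The missing idea, which is the heart of both Lehman and Ron's proof and the paper's algorithm, is that the pairing must be allowed to change at every level. The paper's recursion also lowers the top family by one level, but it does so by taking the family $\QQ$ of \emph{all} level-$(s-1)$ vertices lying between some member of $\R$ and some member of $\S$ (with no reference to $\varphi$), building the three-layer network $\K$ on $\{\textbf{s}\}\cup\R\cup\QQ\cup\S\cup\{\textbf{t}\}$, and invoking \Cref{prop:m_vertex_separator} (the minimum $(\textbf{s},\textbf{t})$-vertex-separator of $\K$ has size $m$) together with Menger's theorem to obtain $m$ internally vertex-disjoint $(\textbf{s},\textbf{t})$-paths; these are computed as a perfect matching, via Hopcroft-Karp, in the vertex-split bipartite graph $\K'$. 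The matching induces a subfamily $\QQ'\subseteq\QQ$ and a \emph{new} bijection $(\varphi')^{-1}:\QQ'\rightarrow\R$, and the recursion proceeds on that re-wired tuple, discarding the old pairing. This re-pairing is exactly what your proposal lacks, and it is also where the per-level cost $O(m^{5/2}n^{3/2})$ in \Cref{thm:algo_lehmanron} comes from: the matching is computed on a graph with $O(mn)$ vertices and $O(m^{2}n)$ edges, not on the small $\varphi$-compatible down-neighbour graph your scheme would use.
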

Now we provide all the details of the algorithm sketched above as well as a proof of the time complexity stated in \Cref{thm:algo_lehmanron},
in which Menger's vertex-connectivity theorem~\cite{Diestel2005} and Hopcroft-Karp's
algorithm~\cite{HK73} for maximum cardinality matching in \emph{undirected} bipartite graphs play a major~role.

As mentioned, our proof of \Cref{thm:main} relies on certain connectivity properties of hypercube graphs,
and in particular the existence of a family of certain vertex-disjoint paths in $\H_n$ that we call \emph{``Lehman-Ron paths''},
which is guaranteed by \Cref{thm:LehmanRon}.

Although Theorem~\ref{thm:LehmanRon} was initially proved and applied in the specific area of
testing monotonicity~\cite{GoldsLehmanRon2001}, it is of independent interest and related results
could be useful in the context of packet routing on the hypercube network.
Lehman and Ron provided an elegant inductive proof of that result~\cite{LehmanRon2001}.
In the present work, we point out that a careful analysis of their proof allows us to ``extract'' a simple recursive algorithm
for computing all Lehman-Ron paths in polynomial time. We now describe that algorithm,
whose correctness follows from the arguments used by Lehman and Ron in their original proof
of Theorem~\ref{thm:LehmanRon} (see~\cite{LehmanRon2001} for more details).
Its time complexity can be derived by taking into account Hopcroft-Karp's algorithm
for computing maximum cardinality matchings in bipartite graphs~\cite{HK73}, and is analyzed in detail at the end of this section.

The algorithm that we are going to describe is named \texttt{compute\_Lehman-Ron\_paths()}.
The intuition underlying it is simply to follow the structure of Lehman and Ron's proof of
Theorem~\ref{thm:LehmanRon} and to analyze it from the algorithmic standpoint.
Its pseudocode is given in \Cref{ALGO:vertex-disjoint-paths}. 

\begin{algorithm}[t]
\caption{computing Lehman-Ron's paths.}\label{ALGO:vertex-disjoint-paths}
\DontPrintSemicolon
\nonl \SetKwProg{Fn}{Procedure}{}{}
\Fn{$\texttt{compute\_Lehman-Ron\_paths}(\R, \S, \varphi, n)$}{
\KwIn{a Lehman-Ron tuple $\langle \R, \S, \varphi, n \rangle$.}
\KwOut{a family of $m$ vertex-disjoint directed paths $\p_1, \ldots, \p_m$ in $\H_n$
       such that $\R\cup\S \subseteq \bigcup_{i=1}^m \p_i$.}
\If{$s=r+1$}{  \label{algo:vdp:l1}
	 \Return{$\texttt{compute\_paths\_from\_bijection}(\S, \varphi, n)$}; \label{algo:vdp:l2}
}
$m\leftarrow |\S|$; \hfill\tcp{assume $|\S|=|\R|$}  \label{algo:vdp:l3}
$\QQ\leftarrow\texttt{compute\_\QQ}(\S)$;  \label{algo:vdp:l4}

$\K\leftarrow\texttt{compute\_auxiliary\_network}(\R, \QQ, \S) $; \label{algo:vdp:l5} \;
$\langle\p'_1, \p'_2, \ldots, \p'_m\rangle\leftarrow\texttt{compute\_vertex\_disjoint\_paths}(\K)$; \label{algo:vdp:l6} \;
$\langle \QQ', \varphi', \varphi'' \rangle\leftarrow
	\texttt{compute\_auxiliary\_bjcts}(\langle\p'_1, \p'_2, \ldots, \p'_m\rangle, m)$ \label{algo:vdp:l7} \;
$\langle\p''_1, \p''_2, \ldots, \p''_m\rangle\leftarrow
			\texttt{compute\_Lehman-Ron\_paths}(\R, \QQ', (\varphi')^{-1}, n)$; \label{algo:vdp:l8}\;
$\langle\p_1, \p_2, \ldots, \p_m \rangle\leftarrow
		\texttt{extend\_paths}(\langle p''_1, \p''_2, \ldots, p''_m \rangle, \QQ', \varphi'', m)$; \label{algo:vdp:l9} \;
\Return{$\langle \p_1, \p_2, \ldots, \p_m\rangle$}; \label{algo:vdp:l10}
}
\end{algorithm}

The algorithm takes as input a Lehman-Ron tuple $\langle \R, \S, \varphi, n\rangle$, and outputs a family $\p_1, \p_2, \ldots, \p_m$ of Lehman-Ron paths joining $\R$ to $\S$.
Recall that Lehman-Ron tuples satisfy the following properties:
\begin{enumerate}
    \item the families of sets $\R\subseteq \H_n^{(r)}$
and $\S\subseteq \H_n^{(s)}$ are such that $|\S|=|\R|=m$,
    \item $r, s$ and $n\in \N$ are such that $0\leq r < s \leq n$, and
    \item $\varphi:\mathcal{\S}\rightarrow\mathcal{\R}$ is a bijection
such that $\forall\ S\in \mathcal{\S}: \varphi(S)\subset S$.
\end{enumerate}

As a base case of the algorithm, if $s=r+1$ (line~\ref{algo:vdp:l1}),
then the sought family of directed paths
$\p_1, \p_2, \ldots, \p_m$ is simply a set of $m$ pairwise vertex-disjoint arcs oriented from $\S$ to $\R$, which are already given by the input bijection $\varphi$ (line~\ref{algo:vdp:l2}).

We now focus on the general case $s>r+1$. To begin with, we introduce the following proposition, which was already implicit in~\cite{LehmanRon2001}
and which is actually a straightforward consequence of Theorem~\ref{thm:LehmanRon}.

\begin{proposition}\label{prop:compute_Q}
\cite{LehmanRon2001} Given $n,m\in\N$, consider two families of sets $\R\subseteq\H^{(r)}_n$ and $\S\subseteq\H^{(s)}_n$
where $|\R|=|\S|=m$ and $0\leq r<s\leq n$.
Let $\QQ$ (resp. $\P$) be the set of vertices in $\H^{(s-1)}_n$ (resp. $\H^{(r+1)}_n$) that lie on any directed path from
some vertex in $\R$ to some vertex in $\S$.

Then, $|\QQ|\geq m$ and $|\P|\geq m$.
\end{proposition}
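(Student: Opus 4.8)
The plan is to read this as a near-immediate corollary of \Cref{thm:LehmanRon}, as the surrounding discussion suggests. I treat the data as a genuine Lehman-Ron tuple $\langle\R,\S,\varphi,n\rangle$, so that the bijection hypothesis of \Cref{thm:LehmanRon} is at hand; this is essential, since without a bijection $\varphi$ with $\varphi(S)\subset S$ the statement already fails when $s=r+1$ (take some $R\in\R$ lying below no member of $\S$, so that $R\notin\QQ$). \Cref{thm:LehmanRon} then produces $m$ pairwise vertex-disjoint directed paths $\p_1,\dots,\p_m$ in $\H_n$ whose union contains every vertex of $\R\cup\S$.

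The one structural fact I would exploit is that every arc $(U,V)$ of $\H_n$ has $|V|=|U|+1$, so the level strictly increases by one along any directed path; hence each $\p_i$ meets a given level $\H_n^{(j)}$ in at most one vertex. In particular each path contains at most one member of $\R$ (all at level $r$) and at most one of $\S$ (all at level $s$). Since the $m$ vertex-disjoint paths jointly cover all $m$ vertices of $\S$, a counting argument forces each $\p_i$ to contain exactly one $S_i\in\S$; the same reasoning gives exactly one $R_i\in\R$ per path. Because $r<s$ and levels increase along $\p_i$, the vertex $R_i$ precedes $S_i$, so the subpath of $\p_i$ from $R_i$ to $S_i$ is a directed $\R$-to-$\S$ path passing through exactly one vertex at each of the levels $r,r+1,\dots,s$.

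To finish I would extract, from this subpath, its unique vertex $q_i$ at level $s-1$ and its unique vertex $p_i$ at level $r+1$. By construction each $q_i$ lies on a directed path from some vertex of $\R$ to some vertex of $\S$, so $q_i\in\QQ$, and symmetrically $p_i\in\P$. The vertex-disjointness of $\p_1,\dots,\p_m$ makes the $q_i$ pairwise distinct, and likewise the $p_i$, whence $|\QQ|\ge m$ and $|\P|\ge m$.

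The main obstacle, and really the only step with any content, is the counting argument that each Lehman-Ron path carries exactly one element of $\R$ and exactly one of $\S$: this is what guarantees that the extracted level-$(s-1)$ and level-$(r+1)$ vertices actually sit on $\R$-to-$\S$ paths and thus belong to $\QQ$ and $\P$. It rests only on level-monotonicity of directed hypercube paths together with vertex-disjointness, so I expect it to be short; the subtle point to state carefully is the tacit use of the bijection $\varphi$, which is what makes \Cref{thm:LehmanRon} applicable in the first place.
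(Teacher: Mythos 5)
Your proof is correct and follows exactly the route the paper intends: the paper states this proposition without an explicit proof, asserting it is a ``straightforward consequence'' of \Cref{thm:LehmanRon}, and your argument is precisely that derivation carried out in detail (vertex-disjoint Lehman-Ron paths, strict level-monotonicity of directed paths in $\H_n$, a pigeonhole argument giving exactly one vertex of $\R$ and of $\S$ per path, and extraction of the level-$(s-1)$ and level-$(r+1)$ vertices). Your remark that the bijection $\varphi$ must tacitly be assumed---the statement as written omits it, yet fails without it---is a correct and worthwhile sharpening of the paper's formulation, consistent with how the proposition is actually applied (only to Lehman-Ron tuples) inside the algorithm.
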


\begin{figure}[!htb]
    \centering
    \begin{tikzpicture}[scale=.9,transform shape,>=stealth]
      \node[circle, draw] (s1) {$s_1$};
      \node[circle, draw, right = of s1] (s2) {$s_2$};
      \node[circle, draw, right = of s2] (s3) {$s_3$};
      \node[circle, draw, right = of s3] (s4) {$s_4$};
      \node[circle, draw, right = of s4] (s5) {$s_5$};

      \node[circle, draw, below = of s1] (q1) {$q_1$};
      \node[circle, draw, right = of q1] (q2) {$q_2$};
      \node[circle, draw, right = of q2] (q3) {$q_3$};
      \node[circle, draw, right = of q3] (q4) {$q_4$};
      \node[circle, draw, right = of q4] (q5) {$q_5$};

      \node[circle, draw, below = of q1] (r1) {$r_1$};
      \node[circle, draw, right = of r1] (r2) {$r_2$};
      \node[circle, draw, right = of r2] (r3) {$r_3$};
      \node[circle, draw, right = of r3] (r4) {$r_4$};
      \node[circle, draw, right = of r4] (r5) {$r_5$};

      \node[circle, draw, above = of s3] (trg) {$\bf t$};
      \node[circle, draw, below = of r3] (src) {$\bf s$};

      \node[left = of s1, xshift=3ex] (Slbl) {$\S$};
      \node[left = of q1, xshift=3ex] (Qlbl) {$\QQ$};
      \node[left = of r1, xshift=3ex] (Rlbl) {$\R$};

      \node[above = of Slbl, xshift=7ex, yshift=-2ex] (mlbl) {$m=5$};

    \node[circle, draw, above = of Slbl, xshift=4.7ex, yshift=2ex] (varphiL) {};
    \node[circle, draw, above = of Slbl, xshift=9.3ex, yshift=2ex] (varphiR) {};
    \draw (varphiL) edge [->, dotted, thick] node [above] {$\varphi$} (varphiR);

    \draw[dashed, ultra thin, rounded corners=15pt] (-1,2.25) rectangle (1,.75); 
    \draw[dashed, ultra thin, rounded corners=15pt] (-1.5,.55) rectangle (7.5,-.65); 
    \draw[dashed, ultra thin, rounded corners=15pt] (-1.5,-1.15) rectangle (7.65,-2.40); 
    \draw[dashed, ultra thin, rounded corners=15pt] (-1.5,-2.8) rectangle (7.5,-4); 

      \draw[thick] (src) edge [->] (r1);
      \draw[thick] (src) edge [->] (r2);
      \draw[thick] (src) edge [->] (r3);
      \draw[thick] (src) edge [->] (r4);
      \draw[thick] (src) edge [->] (r5);

      \draw[thick] (s1) edge [->] (trg);
      \draw[thick] (s2) edge [->] (trg);
      \draw[thick] (s3) edge [->] (trg);
      \draw[thick] (s4) edge [->] (trg);
      \draw[thick] (s5) edge [->] (trg);

      \draw[thick] (r1) edge [->] (q1); \draw[thick] (r1) edge [->] (q4);
      \draw[thick] (r2) edge [->] (q1); \draw[thick] (r2) edge [->] (q2);
      \draw[thick] (r3) edge [->] (q2); \draw[thick] (r3) edge [->] (q3);
      \draw[thick] (r4) edge [->] (q3); \draw[thick] (r4) edge [->] (q5);
      \draw[thick] (r5) edge [->] (q4); \draw[thick] (r5) edge [->] (q5);

      \draw[thick] (q1) edge [->] (s1); \draw[thick] (q1) edge [->] (s4);
      \draw[thick] (q2) edge [->] (s1); \draw[thick] (q2) edge [->] (s2);
      \draw[thick] (q3) edge [->] (s2); \draw[thick] (q3) edge [->] (s3);
      \draw[thick] (q4) edge [->] (s3); \draw[thick] (q4) edge [->] (s5);
      \draw[thick] (q5) edge [->] (s4); \draw[thick] (q5) edge [->] (s5);

      \draw (s1) edge [->, bend left=35, thick, dotted] node [above, xshift=1ex, yshift=-2ex] {\footnotesize $\varphi$} (r1);
      \draw (s2) edge [->, bend left=35, thick, dotted] node [above, xshift=1ex, yshift=-2ex] {\footnotesize $\varphi$} (r2);
      \draw (s3) edge [->, bend left=35, thick, dotted] node [above, xshift=1ex, yshift=-2ex] {\footnotesize $\varphi$} (r3);
      \draw (s4) edge [->, bend left=35, thick, dotted] node [above, xshift=1ex, yshift=-2ex] {\footnotesize $\varphi$} (r4);
      \draw (s5) edge [->, bend left=35, thick, dotted] node [above, xshift=-1ex, yshift=-2ex] {\footnotesize $\varphi$} (r5);
    \end{tikzpicture}
    \caption{The auxiliary network $\K=(V_{\K}, A_{\K})$ and bijection $\varphi$.}
\label{fig:auxiliary_network}
\end{figure}
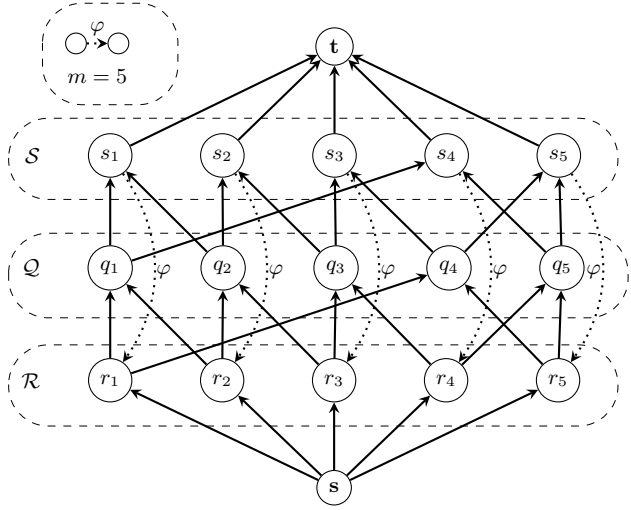

The algorithm first computes the set $\QQ$ of all vertices in $\H^{(s-1)}_n$
that lie on any directed path from some vertex in $\R$ to some vertex in $\S$.
This step is encoded by \texttt{compute\_\QQ()} (line~\ref{algo:vdp:l4}).
The algorithm then invokes (at line~\ref{algo:vdp:l5}) a procedure called \texttt{compute\_auxiliary\_network()},
which constructs
a directed auxiliary network
$\K=( V_\K, A_\K )$ 
which will be useful in the following steps and is defined by:
\begin{itemize}
    \item $V_{\K} = \{\textbf{s},\textbf{t}\}\cup \R\cup\QQ\cup\S$, where $\textbf{s}$ (resp. $\textbf{t}$) is an auxiliary source (resp. target) vertex, \ie $\{\textbf{s}, \textbf{t}\} \cap (\R\cup\QQ\cup\S) = \emptyset$;
    \item $A_{\K}$ is defined as follows:
\begin{itemize}
    \item the source vertex $\textbf{s}$ is joined to every vertex in $\R$;
    \item for each $R\in\R$ and $Q\in\QQ$, $R$ is joined to $Q$ if and only if $R\subset Q$;
    \item similarly, for each $Q\in\QQ$ and $S\in\S$, $Q$ is joined to $S$ if and only if $Q\subset S$;
    \item finally, every vertex in $\S$ is joined to $\textbf{t}$.
\end{itemize}
\end{itemize}

\figref{fig:auxiliary_network} shows an example of an auxiliary network.

We remark that, as shown in \cite{LehmanRon2001}, the following proposition holds on $\K$.

\begin{proposition}\label{prop:m_vertex_separator}
\cite{LehmanRon2001} The minimum $(\textbf{s}, \textbf{t})$-vertex-separator of $\K$ has size $m$.
\end{proposition}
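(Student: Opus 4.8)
The plan is to invoke Menger's vertex-connectivity theorem~\cite{Diestel2005}, which equates the minimum size of an $(\textbf{s},\textbf{t})$-vertex-separator of $\K$ with the maximum number of internally vertex-disjoint $(\textbf{s},\textbf{t})$-paths in $\K$. It then suffices to establish a matching upper and lower bound separately.

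For the upper bound, I would simply observe that $\R$ itself is an $(\textbf{s},\textbf{t})$-vertex-separator of size $m$. By construction the only out-neighbours of $\textbf{s}$ lie in $\R$, so after deleting the $m$ internal vertices of $\R$ the source $\textbf{s}$ is left with no outgoing arc and no $(\textbf{s},\textbf{t})$-path can survive. (Symmetrically, $\S$ is a separator of the same size.) Hence the minimum separator has size at most $m$.

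The core of the argument is the lower bound, and here I would exploit Theorem~\ref{thm:LehmanRon} directly, which is where the bijection hypothesis on $\langle \R, \S, \varphi, n\rangle$ is actually consumed. Applied to this Lehman-Ron tuple, the theorem yields $m$ pairwise vertex-disjoint directed paths in $\H_n$ whose union contains every vertex of $\R$ and of $\S$. Since these $m$ paths are vertex-disjoint increasing chains and $|\R|=|\S|=m$, a counting argument forces each path to contain exactly one vertex $R\in\R$ (at level $r$) and exactly one vertex $S\in\S$ (at level $s$), with $R$ preceding $S$ along the chain since $r<s$. Because a directed arc of $\H_n$ raises the level by exactly one, the sub-chain from $R$ to $S$ passes through a unique vertex $Q$ at level $s-1$; as this sub-chain is a directed $R$-to-$S$ path, the definition of $\QQ$ gives $Q\in\QQ$. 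Vertex-disjointness of the original paths then guarantees that the resulting triples $(R,Q,S)$ use distinct vertices in each of the three layers $\R$, $\QQ$, $\S$.

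Finally, each such triple $R\subset Q\subset S$ lifts to an $(\textbf{s},\textbf{t})$-path $\textbf{s}\to R\to Q\to S\to\textbf{t}$ in $\K$, and these $m$ paths are internally vertex-disjoint. By Menger's theorem the minimum separator therefore has size at least $m$, which combined with the upper bound yields the claimed value of exactly $m$. The one delicate point I anticipate is justifying that each Lehman-Ron path meets level $s-1$ in precisely one vertex that genuinely belongs to $\QQ$; this follows once one restricts attention to the monotone sub-chain joining the path's unique $\R$- and $\S$-endpoints, discarding any portion of the path that may descend below level $r$ or rise above level $s$.
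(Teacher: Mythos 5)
Your proof is correct, but it takes a route the paper does not: the paper offers no proof of this proposition at all, stating it with the remark ``as shown in \cite{LehmanRon2001}'' and treating it, like \Cref{thm:LehmanRon}, as an imported result. What you do instead is derive the proposition as a corollary of \Cref{thm:LehmanRon} itself: the $m$ vertex-disjoint Lehman-Ron paths each pick up exactly one vertex of $\R$, one vertex of $\QQ$, and one vertex of $\S$, and these triples lift to $m$ internally-vertex-disjoint $(\textbf{s},\textbf{t})$-paths in $\K$, while $\R$ itself is a separator of size $m$. This is sound within the paper's framework (where \Cref{thm:LehmanRon} is a black box), and it in fact mirrors how the paper itself justifies \Cref{prop:compute_Q}, which is explicitly called ``a straightforward consequence of Theorem~\ref{thm:LehmanRon}.'' Two remarks. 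First, the logical order is inverted relative to the source: in \cite{LehmanRon2001} the separator bound is a lemma used to prove \Cref{thm:LehmanRon} by induction, so your derivation could not be used to make that original proof (or the paper's extraction of it into \Cref{ALGO:vertex-disjoint-paths}) self-contained; it is only legitimate because the theorem's truth is assumed from outside. Second, two of your steps are heavier than needed: the ``delicate point'' you flag is automatic, since every arc of $\H_n$ increases cardinality by exactly one, so a directed path meets each level at most once and can never descend below level $r$ or return after passing level $s$; and the lower bound does not need Menger at all --- if $m$ internally-vertex-disjoint $(\textbf{s},\textbf{t})$-paths exist, any separator must contain a distinct vertex from each of them. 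Menger (\Cref{thm:menger}) is what the paper needs for the \emph{converse} use of this proposition, namely to extract $m$ disjoint paths from the separator bound.
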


As
a corollary,
and by applying Menger's vertex-connectivity theorem (which is recalled
below),
the existence of $m$ internally-vertex-disjoint directed $(\textbf{s}, \textbf{t})$-paths,
denoted $\p'_1, \p'_2, \ldots, \p'_m$, is thus
guaranteed.

\begin{theorem}[Menger~\cite{Diestel2005}]\label{thm:menger}
Let
$G=( V, A )$
be a directed graph, and let $u$ and $v$ be nonadjacent vertices in $V$.
Then the maximum number of internally-vertex-disjoint directed $(u,v)$-paths in $G$ equals the minimum number of vertices
from $V\setminus\{u,v\}$ whose deletion destroys all directed $(u,v)$-paths in $G$.
\end{theorem}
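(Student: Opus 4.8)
The plan is to derive the vertex version of Menger's theorem from the max-flow min-cut theorem via the standard vertex-splitting construction, which fits naturally with the flow- and matching-based machinery used elsewhere in this work. The easy inequality is immediate: if $X\subseteq V\setminus\{u,v\}$ destroys every directed $(u,v)$-path, then each such path contains a vertex of $X$, necessarily an \emph{internal} vertex of that path; since internally-vertex-disjoint paths share no internal vertices, distinct paths meet $X$ in distinct vertices, so their number is at most $|X|$, and in particular at most the size of a minimum separator. All the substance therefore lies in the reverse inequality.

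For the reverse inequality, I would build an auxiliary flow network $G'$ by splitting vertices. Replace every $w\in V\setminus\{u,v\}$ by an in-copy $w^{-}$ and an out-copy $w^{+}$, joined by an \emph{internal arc} $(w^{-},w^{+})$ of capacity $1$; keep $u$ and $v$ as single vertices, retaining only the out-side of $u$ and the in-side of $v$. Each original arc $(x,y)\in A$ becomes an arc from the out-copy of $x$ to the in-copy of $y$, assigned capacity $+\infty$. Finally, designate $u$ as source and $v$ as sink.

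I would then establish the two correspondences that drive the reduction. On the primal side, an integral $(u,v)$-flow of value $k$ decomposes into $k$ arc-disjoint $(u,v)$-paths in $G'$; since each internal arc has capacity $1$, no pair $w^{-},w^{+}$ is traversed twice, so the images of these paths in $G$ are internally-vertex-disjoint directed $(u,v)$-paths, and conversely $k$ such paths in $G$ yield a feasible integral flow of value $k$. On the dual side, I would argue that a minimum cut may be taken to consist solely of internal unit-capacity arcs: because $u$ and $v$ are nonadjacent, no finite cut is forced to include an $+\infty$-arc, so a minimum cut uses only internal arcs, and the set of these arcs corresponds exactly to a vertex set $X\subseteq V\setminus\{u,v\}$ whose removal destroys every directed $(u,v)$-path in $G$, with $|X|$ equal to the cut capacity. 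Applying the max-flow min-cut theorem together with the integrality of maximum flows on integer-capacity networks to $G'$ then equates the maximum number of internally-vertex-disjoint $(u,v)$-paths in $G$ with the minimum size of a vertex separator, as required. (A purely combinatorial alternative, by induction on $|A|$ in the style of Lehman and Ron, is also available and yields the same statement without invoking flows.)

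The main obstacle I expect is the careful verification of the dual correspondence. One must check that a minimum cut never needs an infinite-capacity arc—this is precisely where the nonadjacency of $u$ and $v$ is used, ruling out a direct $(u,v)$-arc that could not be severed by removing internal copies—and that the internal arcs of a minimum cut map back to a genuine $(u,v)$-separator in $G$ of exactly the same size, neither larger nor smaller. The primal direction, by comparison, is routine once integral flow decomposition and the unit internal capacities are in place.
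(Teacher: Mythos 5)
Your proof is correct, but there is nothing in the paper to compare it against: this statement is Menger's vertex-connectivity theorem, which the paper quotes as a classical result with a citation to Diestel's textbook and uses as a black box (combined with Proposition~\ref{prop:m_vertex_separator}, it certifies the existence of the $m$ internally-vertex-disjoint $(\mathbf{s},\mathbf{t})$-paths in the auxiliary network $\K$). Your flow-based derivation via vertex splitting is one of the standard proofs, and it is sound: the easy inequality is argued correctly; the unit capacities on the internal arcs $(w^{-},w^{+})$ do force an integral maximum flow to decompose into paths whose images in $G$ are internally vertex-disjoint (any two flow paths sharing an infinite-capacity arc would have to share the capacity-one internal arc at one of its endpoints); and the nonadjacency of $u$ and $v$ is invoked exactly where it is needed, namely to guarantee that the set of all internal arcs is itself a cut of finite capacity, so that a minimum cut consists of internal arcs only and projects back to a vertex separator of the same cardinality. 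By contrast, the textbook proof that the citation points to is a purely combinatorial induction, so your route trades that elementary argument for max-flow min-cut plus integrality of flows on integer-capacity networks. In the context of this paper your viewpoint is arguably the more congenial one: the algorithmic exploitation of the theorem in Subsection~\ref{sect:VertexDisjointPaths} computes the disjoint paths via maximum bipartite matching (Hopcroft--Karp) on the graph $\K'$, whose twin copies $\QQ^{(\textbf{in})}$ and $\QQ^{(\textbf{out})}$ are precisely the vertex-splitting construction on which your proof rests.
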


\subsubsection{How to compute $\p'_1, \p'_2, \ldots, \p'_m$}

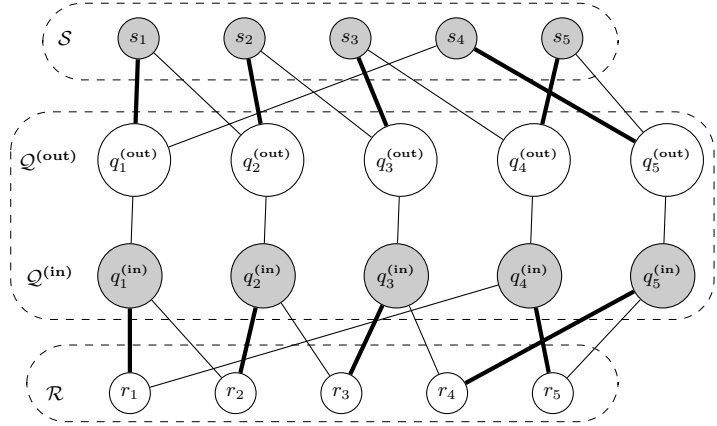
\begin{figure}[!htb]
    \centering
    \begin{tikzpicture}[scale=.85,transform shape]
      \node[circle, fill=black!20, draw] (s1) {$s_1$};
      \node[circle, fill=black!20, draw, right = of s1] (s2) {$s_2$};
      \node[circle, fill=black!20, draw, right = of s2] (s3) {$s_3$};
      \node[circle, fill=black!20, draw, right = of s3] (s4) {$s_4$};
      \node[circle, fill=black!20, draw, right = of s4] (s5) {$s_5$};

      \node[circle, draw, below = of s1, xshift=-.5ex] (q1_out) {$q^{\textbf{\tiny (out)}}_1$};
      \node[circle, draw, right = of q1_out, xshift=-.5ex] (q2_out) {$q^{\textbf{\tiny (out)}}_2$};
      \node[circle, draw, right = of q2_out, xshift=-.5ex] (q3_out) {$q^{\textbf{\tiny (out)}}_3$};
      \node[circle, draw, right = of q3_out, xshift=-.5ex] (q4_out) {$q^{\textbf{\tiny (out)}}_4$};
      \node[circle, draw, right = of q4_out, xshift=-.5ex] (q5_out) {$q^{\textbf{\tiny (out)}}_5$};

      \node[circle, draw, fill=black!20, below = of q1_out, xshift=-.5ex, yshift=2ex] (q1_in) {$q^{\textbf{\tiny (in)}}_1$};
      \node[circle, draw, fill=black!20, below = of q2_out, xshift=-.5ex, yshift=2ex] (q2_in) {$q^{\textbf{\tiny (in)}}_2$};
      \node[circle, draw, fill=black!20, below = of q3_out, xshift=-.5ex, yshift=2ex] (q3_in) {$q^{\textbf{\tiny (in)}}_3$};
      \node[circle, draw, fill=black!20, below = of q4_out, xshift=-.5ex, yshift=2ex] (q4_in) {$q^{\textbf{\tiny (in)}}_4$};
      \node[circle, draw, fill=black!20, below = of q5_out, xshift=-.5ex, yshift=2ex] (q5_in) {$q^{\textbf{\tiny (in)}}_5$};

      \node[circle, draw, below = of q1_in] (r1) {$r_1$};
      \node[circle, draw, right = of r1] (r2) {$r_2$};
      \node[circle, draw, right = of r2] (r3) {$r_3$};
      \node[circle, draw, right = of r3] (r4) {$r_4$};
      \node[circle, draw, right = of r4] (r5) {$r_5$};

      \node[left = of s1, xshift=3ex] (Slbl) {$\S$};
      \node[left = of q1_out, xshift=6ex] (Qlbl) {$\QQ^{\textbf{(out)}}$};
      \node[left = of q1_in, xshift=5.5ex] (Qlbl) {$\QQ^{\textbf{(in)}}$};
      \node[left = of r1, xshift=3ex] (Rlbl) {$\R$};

      \draw[dashed, ultra thin, rounded corners=15pt] (-1.5,.55) rectangle (7.5,-.65); 
      \draw[dashed, ultra thin, rounded corners=15pt] (-2,-1.15) rectangle (9,-4.4); 
      \draw[dashed, ultra thin, rounded corners=15pt] (-1.75,-4.8) rectangle (7.5,-6); 

      \draw[ultra thick] (r1) edge [] (q1_in); \draw[] (r1) edge [] (q4_in);
      \draw[] (r2) edge [] (q1_in); \draw[ultra thick] (r2) edge [] (q2_in);
      \draw[] (r3) edge [] (q2_in); \draw[ultra thick] (r3) edge [] (q3_in);
      \draw[] (r4) edge [] (q3_in); \draw[ultra thick] (r4) edge [] (q5_in);
      \draw[ultra thick] (r5) edge [] (q4_in); \draw[] (r5) edge [] (q5_in);

      \draw[] (q1_in) edge [] (q1_out);
      \draw[] (q2_in) edge [] (q2_out);
      \draw[] (q3_in) edge [] (q3_out);
      \draw[] (q4_in) edge [] (q4_out);
      \draw[] (q5_in) edge [] (q5_out);

      \draw[ultra thick] (q1_out) edge [] (s1); \draw[] (q1_out) edge [] (s4);
      \draw[] (q2_out) edge [] (s1); \draw[ultra thick] (q2_out) edge [] (s2);
      \draw[] (q3_out) edge [] (s2); \draw[ultra thick] (q3_out) edge [] (s3);
      \draw[] (q4_out) edge [] (s3); \draw[ultra thick] (q4_out) edge [] (s5);
      \draw[ultra thick] (q5_out) edge [] (s4); \draw[] (q5_out) edge [] (s5);

    \end{tikzpicture}
    \caption{The undirected bipartite graph $\K'=(V_{\K'}, E_{\K'})$ and a perfect matching $\M$ (thick edges).}
\label{fig:undirected_bipartite}
\end{figure}

We argue that it is possible to compute efficiently the family of directed paths $\p'_1, \p'_2, \ldots, \p'_m$ in $\K$
by finding a maximum cardinality matching in an auxiliary, \emph{undirected} bipartite graph $\K'$.
This reduction is performed by \texttt{compute\_vertex\_disjoint\_paths()} at line~\ref{algo:vdp:l6}.
The undirected graph $\K'=(V_{\K'}, E_{\K'})$ is obtained from the directed graph $\K$ as follows:
first, the set family $\QQ$ gets split into two (disjoint) twin set families $\QQ^{(\textbf{in})}$ and $\QQ^{(\textbf{out})}$,
\ie $\QQ^{(\textbf{in})}=\{ Q^{(\textbf{in})} \mid Q \in \QQ \}$ and $\QQ^{(\textbf{out})}=\{ Q^{(\textbf{out})} \mid Q \in \QQ \}$
where $\QQ^{(\textbf{in})}\cap \QQ^{(\textbf{out})} = \emptyset$ and $|\QQ^{(\textbf{in})}|=|\QQ^{(\textbf{out})}|=|\QQ|$.
Thus, the vertex set of $\K'$ is: \[V_{\K'} = \R \cup \QQ^{(\textbf{in})} \cup \QQ^{(\textbf{out})} \cup \S.\]
The
edge set $E_{\K'}$ is obtained as follows:
\begin{itemize}
    \item for each $R\in\R$ and $Q\in\QQ$, $R$ is joined to $Q^{(\textbf{in})}$ if and only if $R\subset Q$;
\item similarly, for each $Q\in\QQ$ and $S\in\S$, $Q^{(\textbf{out})}$ is joined to $S$ if and only if $Q\subset S$;
\item finally, $Q^{(\textbf{in})}$ is joined to $Q^{(\textbf{out})}$ for every $Q\in\QQ$.
\end{itemize}

\figref{fig:auxiliary_network} shows an example of $\K'$.
The next proposition derives some useful properties of $\K'$.

\begin{proposition}\label{prop:K_prime}
The graph $\K'=(V_{\K'}, E_{\K'})$, as defined above, is bipartite and it admits a perfect matching.
\end{proposition}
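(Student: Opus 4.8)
The plan is to treat the two assertions separately: bipartiteness is immediate from the layered structure of $\K'$, while the perfect matching will be read off from the $m$ internally-vertex-disjoint $(\textbf{s},\textbf{t})$-paths that Proposition~\ref{prop:m_vertex_separator} and Menger's theorem (Theorem~\ref{thm:menger}) guarantee in $\K$.

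For bipartiteness, I would note that $V_{\K'}$ splits into the four consecutive layers $\R$, $\QQ^{(\textbf{in})}$, $\QQ^{(\textbf{out})}$, $\S$, and that by construction every edge joins two \emph{consecutive} layers (an $\R$-to-$\QQ^{(\textbf{in})}$ edge, a twin edge $\QQ^{(\textbf{in})}$-to-$\QQ^{(\textbf{out})}$, or a $\QQ^{(\textbf{out})}$-to-$\S$ edge). Colouring $\R\cup\QQ^{(\textbf{out})}$ with one colour and $\QQ^{(\textbf{in})}\cup\S$ with the other is therefore a proper $2$-colouring, which exhibits $\K'$ as bipartite with these two classes. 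A quick count shows that both classes have the same size $m+|\QQ|$, so a perfect matching is not excluded on cardinality grounds.

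For the matching, I would start from Proposition~\ref{prop:m_vertex_separator}, which states that the minimum $(\textbf{s},\textbf{t})$-vertex-separator of $\K$ has size $m$; Menger's theorem then yields $m$ internally-vertex-disjoint directed $(\textbf{s},\textbf{t})$-paths in $\K$. Since all arcs of $\K$ point forward through the layers $\textbf{s}\to\R\to\QQ\to\S\to\textbf{t}$, each such path is necessarily of the form $\textbf{s}\to R\to Q\to S\to\textbf{t}$. Because the $m$ paths are internally disjoint and $|\R|=|\S|=m$, they cover every vertex of $\R$ and of $\S$ exactly once and pass through $m$ distinct vertices of $\QQ$, which I denote $\QQ_0$. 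I would then define a matching $\M$ of $\K'$ by taking, for each path $\textbf{s}\to R\to Q\to S\to\textbf{t}$, the two edges $\{R,Q^{(\textbf{in})}\}$ and $\{Q^{(\textbf{out})},S\}$, and, for each unused $Q\in\QQ\setminus\QQ_0$, the twin edge $\{Q^{(\textbf{in})},Q^{(\textbf{out})}\}$.

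It then remains to check that $\M$ is a perfect matching, which I expect to be routine: each selected edge exists in $\K'$ (the inclusions $R\subset Q$ and $Q\subset S$ hold along each path, and twin edges always exist), no two selected edges share a vertex (internal disjointness of the paths makes their $R$-, $Q$- and $S$-vertices pairwise distinct, and $\QQ_0$ is disjoint from $\QQ\setminus\QQ_0$), and every vertex is covered ($\R$ and $\S$ by the path edges, the $Q\in\QQ_0$ by the path edges through their in/out copies, and the remaining twins by the edges $\{Q^{(\textbf{in})},Q^{(\textbf{out})}\}$). The only point requiring genuine care — and thus the main obstacle — is justifying that the Menger paths saturate all of $\R$ and $\S$ and thread through pairwise distinct $\QQ$-vertices; this is precisely where the forward-layered structure of $\K$ must be combined with internal vertex-disjointness, and once it is in place the perfect matching follows immediately.
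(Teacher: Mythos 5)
Your proposal is correct and follows essentially the same route as the paper: the same bipartition $(\R\cup\QQ^{(\textbf{out})},\ \QQ^{(\textbf{in})}\cup\S)$, the same appeal to Proposition~\ref{prop:m_vertex_separator} plus Menger's theorem to obtain $m$ internally-vertex-disjoint $(\textbf{s},\textbf{t})$-paths of the form $\textbf{s}\,R\,Q\,S\,\textbf{t}$, and the identical matching built from the two path edges $\{R,Q^{(\textbf{in})}\}$, $\{Q^{(\textbf{out})},S\}$ together with twin edges $\{Q^{(\textbf{in})},Q^{(\textbf{out})}\}$ on the unused $Q$'s. The ``point requiring genuine care'' you flag (paths saturating $\R$ and $\S$ and using distinct $\QQ$-vertices) is exactly what the paper disposes of in one line via $m=|\R|=|\S|$ and internal vertex-disjointness, and your fuller justification of it is sound.
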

\begin{proof}
The bipartiteness of $\K'$ follows from the bipartition $(\R\cup\QQ^{(\textbf{out})}, \QQ^{(\textbf{in})}\cup\S)$.
To see that $\K'$ admits a perfect matching,
recall that, by Proposition~\ref{prop:m_vertex_separator} and by Theorem~\ref{thm:menger},
there exist $m$ internally-vertex-disjoint directed $(\textbf{s}, \textbf{t})$-paths $\p'_1, \p'_2, \ldots, \p'_m$ in $\K$.
Then, for every $i\in [m]$, let $\p'_i = \textbf{s} R_i Q_i S_i \textbf{t}$
for some $R_i\in\R, Q_i\in\QQ, S_i\in\S$.
Finally, let us define
$\hat\QQ=\QQ\setminus \{Q\mid \exists\, i\in[m]\,\text{ s.t. } \p'_i = \textbf{s} R_i Q S_i \textbf{t} \}$.
At this point, let us consider the following matching $\M$ of $\K'$:
\begin{align*}
\M &=  \big\{ \{R_i, Q^{(\textbf{in})}_i\}, \{Q^{(\textbf{out})}_i, S_i\}
	\mid \exists\, i\in [m] \text{ s.t. } \p'_i = \textbf{s} R_i Q_i S_i \textbf{t} \big\}  \\
   & \cup \big\{\{Q^{(\textbf{in})}, Q^{(\textbf{out})}\} \mid Q\in \hat\QQ \} \big\}.
\end{align*}
Since $m=|\R|=|\S|$ and $\p'_1, \p'_2, \ldots, \p'_m$ are internally-vertex-disjoint, it follows that $\M$ is a perfect matching of $\K'$.
\end{proof}
We are in position to show how to compute $\p'_1, \p'_2, \ldots, \p'_m$ based on $\K$.
Firstly, the procedure \texttt{compute\_vertex\_disjoint\_paths()} constructs $\K'$ as explained above and computes a maximum cardinality matching
$\M$ of $\K'$ (\eg with Hopcroft-Karp's algorithm~\cite{HK73}), which is perfect
by Proposition~\ref{prop:K_prime}.
Therefore, the following property holds: for every $Q\in \QQ$,
there exists $R\in\R$ such that $\{R, Q^{(\textbf{in})}\}\in \M$ if
and only if there exists $S\in\S$ such that $\{Q^{(\textbf{out})}, S\}\in\M$.
We can then proceed as follows:
for each $R_i\in \R$, the algorithm finds $Q_i\in\QQ$ such that $\{R_i, Q^{(\textbf{in})}_i\}\in\M$
and then it finds $S_i\in\S$ such that $\{Q^{(\textbf{out})}_i, S_i\}\in\M$.
Then, \texttt{compute\_vertex\_disjoint\_paths()}
returns the family of paths $\p'_1, \p'_2, \ldots, \p'_m$ defined as: $\p'_i = \textbf{s} R_i Q_i S_i \textbf{t}$ for every $i\in [m]$.
Since $\M$ is a perfect matching of $\K'$, the paths $\p'_1, \p'_2, \ldots, \p'_m$
are internally-vertex-disjoint.

Let $\QQ'=\{Q\mid \exists\, i\in [m]\text{ s.t. } \p'_i=\textbf{s} R_i Q S_i \textbf{t}\}$.
Once we have computed
$\p'_1, \p'_2, \ldots$, $\p'_m$,
we can deduce two bijections that will be helpful in obtaining the wanted paths:
\[ \varphi': \R\rightarrow\QQ' \text{ and } \varphi'': \QQ'\rightarrow\S. \]
The first bijection is defined for any $R\in\R$ as $\varphi'(R)=Q$ (where $Q\in\QQ'$) provided there exists some $\p'_i$ joining
$R$ to $Q$; similarly, the second bijection is defined for any $Q\in \QQ'$ as $\varphi''(Q)=S$ (where $S\in\S$) provided
there exists some $\p'_i$ joining $Q$ to $S$.
These bijections are computed by \texttt{compute\_auxiliary\_bjcts()} at line~\ref{algo:vdp:l7}.

At this point, since the distance between $\R$ and $\QQ'$ equals~$s-1$,
a recursive call to \texttt{compute\_Lehman-Ron\_paths()} on input $\langle\R, \QQ', (\varphi')^{-1}, n \rangle$
yields, at line~\ref{algo:vdp:l8}, a family of Lehman-Ron paths $ \p''_1, \p''_2, \ldots, \p''_m$ joining $\R$ to $\QQ'$.

Indeed, we argue that it is possible to construct, starting from $\p''_1, \p''_2, \ldots, \p''_m$,
the sought family of Lehman-Ron paths $\p_1, \p_2, \ldots, \p_m$ that join $\R$ to $\S$.
Actually, this can be done just by taking into account the bijection $\varphi''$:
since $\varphi''$ joins $\QQ'$ to $\S$,
it suffices to perform the following steps in practice:
\begin{enumerate}
\item consider the last vertex $Q_i$ of $\p''_i$ (\ie the unique vertex $Q_i\in\QQ'$ such that $Q_i\in \p''_i\cap \QQ'$);
\item let $S_i=\varphi''(Q_i)$;
\item concatenate $S_i$ at the end of $\p''_i$ (\ie $\p_i= \p''_i\, S_i$).
\end{enumerate}
This construction is performed by the \texttt{extend\_paths()} procedure at line~\ref{algo:vdp:l9}.
Since $\p''_1, \p''_2, \ldots, \p''_m$ are vertex-disjoint and $\varphi'':\QQ'\rightarrow\S$ is a bijection,
$\p_1, \p_2, \ldots, \p_m$ is the sought family of Lehman-Ron paths joining $\R$ to $\S$.




\subsubsection{Complexity Analysis (Proof of Theorem~\ref{thm:algo_lehmanron})}

We now turn to the time complexity analysis of Algorithm~\ref{ALGO:vertex-disjoint-paths}, going through each line in detail.

\begin{itemize}
    \item line~\ref{algo:vdp:l2}: \texttt{compute\_paths\_from\_bijection()} (line~\ref{algo:vdp:l2}) takes time at most $O(m)$, which corresponds to the time needed to inspect the input bijection $\varphi$.
    \item line~\ref{algo:vdp:l4}: \texttt{compute\_\QQ()} takes time at most $O(mn)$:
for each $S\in\S$, the procedure inspects the
predecessors
$N^{\texttt{in}}(S)$,
and the time bound follows from the fact that $|\S| = m$ and $|N^{\texttt{in}}(S)|\leq n$.

\item line~\ref{algo:vdp:l5}: we argue that $|V_\K|=O(mn)$ and $|A_\K|=O(m^2n)$.
Indeed, recall that $|\R|=|\S|=m$ by hypothesis;
and since every vertex of $\S$ has at most $n$ neighbours in $\QQ$,
 we have $|\QQ|\leq mn$. This in turn implies that $|V_\K|\leq 2 + 2m + mn$;
moreover, each of the $m$ vertices in $\R$ has at most $mn$ neighbours, which all lie in $\QQ$.
Therefore, $|A_\K|\leq 2m + m^2n + mn$, and the procedure
\texttt{compute\_auxiliary\_network()}
takes time at most $O(|V_\K| + |A_\K|)=O(m^2 n)$.

\item line~\ref{algo:vdp:l6}: \texttt{compute\_vertex\_disjoint\_paths()}
takes time at most $O\big( m^{5/2} n^{3/2} \big)$.
Indeed, let us consider the auxiliary (undirected) bipartite graph $\K'=(V_{\K'}, E_{K'})$ defined above.
Since $|V_\K|=O(mn)$ and $|A_\K|=O(m^2n)$,
we have $|V_{\K'}|=O(mn)$ and $|E_{\K'}|=O(m^2n)$ by construction.
A maximum cardinality matching $\M$ of $\K'$ can be computed with 
Hopcroft-Karp's
algorithm~\cite{HK73} within time $O(\sqrt{|V_{\K'}|}\,|E_{\K'}|) = O(m^{5/2}n^{3/2})$, which yields the claimed time bound.

\item finally, lines~\ref{algo:vdp:l7} (\texttt{compute\_auxiliary\_bjcts()}) and~\ref{algo:vdp:l9} (\texttt{extend\_paths()}) take time at most $O(m)$.
\end{itemize}

To obtain the total time complexity of \texttt{compute\_Lehman-Ron\_paths()},
it is sufficient to observe that the depth of the recursion stack
(originating from line~\ref{algo:vdp:l8}) equals the distance $d= s-r$
between the families of sets that were originally given as input, $\R$ and $\S$,
and that the most expensive computation at each step of the recursion
is clearly the maximum cardinality matching computation
that is performed on the auxiliary bipartite graph $\K'$.
Therefore, we conclude that the worst-case time complexity of \texttt{compute\_Lehman-Ron\_paths()} is
$O\big( m^{5/2} n^{3/2} d \big)$.

\subsection{A Polynomial Time Algorithm For Solving \mainproblem}\label{sect:st-dPaths}

\begin{algorithm}[t]
\caption{Solving the \texttt{\mainproblem} problem.}\label{ALGO:solve}
\DontPrintSemicolon
\nonl \SetKwProg{Fn}{Procedure}{}{}
\footnotesize
\Fn{$\texttt{solve\_\mainproblem}(S, T, \nonviableset, n)$}{
\KwIn{an instance $\langle S, T, \nonviableset, n\rangle$ of \mainproblem.}
\KwOut{a pair $\langle \texttt{YES}, \p\rangle$ where the path $\p$ is
a solution to {\mainproblem} if such a path exists, \texttt{NO} otherwise.}
$d_{S,T}\leftarrow |T|-|S|$; \label{algo:solve:l1} \hfill\tcp{let $d_{S,T}$ be the distance between $S$ and $T$}
$\S \leftarrow \{S\}$; $\ell_\uparrow\leftarrow 0$; \label{algo:solve:l2}
                \tcp{init the \emph{frontier} $\S$ and its \emph{level counter} $\ell_\uparrow$ }
$\T \leftarrow \{T\}$; $\ell_\downarrow\leftarrow 0$; \label{algo:solve:l3}
                \tcp{init the \emph{frontier} $\T$ and its \emph{level counter} $\ell_\downarrow$ }
\While{\texttt{TRUE}}{ \label{algo:solve:l4}
    $\langle \S, \T, \ell_\uparrow, \ell_\downarrow \rangle\leftarrow
        \texttt{double-bfs\_phase}(\S, \T, \nonviableset, \ell_\uparrow, \ell_\downarrow, d_{S,T}, n)$; \label{algo:solve:l5} \;
    \If{ $\S=\emptyset$ \texttt{ OR } $\T=\emptyset$ \texttt{ OR }
                ($\ell_\uparrow + \ell_\downarrow = d_{S, T}$ \texttt{ AND } $\S\cap \T=\emptyset$)}{ \label{algo:solve:l6}
    \Return{\texttt{NO}}; \label{algo:solve:l7}
    }
    \If{ $\ell_\uparrow + \ell_\downarrow = d_{S,T}$ \texttt{ AND } $\S\cap\T\neq \emptyset$ }{ \label{algo:solve:l8}
        $\p\leftarrow\texttt{reconstruct\_path}(\S, \T, n)$; \label{algo:solve:l9} \;
        \Return{$\langle \texttt{YES}, \p \rangle$}; \label{algo:solve:l10}
    }
    $\texttt{returned\_val}\leftarrow\texttt{compression\_phase}(\S, \T, \nonviableset,
                            \ell_\uparrow, \ell_\downarrow, d_{S,T}, n)$; \label{algo:solve:l11} \;
    \lIf{$\texttt{returned\_val}=\langle \texttt{YES}, \p \rangle$}{\Return{$\p$};} \label{algo:solve:l12}
    \lElse{$\T\leftarrow \texttt{returned\_val}$;} \label{algo:solve:l13}
}
}
\end{algorithm}
We now describe a polynomial time algorithm for solving {\mainproblem}, called
\texttt{solve\_\mainproblem()}, which
takes as input an instance $\langle S, T, \nonviableset, n\rangle$ of \mainproblem, and
returns a pair $\langle \texttt{YES}, \p\rangle$ where $\p$ is a directed path in $\H_n$ that goes from source $S$ to target $T$
avoiding $\nonviableset$ if such a path exists (otherwise, the algorithm simply returns \texttt{NO}).
Algorithm~\ref{ALGO:solve} shows the pseudocode for that procedure.
The rationale at the base of \texttt{solve\_\mainproblem()} consists in the continuous iteration of two major phases:
\texttt{double-bfs\_phase()} (line~\ref{algo:solve:l5})
and
\texttt{compression\_ phase()} (line~\ref{algo:solve:l11}).
Throughout computation, both phases alternate repeatedly
until a final state of termination is eventually reached
(either at line~\ref{algo:solve:l7}, line~\ref{algo:solve:l10} or line~\ref{algo:solve:l12}).
At that point, the algorithm either returns a pair $\langle \texttt{YES},
\p \rangle$ where $\p$ is the sought directed path, or a negative response \texttt{NO} instead.
%
We now describe both phases in more detail, and give the corresponding pseudocode.

\paragraph{Breadth-First Search phases}
The first search $\texttt{BFS}_\uparrow$ starts from the source vertex $S$ and moves upward,
from lower to higher levels of $\H_n$.
Meanwhile, it collects a certain (polynomially bounded) amount of vertices that do not lie in $\nonviableset$.
In particular, at the end of any $\texttt{BFS}_\uparrow$ phase,
the number of collected vertices will always
lie between $|\nonviableset|\, d_{S,T} + 1$ and $|\nonviableset|\, d_{S,T}\,n$
(see line~\ref{algo:bfs:l1} of \texttt{bfs\_phase()}).
The set $\S$ of vertices collected at the end of $\texttt{BFS}_\uparrow$ is called the \emph{(source) frontier} of $\texttt{BFS}_\uparrow$.
All vertices within $\S$ have the same cardinality, \ie $|X_1|=|X_2|$ for every $X_1,X_2\in \S$.
Also, the procedure keeps track of the highest level of depth $\ell_\uparrow$ that is reached during $\texttt{BFS}_\uparrow$.
Thus, $\ell_\uparrow$ corresponds to the distance between the source vertex $S$ and
the current frontier $\S$,
formally, $\ell_\uparrow = |X|-|S|$ for every $X\in \S$.
Since at the beginning of the computation $\texttt{BFS}_\uparrow$ starts from the source vertex $S$,
\texttt{solve\_\mainproblem()} initializes $\S$ to $\{S\}$ and $\ell_\uparrow$ to $0$ at line~\ref{algo:solve:l2}.

\begin{algorithm}[t]
\caption{Breadth-First-Search phases.}\label{ALGO:double_bfs}
\DontPrintSemicolon
\SetKwProg{Fn}{Procedure}{}{}
\footnotesize
\nonl\Fn{\texttt{double-bfs\_phase}$(\S, \T, \nonviableset, \ell_\uparrow, \ell_\downarrow, d_{S,T}, n)$}{
\setcounter{AlgoLine}{0}

$\langle \S^*, \ell^*_\uparrow \rangle\leftarrow
	\texttt{bfs\_phase}(\S, \nonviableset, \ell_\uparrow, \ell_\downarrow,
		\texttt{out}, d_{S,T}, n)$; \label{algo:two_bfs:l1} \tcp{$\text{BFS}_\uparrow$}
$\langle \T^*, \ell^*_\downarrow \rangle\leftarrow
 \texttt{bfs\_phase}(\T, \nonviableset, \ell_\downarrow, \ell^*_\uparrow,
		\texttt{in}, d_{S,T}, n)$; \label{algo:two_bfs:l2} \tcp{$\text{BFS}_\downarrow$}
\Return{$\langle \S^*, \T^*, \ell^*_\uparrow, \ell^*_\downarrow \rangle$}; \label{algo:two_bfs:l3} \;
}

\SetKwProg{SubFn}{SubProcedure}{}{}
\nonl\SubFn{\texttt{\textbf{bfs\_phase}}$(\X, \nonviableset, \ell_x, \ell_y, \texttt{drt}, d_{S,T}, n)$}{
\setcounter{AlgoLine}{0}

\While{ $1\leq|\X|\leq |\nonviableset|\, d_{S,T} \texttt{ AND } \ell_x + \ell_y < d_{S,T} $ }{ \label{algo:bfs:l1}
	$\X\leftarrow \texttt{next\_step\_bfs}(\X, \nonviableset, \texttt{drt}, n)$; \label{algo:bfs:l2} \;
	$\ell_x\leftarrow \ell_x+1$; \label{algo:bfs:l3} \;
}
\Return{$\langle \X, \ell_x \rangle$}; \label{algo:bfs:l4}
}

\setcounter{AlgoLine}{0}

\nonl\SubFn{$\texttt{\textbf{next\_step\_bfs}}(\X, \nonviableset, \texttt{drt}, n)$}{
\setcounter{AlgoLine}{0}
$\X'\leftarrow\emptyset$; \label{algo:next_bfs:l1} \;
\ForEach{$v\in \X$}{ \label{algo:next_bfs:l2}
	$\X'\leftarrow \X'\cup N^{\texttt{drt}}(v)\setminus \nonviableset$;\hfill\tcp{$N^{\texttt{drt}}$ is $N^{\texttt{in}}$ if $\texttt{drt}=\texttt{in}$, otherwise it is $N^{\texttt{out}}$ \label{algo:next_bfs:l3}}
}
\Return{$\X'$}; \label{algo:next_bfs:l4}
}
\end{algorithm}

Similarly, the second search $\texttt{BFS}_\downarrow$ starts from the target vertex $T$ and moves downward,
from higher to lower levels of $\H_n$, also collecting a certain (polynomially bounded) amount of vertices that do not lie in $\nonviableset$.
As in the previous case, this amount will always
lie between $|\nonviableset|\, d_{S,T} + 1$ and $|\nonviableset|\, d_{S,T}\, n$.
The set $\T$ of vertices collected at the end of $\texttt{BFS}_\downarrow$
is called the \emph{(target) frontier} of $\texttt{BFS}_\downarrow$.
All vertices within $\T$ have the same cardinality.
Also, the procedure keeps track of the \emph{lowest} level of depth $\ell_\downarrow$ that $\texttt{BFS}_\downarrow$ has reached.
Thus, $\ell_\downarrow$ corresponds to the \emph{distance} between the target vertex $T$ and the frontier $\T$,
so that $\ell_{\downarrow}=|T|-|X|$ for every $X\in\T$. Since at the beginning of the computation,
$\texttt{BFS}_\downarrow$ starts from the target vertex $T$,
\texttt{solve\_\mainproblem()} initializes $\T=\{T\}$ and $\ell_\downarrow=0$ at line~\ref{algo:solve:l3}.
\figref{fig:double_bfs_phase} provides an illustration of the behaviour of \texttt{double-bfs\_phase()}.

In summary, after any round of \texttt{double-bfs\_phase()}, we are left with two (possibly empty) frontier sets $\S$ and $\T$.
In Algorithm~\ref{ALGO:solve}, whenever $\S=\emptyset$ or $\T=\emptyset$ holds at line~\ref{algo:solve:l6},
then at least one frontier set could not proceed one level further in $\H_n$ while avoiding $\nonviableset$,
and thus the procedure halts by returning \texttt{NO} at line~\ref{algo:solve:l7}.
Similarly, whenever $\ell_\uparrow+\ell_\downarrow = d_{S,T}$ and $\S\cap\T=\emptyset$ holds at line~\ref{algo:solve:l6},
the computation halts by returning \texttt{NO} at line~\ref{algo:solve:l7} --- the underlying intuition being that
$\S$ and $\T$ have finally reached one another's level of depth without intersecting each other,
which means that $\H_n$ contains no directed path from $S$ to $T$ that avoids $\nonviableset$.

\begin{figure}[htbp]
\centering
\begin{tikzpicture}[baseline= (a).base]
\node[scale=1] (a) at (0,0){
\begin{tikzcd}[
  every arrow/.append style={-,thick},
  matrix of math nodes maybe/.append style={/tikz/cells={/tikz/nodes={/tikz/draw,/tikz/shape=circle,align=center,text width=\widthof{$x_1x_2x_3$}}}},
  row sep=normal,
  column sep=large,
  thick,
  ]
        & 1,2,3 \dar \dlar \drar  \dar[shift left=.25cm, shorten=.1cm, -stealth]\\
\textit{\textbf{1,2}} & 1,3      \dlar \drar  & \textit{\textbf{2,3}} \\
 1 \uar
        &    \textit{\textbf{2}}         \ular \urar &   \textit{\textbf{3}} \uar\\
        & \emptyset \uar \ular \urar \ular[shift left=.25cm, shorten=.25cm, -stealth]
\end{tikzcd}
};
\end{tikzpicture}
\caption{A \texttt{double\_bfs\_phase()} on $\H_3$ that starts from $S=\emptyset$ and $T=\{1,2,3\}$.
The forbidden vertices are $\nonviableset=\{\{2\}, \{3\}, \{1,2\}, \{2, 3\}\}$,
 while the edges explored by $\texttt{BFS}_{\uparrow}$ and
$\texttt{BFS}_{\downarrow}$ are $(\emptyset, \{1\})$ and $(\{1,2,3\}, \{1,3\})$ (respectively).}\label{fig:double_bfs_phase}
\end{figure}

On the other
hand,
if both $\ell_\uparrow+\ell_\downarrow = d_{S,T}$ and
$\S\cap\T\neq\emptyset$ hold at line~\ref{algo:solve:l8},
then we can prove that for
every $S'\in\S$, there exists at least one directed path in $\H_n$
that goes from the source $S$ to $S'$ avoiding $\nonviableset$.
Similarly, for every $T'\in\T$, there exists at least one directed
path in $\H_n$ that goes from $T'$ to target $T$ avoiding $\nonviableset$.
Therefore, whenever $\S\cap\T\neq\emptyset$, the algorithm is in the right position to reconstruct
a directed path $\p$ in $\H_n$ that goes
from source $S$ to $\S\cap\T$ and from $\S\cap\T$ to target $T$ avoiding $\nonviableset$ (line~\ref{algo:solve:l9}).
In practice, the reconstruction can be implemented by maintaining a \texttt{map}
throughout the computation, which associates to every vertex $v$ (possibly visited during the BFSs)
the \emph{parent vertex}, $\texttt{parent}(v)$, which
led to discover $v$ first.
As soon as $\p$ gets constructed,
\texttt{solve\_\mainproblem()} returns $\langle\texttt{YES}, \p\rangle$
at line~\ref{algo:solve:l10},
and the computation halts.

\paragraph{Compression Phase}
After \texttt{double-bfs\_phase()} has completed,
the procedure \texttt{solve\_\mainproblem()} also needs to handle
the case
where $\S,\T\neq\emptyset$ and $\ell_\downarrow + \ell_\uparrow < d_{S,T}$.
The phase that starts at that point is named \texttt{compression\_phase()}
(see Algorithm~\ref{ALGO:compression_phase}).
\begin{algorithm}[t]
\caption{Compression phase.}\label{ALGO:compression_phase}
\nonl\SetKwProg{Fn}{Procedure}{}{}
\DontPrintSemicolon
\footnotesize
\Fn{$\texttt{\textbf{compression\_phase}}(\S, \T, \nonviableset, \ell_\uparrow, \ell_\downarrow, d_{S,T}, n)$}{
	$\T'\leftarrow\emptyset$; \label{algo:compression:l1} \;
	\While{\texttt{TRUE}}{ \label{algo:compression:l2}
		$\G\leftarrow\texttt{construct\_bipartite\_graph}(\S, \T, n)$; \label{algo:compression:l3}\;
		$\M\leftarrow\texttt{compute\_max\_matching}(\G, |\nonviableset|+1)$; \label{algo:compression:l4}\;
		\If{$|\M|>|\nonviableset|$}{ \label{algo:compression:l5}
    		  $\M_\S\leftarrow \{X\in\S\mid \exists\, Y\in \T\,\text{ s.t } ( X,Y ) \in \M\}$; \label{algo:compression:l6}\;
		  $\M_\T\leftarrow \{Y\in\T\mid \exists\, X\in \S\,\text{ s.t. } ( X,Y ) \in \M\}$; \label{algo:compression:l7}\;
		  $\{\p_1, \ldots, \p_{|\M|}\}\leftarrow
			\texttt{compute\_Lehman-Ron\_paths}(\M_\S, \M_\T, \M, n)$; \label{algo:compression:l8}\;
			$\p\leftarrow\texttt{reconstruct\_path}(\S, \T, \{\p_i\}_{i=1}^{|\M|}, n)$; \label{algo:compression:l9} \;
			\Return{$\langle \texttt{YES}, \p \rangle$}; \label{algo:compression:l10}
		}
		$\X\leftarrow \texttt{compute\_min\_vertex\_cover}(\G, \M)$; \label{algo:compression:l11}\;
		$\X_{\S}\leftarrow \X\cap \S$; $\X_\T\leftarrow \X\cap \T$; \label{algo:compression:l12}\;
		$\T'\leftarrow\T'\cup \X_\T$; \label{algo:compression:l13}\;
		$\langle \S, \T, \ell_\uparrow, \ell_\downarrow \rangle\leftarrow
			\texttt{double-bfs\_phase}(\X_\S, \T, \nonviableset,
				\ell_\uparrow, \ell_\downarrow, d_{S,T}, n)$; \label{algo:compression:l14} \;
		\If{ $\S=\emptyset$ \texttt{ OR } ($\ell_\downarrow + \ell_\uparrow = d_{S,T}$
						\texttt{ AND } $\S\cap \T=\emptyset$ )}{ \label{algo:compression:l15}
			\Return{$\T'$}; \label{algo:compression:l16}
		}
		\If{ $\ell_\uparrow + \ell_\downarrow = d_{S,T}$
				\texttt{ AND } $\S\cap\T\neq \emptyset$ }{ \label{algo:compression:l17}
			$\p\leftarrow\texttt{reconstruct\_path}(\S, \T, n)$; \label{algo:compression:l18} \;
			\Return{$\langle \texttt{YES}, \p \rangle$}; \label{algo:compression:l19}
		}
	}
}
\end{algorithm}
This procedure takes as input a tuple
$\langle \S, \T, \nonviableset, \ell_\uparrow, \ell_\downarrow, d_{S,T}, n \rangle$,
where $\S$ and $\T$ are the current frontier sets.
Recall that 
$|\T|>|\nonviableset|\, d_{S,T}$ holds 
due to line~\ref{algo:bfs:l1} of \texttt{bfs\_phase()}.
Also, $\nonviableset\subseteq\wp_n$ is the set of forbidden vertices;
$\ell_\uparrow$ is the level counter of $\S$ and $\ell_\downarrow$ is that
of $\T$; finally $d_{S,T}$ is the distance between the source $S$ and the target $T$, and $n$ is the size of the ground set.
The output returned by \texttt{compression\_phase()} is either a path $\p$ going from source $S$ to target $T$
avoiding $\nonviableset$ \texttt{or} a subset $\T'\subset \T$ such that the following two basic properties hold:
%

(1) $|\T'|\leq |\nonviableset|\, d_{S,T}$, and
(2) if $\p$ is any directed path in $\H_n$ going from
$\S$ to $\T$ avoiding $\nonviableset$, then $\p$ goes from $\S$ to $\T'$.

This frontier set $\T'$ is dubbed the \emph{compression} of $\T$. The underlying rationale goes as follows.
On one hand, because of (1), it is possible to keep the search going on by applying yet another round of \texttt{double-bfs\_phase()} on input $\S$ and $\T'$
(in fact, the size of $\T$ has been compressed down to $|\T'|\leq |\nonviableset|\, d_{S,T}$,
thus matching the threshold condition ``$|\X|\leq |\nonviableset|\, d_{S,T}$" checked at line~1 of \texttt{bfs\_phase()}).
On the other hand, because of (2), it is indeed sufficient to seek for a directed path in $\H_n$ that goes from $\S$ to $\T'$ avoiding $\nonviableset$,
namely, the search can actually forget about $\T\setminus \T'$ because it leads to a dead end.
\begin{figure}[!htb]
    \centering
    \begin{tikzpicture}[scale=.68, level distance=35pt,sibling distance=9pt]
    \Tree [. \framebox{$(\S, \T)$}
        \edge node[left, xshift=-1ex]{};
    [. \framebox{$(\X^{(1)}_{\S}, \T )$}
     \edge node[left,xshift=-1ex]{};
     [. \framebox{$(\S^{(1)}, \T)$}
        \edge node[left, xshift=-1ex]{};
        [. \framebox{$(\X^{(2)}_{\S}, \T)$}
            \edge node[]{};
            [. \framebox{$(\S^{(2)}, \T)$}
                \edge node[left, xshift=-1ex]{};
                [. \framebox{$(\X^{(3)}_{\S}, \T)$}
                    \edge node[left, xshift=-.25ex, yshift=1ex]{$\vdots$};
                        [. \framebox{$(\S^{(\max_i-1)}, \T)$}
                        \edge node[left, xshift=-1ex]{};
                            [. \framebox{$(\X^{(\max_i)}_{\S}, \T)$}
                            ]
                        \edge node[right, xshift=1ex]{};
                        [. \framebox{$(\S^{(\max_i-1)}, \X^{(\max_i)}_\T)$}
                        ]
                        ]
                ]
                \edge node[right, xshift=1ex]{};
                    [. \framebox{$(\S^{(2)}, \X^{(3)}_\T)$}
                    ]
            ]
        ]
        \edge node[right, xshift=1ex]{};
        [. \framebox{$(\S^{(1)}, \X^{(2)}_\T)$}
        ]
     ]
    ]
    \edge node[right, xshift=1ex]{};
    [. \framebox{$(\S, \X^{(1)}_\T)$} ]
    ]
    \end{tikzpicture}
    \caption{The frontier sets of the \texttt{compression\_phase()}.}
\label{fig:compression_phase}
\end{figure}
We now describe \texttt{compression\_phase()} in more details,
and give a graphical summary in \figref{fig:compression_phase}.
The procedure repeatedly builds an undirected bipartite graph
%
$\G=(V_\G, E_\G)$,
where $V_\G=\S\cup\T$ and
every vertex
$U\in\S$ is adjacent to a vertex $V\in\T$
if and only if $U\subset V$.
%
%
It then uses the procedure \texttt{compute\_max\_matching()} to find a matching
$\M$ of size $|\M|=\min(m^*, |\nonviableset|+1)$,
  where $m^*$ denotes the size of a maximum cardinality matching of $\G$.
Notice that the following holds due to line~\ref{algo:bfs:l1} of $\texttt{bfs\_phase()}$:
\[|V_\G| = |\S| + |\T| \leq 2\,|\F|\, d_{S,T}\, n,\]
thus, we have the following bound on the size of its edge set:
\[|E_\G| \leq |V_\G|^2 \leq 4\,|\F|^2\, d^2_{S,T}\, n^2. \]

\begin{algorithm}[t]
\caption{Self-Reduction for computing $\M$.}\label{ALGO:self_reduction}
\nonl\SetKwProg{Fn}{Procedure}{}{}
\DontPrintSemicolon
\footnotesize
\Fn{$\texttt{\textbf{self-reduction}}(\G,k)$}{
$\M\leftarrow \emptyset$;\label{algo:self-red:l1}\;
  \lIf{$k=0$}{
      \Return{$\M$;\label{algo:self-red:l2}}
    }
    $\hat{v}\leftarrow$ pick one vertex $\hat{v}\in V$ having maximum degree $\delta(\hat{v})$ in $\G$;\label{algo:self-red:l3}\;
    \If{$\delta(\hat{v})<k$\label{algo:self-red:l4}}{
      $\M\leftarrow$ compute a matching $\M$ of $\G$ s.t. $|\M| = \min(m^*, k)$, with the Hopcroft-Karp's algorithm~\cite{HK73};\label{algo:self-red:l5}\;
    }
    \If{$\delta(\hat{v})\geq k$\label{algo:self-red:l6}}{
      $\G'\leftarrow$ remove $\hat{v}$ from $\G$; and call the resulting graph $\G'$;\label{algo:self-red:l7}\;
      $\M'\leftarrow \texttt{self-reduction}(\G’, k-1)$;\label{algo:self-red:l8}\;
      $\M\leftarrow$ there must be at least one edge $\{u,\hat{v}\}\in E_\G$ such that $u$ is not matched in $\M’$,
      therefore, add $\{u,\hat{v}\}$ to $\M’$; and assign the resulting matching to $\M$;\label{algo:self-red:l9}\;
    }
    \Return{$\M$;\label{algo:self-red:l10}}
}
\end{algorithm}

The fact is that, given that we are content with a cardinality matching of size at most $k = |\nonviableset| + 1$,
it is worth applying the following recursive $\texttt{self-reduction}(\G,k)$ (Algorithm~\ref{ALGO:self_reduction}),
on input $(\G, |\nonviableset|+1)$, in order to shrink the upper bound on the size of $|E_\G|$ from $|V_\G|^2$ down to $|V_\G|\cdot |\nonviableset|$:
at line~\ref{algo:self-red:l1}, $\M\leftarrow\emptyset$ is initialized to the empty set. At line~\ref{algo:self-red:l2}, if $k=0$, the empty matching $\M=\emptyset$ is returned.
Then, at line~\ref{algo:self-red:l3}, let $\hat{v}\in V$ be some vertex having maximum degree $\delta(\hat{v})$ in $\G$.
If $\delta(\hat{v})<k$ at line~\ref{algo:self-red:l4}, the Hopcroft-Karp's algorithm~\cite{HK73} is invoked at line~\ref{algo:self-red:l5}
to compute a matching $\M$ of $\G$ such that $|\M| = \min(m^*, k)$, where $m^*$ is the maximum cardinality of any matching in $\G$.
In practice, this step can be implemented in the same manner as a maximum cardinality matching procedure,
\eg as Hopcroft-Karp's algorithm~\cite{HK73}, although with the following basic variation:
if the size of the augmenting matching $\M$ eventually reaches the cut-off value $k$,
then \texttt{compute\_max\_matching()}
returns $\M$ and halts (\ie even if $m^* > k$).
Otherwise, $\delta(\hat{v})\geq k$ holds at line~\ref{algo:self-red:l6}. So, at line~\ref{algo:self-red:l7},
let $\G'$ be the graph obtained from $\G$ by removing $\hat{v}$ and all of its adjacent edges;
next, it is invoked $\texttt{self-reduction}(\G’, k-1)$ at line~\ref{algo:self-red:l8}, recursively; and, then, the returned matching is assigned to $\M'$.
Since $\delta(\hat{v})\geq k$, there must be at least one edge $\{u,\hat{v}\}\in E_\G$ such that $u$ is not matched in $\M’$,
therefore, $\{u,\hat{v}\}$ is added to $\M’$; and the corresponding matching is assigned to $\M$, at line~\ref{algo:self-red:l9}.
Finally, $\M$ is returned at line~\ref{algo:self-red:l10}. In so doing, as shown in Lemma~\ref{lemma:complexity_compression_phase},
the complexity of $\texttt{compute\_max\_matching()}$, at line~\ref{algo:compression:l4} of $\texttt{compression\_phase()}$ (Algorithm~\ref{ALGO:compression_phase}),
is going to improve by a factor $n\cdot d_{S,T}$.

%
The course of the next actions depends on
$|\M|$:

\begin{enumerate}
\item
If $|\M| = |\nonviableset|+1$,
then the procedure relies on Theorem~\ref{thm:algo_lehmanron}
to compute a family $\p_1, \p_2, \ldots, \p_{|\M|}$ of $|\M|$
vertex-disjoint directed paths in $\H_n$ that go from $\S$ to $\T$.
In order to do that, the procedure considers the subset $\M_\S\subseteq\S$ (resp. $\M_\T\subseteq\T$) of all vertices in $\S$ (resp. in $\T$)
that are
incident
to some edge in $\M$ (lines~\ref{algo:compression:l6} and~\ref{algo:compression:l7}).
Notice that the matching $\M$ can be viewed as a bijection between $\M_\S$ and $\M_\T$.
Then, the algorithm underlying Theorem~\ref{thm:algo_lehmanron}
gets invoked on input $\langle \M_\S, \M_\T, \M, n \rangle$ (line~\ref{algo:compression:l8}).
Once all the Lehman-Ron paths $\p_1, \p_2, \ldots, \p_{|\M|}$ have been
found,
it is then possible to reconstruct the sought directed path $\p$ in $\H_n$ that goes from
source $S$ to target $T$  avoiding $\nonviableset$ (line~\ref{algo:compression:l9}).
In fact, since $|\M|>|\nonviableset|$ by hypothesis, and since $\p_1, \p_2, \ldots, \p_{|\M|}$
are distinct and pairwise vertex-disjoint,
there must exist at least one path $\p_i$ that goes from $\S$ to $\T$  avoiding $\nonviableset$.
It is therefore sufficient to find such a path $\p_i=v_0v_1\cdots v_k$ by direct inspection.
At that point, it is possible to reconstruct a path $\p$ going from $S$ to $v_0$ (because $v_0\in\S$),
as well as a path going from $v_k$ to $T$ (because $v_k\in\T$).
As already mentioned, in practice, the reconstruction can be implemented by maintaining a \texttt{map}
that associates to every vertex $v$ (eventually visited during the BFSs)
the parent vertex that had led to discover $v$ first.
Then, $\langle\texttt{YES},\p\rangle$ is returned at line~\ref{algo:compression:l10}.

\item
If $|\M|\leq |\nonviableset|$, then the \texttt{compression\_phase()}
aims to \emph{compress} the size of $\T$ down to $|\T'|\leq|\nonviableset|\, d_{S,T}$
as follows. Notice that in this case
$\M$ is a maximum cardinality matching of $\G$, because $|\M|\leq |\nonviableset|$.
So, the algorithm computes a minimum cardinality
vertex-cover $\X$ of $\G$ at line~\ref{algo:compression:l11},
whose size is $|\M|$ by K\"{o}nig's theorem~\cite{Diestel2005}.
The algorithm then proceeds at line~\ref{algo:compression:l12} by considering the set $\X_\S=\X\cap\S$ (resp. $\X_\T=\X\cap\T$)
of all vertices that lie both in the vertex-cover $\X$ and in the frontier set $\S$ (resp. $\T$).
Here, it is crucial to notice that both
$|\X_\S|\leq |\nonviableset|$ and $|\X_\T|\leq |\nonviableset|$ hold,
because $|\X|=|\M|\leq|\nonviableset|$.
The fact that, since $\X$ is a vertex-cover of $\G$,
any directed path in $\H_n$ that goes from $\S$ to $\T$ must go either from
$\X_\S$ to $\T$ or from $\S\setminus \X_\S$ to $\X_\T$ plays a pivotal role. Stated otherwise,
there exists no directed path in $\H_n$ that goes from $\S\setminus \X_\S$ to $\T\setminus \X_\T$,
simply because $\X$ is a vertex cover of $\G$.
At that point, the compression $\T'$ gets enriched with
$\X_\T$ at line~\ref{algo:compression:l13}.

Then, \texttt{compression\_phase()}
seeks a directed
path in $\H_n$ that eventually goes from $\X_S$ to $\T$.
This is done at line~\ref{algo:compression:l14} by running
\texttt{double-bfs\_phase()}
on 
$\langle \X_\S, \T, \nonviableset, \ell_\uparrow, \ell_\downarrow, d_{S,T}, n \rangle$.
Since $|\X_\S|\leq |\nonviableset|$, that execution results into an update of
both the frontier set $\S$ and of its level counter $\ell_\uparrow$.
Let 
$\S^{(i+1)}$ be the updated value of $\S$ and let $\ell^{(i+1)}_\uparrow$ be that of $\ell_\uparrow$.
Note that, since $|\T| > |\nonviableset|\, d_{S,T}$ holds as a pre-condition of 
\texttt{compression\_phase()},
neither $\T$ nor $\ell_\downarrow$ are ever updated at line~\ref{algo:compression:l14}.
Upon completion of this supplementary \texttt{double-bfs\_phase()},
if $\S^{(i+1)}=\emptyset$ or both
$\ell^{(i+1)}_\uparrow+\ell_\downarrow = d_{S,T}$ \emph{and}
$\S^{(i+1)}\cap\T=\emptyset$ at line~\ref{algo:compression:l15},
then 
 $\T'$
is returned at line~\ref{algo:compression:l16} of \texttt{compression\_phase()}.

Otherwise, if $\ell^{(i+1)}_\uparrow + \ell_\downarrow = d_{S,T}$ \emph{and} $\S^{(i+1)}\cap\T\neq\emptyset$ at line~\ref{algo:compression:l17},
the sought directed path $\p$ in $\H_n$ that goes from source $S$ to target $T$ avoiding $\nonviableset$
can be reconstructed from $\S^{(i+1)}$ and $\T$ at line~\ref{algo:compression:l18},
so that \texttt{compression\_phase()} returns $\langle \texttt{YES}, \p \rangle$
and halts soon after at line \ref{algo:compression:l19}.

Otherwise, if $\S^{(i+1)}\neq\emptyset$ and $\ell^{(i+1)}_\uparrow + \ell_\downarrow < d_{S,T}$,
the next iteration will run on the novel frontier set $\S^{(i+1)}$ and its updated level counter $\ell^{(i+1)}_\uparrow$.
It is not difficult to prove that each iteration increases $\ell_\uparrow$ by at least one unit,
so that the \texttt{while-loop} at line~\ref{algo:compression:l2} of \texttt{compression\_phase()}
can be iterated at most $d_{S,T}$ times overall.
In particular, this fact implies that $|\T'|\leq |\nonviableset|\, d_{S,T}$
always holds at line~16 of 
\texttt{compression\_phase()}.
\end{enumerate}

\figref{fig:compression_phase} illustrates the family of all frontier sets considered throughout \texttt{com\-pres\-sion\_phase()},
where the following notation is assumed:
$\max_i$ is the total number of iterations of the \texttt{while-loop}
at line~\ref{algo:compression:l2} of \texttt{compression\_phase()},
$\X^{(i)}$ is the vertex-cover computed at the $i^{\mbox{\tiny th}}$ iteration of line~\ref{algo:compression:l11},
$\X_\S^{(i)}$ and $\X_\T^{(i)}$ are the sets computed at the $i^{\mbox{\tiny th}}$ iteration of line~\ref{algo:compression:l12}, and
$\S^{(i)}$ is the frontier set computed at the $i^{\mbox{\tiny th}}$ iteration of line~\ref{algo:compression:l14}.
The compression of $\T$ (possibly returned at line~16)
is
$\T'=\bigcup_{i=1}^{\max_i} \X^{(i)}_\T$.

\subsection{A Remark On Decision Versus Search}\label{sect:remark_dec_search}

\Cref{ALGO:solve}
tackles
the \textsc{Search-Task} of \mainproblem.
If
we merely want to answer the \textsc{Decision-Task} instead,
we can simplify the algorithm
by immediately returning \texttt{YES} if $|\M|>|\nonviableset|$ at line~\ref{algo:compression:l5} of \texttt{compression\_phase()}.
This is because in that case,
Theorem~\ref{thm:LehmanRon} guarantees the existence of a family of
$|\M|>|\nonviableset|$ vertex-disjoint paths in $\H_n$
that go from the current source frontier $\S$ to the target frontier $\T$,
which suffices to conclude that at least one of those paths avoids $\nonviableset$.
This simplification
improves the time complexity of our algorithm for solving the \textsc{Decision-Task} by a polynomial factor
over that for the \textsc{Search-Task}.

\subsection{Correctness Analysis of Algorithm~\ref{ALGO:solve}}\label{subsect:correctness}
The present subsection aims to show that the procedure \texttt{solve\_\mainproblem()} is correct.
A formal statement of that is provided in the next theorem.

\begin{theorem}\label{thm:correctness_main}
Let $\mathcal{I}=\langle S, T, \F, n\rangle$ be any instance of \mainproblem.
Given $\mathcal{I}$ as input, the procedure \texttt{solve\_\mainproblem()} halts within a finite number of steps.
Moreover, it returns as output a directed path $\p$ in $\H_n$ that goes from source $S$ to target $T$ avoiding $\F$,
provided that at least one such path exists; otherwise, the output is simply \texttt{NO}.
\end{theorem}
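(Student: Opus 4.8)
The plan is to isolate a small set of loop invariants, show that each phase preserves them, and then read off termination, soundness and completeness. For the current frontiers $\S$ (at level $|S|+\ell_\uparrow$) and $\T$ (at level $|T|-\ell_\downarrow$) I would maintain: \emph{(I1)} every $X\in\S$ is reachable from $S$ by a directed path in $\H_n$ avoiding $\F$ (so in particular $X\notin\F$); \emph{(I2)} every $Y\in\T$ reaches $T$ by a directed path in $\H_n$ avoiding $\F$; \emph{(I3)} if $\p^\star$ is \emph{any} directed $(S,T)$-path in $\H_n$ avoiding $\F$, then the unique vertex of $\p^\star$ at level $|S|+\ell_\uparrow$ lies in $\S$; and symmetrically \emph{(I3$'$)} the vertex of $\p^\star$ at level $|T|-\ell_\downarrow$ lies in $\T$. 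All four hold at initialisation ($\S=\{S\}$, $\T=\{T\}$, $\ell_\uparrow=\ell_\downarrow=0$). Since \texttt{next\_step\_bfs()} replaces a frontier by $\bigcup_v N^{\texttt{out}}(v)\setminus\F$ (resp.\ $N^{\texttt{in}}(v)\setminus\F$), a one-step induction shows every \texttt{double-bfs\_phase()} preserves them: \emph{soundness} because each newly collected vertex is non-forbidden and extends an $\F$-avoiding path by one arc, and \emph{completeness} because \emph{all} non-forbidden neighbours are kept, so the next-level vertex of $\p^\star$ is never discarded. When a frontier exceeds the threshold $|\F|\,d_{S,T}$ the while-guard at line~\ref{algo:bfs:l1} fails and that frontier is left untouched, so its invariants persist unchanged.

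The delicate point, and the main obstacle, is that \texttt{compression\_phase()} preserves (I3$'$) while advancing only a \emph{subset} of the source frontier. Throughout it, $\T$ and $\ell_\downarrow$ are frozen and the routine repeatedly advances a small source frontier, accumulating the compression $\T'$. The structural fact I would exploit is that an upward $\H_n$-path from an $\S$-vertex $X$ to a $\T$-vertex $Y$ can only add ground-set elements, hence $X\subset Y$, i.e.\ $\{X,Y\}$ is exactly an edge of the bipartite graph $\G$ built at line~\ref{algo:compression:l3}. When $|\M|\le|\F|$, $\M$ is maximum, so by K\"onig's theorem~\cite{Diestel2005} the cover $\X$ has size $|\M|\le|\F|$; covering $\{X,Y\}$ forces $X\in\X_\S$ or $Y\in\X_\T$. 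I would therefore carry the invariant: \emph{for every $\F$-avoiding $(S,T)$-path $\p^\star$, either its vertex at the current source level lies in the advancing source frontier, or its vertex at level $|T|-\ell_\downarrow$ already lies in $\T'$}. The base case is (I3) on entry; the inductive step is this cover dichotomy, the first alternative being pushed one level up by the BFS at line~\ref{algo:compression:l14} via BFS-completeness, the second depositing the target-level vertex of $\p^\star$ into $\T'\supseteq\X_\T$. Since $\ell_\uparrow$ strictly increases each inner iteration and is bounded by $d_{S,T}$, there are at most $d_{S,T}$ iterations, each adding $|\X_\T|\le|\F|$ vertices, whence $|\T'|\le|\F|\,d_{S,T}$. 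At the inner exits the first alternative becomes impossible, so every valid path meets $\T'$: this is exactly (I3$'$) for the compressed frontier, which is what licenses the outer loop to replace $\T$ by $\T'$. In the complementary branch $|\M|=|\F|+1$, the matching is a bijection $\M_\S\to\M_\T$ with $X\subset Y$, so $\langle \M_\S,\M_\T,\M,n\rangle$ is a Lehman-Ron tuple and \Cref{thm:algo_lehmanron} returns $|\M|$ vertex-disjoint directed paths covering $\M_\S\cup\M_\T$; as the $|\F|$ forbidden vertices meet at most $|\F|<|\M|$ of these disjoint paths, some $\p_i$ avoids $\F$, and prefixing an $S$-to-$v_0$ path via (I1) and suffixing a $v_k$-to-$T$ path via (I2) gives a genuine, strictly level-increasing, $\F$-avoiding $(S,T)$-path (\Cref{thm:LehmanRon} alone already suffices for the \textsc{Decision-Task}).

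For termination I would use the potential $\ell_\uparrow+\ell_\downarrow$, which is non-decreasing and bounded by $d_{S,T}$. Each \texttt{bfs\_phase()} performs at most $d_{S,T}$ single-level steps, and each compression inner loop runs at most $d_{S,T}$ times by the strict increase of $\ell_\uparrow$. An outer iteration can fail to increase the potential only when both frontiers are oversized, and one verifies that entering \texttt{compression\_phase()} indeed forces $|\S|,|\T|>|\F|\,d_{S,T}$; but then compression shrinks the target frontier to a small $\T'$, so the subsequent $\texttt{BFS}_\downarrow$ takes at least one step and strictly increases $\ell_\downarrow$. Hence the potential strictly increases within a bounded number of consecutive outer iterations, and \texttt{solve\_\mainproblem()} halts after finitely many steps.

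Finally I would assemble the conclusion. Termination leaves only three exits. \emph{Soundness of} \texttt{YES}: at line~\ref{algo:solve:l10} the levels meet and $\S\cap\T\neq\emptyset$, so for $v\in\S\cap\T$ the $S\rightsquigarrow v$ path from (I1) concatenated with the $v\rightsquigarrow T$ path from (I2) is a valid path (both halves strictly increase cardinality and meet only at $v$); the line~\ref{algo:solve:l12} case is the disjoint-path construction above. \emph{Soundness of} \texttt{NO}: by (I3) and (I3$'$), any valid $\p^\star$ would place a vertex in $\S$ at level $|S|+\ell_\uparrow$ and one in $\T$ at level $|T|-\ell_\downarrow$, so each trigger of line~\ref{algo:solve:l7}---namely $\S=\emptyset$, $\T=\emptyset$, or levels met with $\S\cap\T=\emptyset$ (where those two vertices coincide and would force $\S\cap\T\neq\emptyset$)---contradicts the existence of $\p^\star$; thus \texttt{NO} implies no valid path exists. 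Combining, if a valid path exists the algorithm cannot answer \texttt{NO} and, terminating, must return a (sound) path; if none exists it cannot return a path and hence returns \texttt{NO}. This yields \Cref{thm:correctness_main}, the only genuinely subtle ingredient being the compression invariant and its underlying subset/vertex-cover dichotomy.
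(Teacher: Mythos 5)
Your proposal is correct and takes essentially the same approach as the paper's own proof: your invariants (I1)--(I2) and (I3)--(I3$'$) are precisely the paper's Lemma~\ref{lemma:reconstruct_path} and Lemma~\ref{lemma:no_invariant}, your K\"{o}nig/vertex-cover dichotomy inside the compression phase reproduces Lemmas~\ref{lemma:inv:1}--\ref{lemma:inv:3} and Proposition~\ref{prop:compressed_frontier}, and your Lehman-Ron disjoint-path argument is Lemma~\ref{lemma:compression_path}. Your termination argument via the potential $\ell_\uparrow+\ell_\downarrow$ likewise matches Lemmas~\ref{lemma:halt_double-bfs}, \ref{lemma:pre_halt_compression}, \ref{lemma:halt_compression} and Proposition~\ref{prop:solve_halts_finite}, so your final assembly coincides with Propositions~\ref{prop:correctness_yes} and~\ref{prop:correctness_no}.
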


We are going to show a sequence of results that shall ultimately lead us to prove \Cref{thm:correctness_main}.
Hereafter, it is assumed that $\langle S, T, \F, n\rangle$ is an instance
(of \mainproblem) given as input to the \texttt{solve\_\mainproblem()} procedure. \Cref{lemma:halt_double-bfs,lemma:pre_halt_compression,lemma:halt_compression} below show that procedures \texttt{double-bfs\_phase()} and \texttt{compression\_phase()}, which are called by \texttt{solve\_\mainproblem()}, halt within a finite number of steps.

\begin{lemma}\label{lemma:halt_double-bfs}
Any invocation of \texttt{double-bfs\_phase()} halts within a finite number of steps.
In particular, the \texttt{while-loop} at line~\ref{algo:bfs:l1}
of the \texttt{bfs\_phase()} iterates at most $d_{S,T}$ times.
\end{lemma}
\begin{proof}
Consider the \texttt{while-loop} at line~\ref{algo:bfs:l1} of \texttt{bfs\_phase()}.
At each iteration of line~\ref{algo:bfs:l3}, the level counter $\ell_x$ gets incremented.
Notice that this is the only line at which $\ell_x$ may be modified,
and also notice that $\ell_y$ is never modified. Therefore, $\ell_x+\ell_y$ can only increase and not decrease.
Since the \texttt{while-loop} at line~\ref{algo:bfs:l1} of \texttt{bfs\_phase()} halts as soon as $\ell_x+\ell_y=d_{S,T}$,
the thesis follows.
\end{proof}

\begin{lemma}\label{lemma:pre_halt_compression}
Each iteration of the \texttt{while-loop} at line~\ref{algo:compression:l2} of \texttt{compression\_phase()} increases
$\ell_\uparrow + \ell_\downarrow$ by at least one unit,
either until $\ell_\uparrow + \ell_\downarrow=d_{S,T}$ or until the procedure halts by reaching either
line~\ref{algo:compression:l10}, line~\ref{algo:compression:l16} or line~\ref{algo:compression:l19}.
\end{lemma}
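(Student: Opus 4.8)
The plan is to reduce the whole claim to a single quantity, namely the number of times the inner while-loop of \texttt{bfs\_phase()} executes during the $\texttt{BFS}_\uparrow$ invoked at line~\ref{algo:compression:l14}. First I would record two structural facts that are immediate from the surrounding discussion: (i) throughout an entire call of \texttt{compression\_phase()} the counter $\ell_\downarrow$ never changes, because the precondition $|\T| > |\nonviableset|\, d_{S,T}$ makes the guard of \texttt{bfs\_phase()} at line~\ref{algo:bfs:l1} fail immediately for $\texttt{BFS}_\downarrow$; and (ii) the only place $\ell_\uparrow$ is modified is line~\ref{algo:bfs:l3} of \texttt{bfs\_phase()}, where it is incremented by exactly one per loop iteration. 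Hence the net change in $\ell_\uparrow + \ell_\downarrow$ across one pass of the outer while-loop at line~\ref{algo:compression:l2} equals the number of iterations performed by $\texttt{BFS}_\uparrow$, and it suffices to show that this number is at least one whenever the iteration does not return at line~\ref{algo:compression:l10}, \ref{algo:compression:l16}, or~\ref{algo:compression:l19}.

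Next I would establish, as a loop invariant, that $\ell_\uparrow + \ell_\downarrow < d_{S,T}$ holds at the entry of every iteration of the outer while-loop. For the base case, \texttt{compression\_phase()} is only reached from line~\ref{algo:solve:l11} of \texttt{solve\_\mainproblem()} after the guards at lines~\ref{algo:solve:l6} and~\ref{algo:solve:l8} have both failed, which jointly force $\ell_\uparrow + \ell_\downarrow \neq d_{S,T}$; combined with the fact that \texttt{bfs\_phase()} never lets $\ell_x + \ell_y$ exceed $d_{S,T}$ (it increments $\ell_x$ only while $\ell_x + \ell_y < d_{S,T}$, so $\ell_x + \ell_y \le d_{S,T}$ on exit), this yields the strict inequality. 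For the inductive step I would observe that a surviving iteration must pass both tests at lines~\ref{algo:compression:l15} and~\ref{algo:compression:l17}; a short case split on whether $\S^{(i+1)}\cap\T$ is empty shows that passing both tests is incompatible with $\ell^{(i+1)}_\uparrow + \ell_\downarrow = d_{S,T}$ (the empty case would fire line~\ref{algo:compression:l16}, the nonempty case line~\ref{algo:compression:l19}), so $\ell^{(i+1)}_\uparrow + \ell_\downarrow < d_{S,T}$ and the invariant propagates.

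With the invariant in hand, I would finish by a case analysis on $|\M|$. If $|\M| > |\nonviableset|$, the branch at line~\ref{algo:compression:l5} fires and the procedure returns at line~\ref{algo:compression:l10}, so there is nothing to prove. Otherwise $|\M| \le |\nonviableset|$, so $\M$ is a maximum matching and, by K\"onig's theorem, the cover $\X$ computed at line~\ref{algo:compression:l11} satisfies $|\X| = |\M| \le |\nonviableset|$, whence $|\X_\S| \le |\nonviableset| \le |\nonviableset|\, d_{S,T}$. I would then split on $\X_\S$: if $\X_\S = \emptyset$, then by fact (i) neither BFS performs any iteration, so line~\ref{algo:compression:l14} returns $\S^{(i+1)} = \emptyset$ and the iteration halts at line~\ref{algo:compression:l16}; if $\X_\S \neq \emptyset$, then at the first evaluation of the guard at line~\ref{algo:bfs:l1} all three conjuncts hold ($1 \le |\X_\S|$, $|\X_\S| \le |\nonviableset|\, d_{S,T}$, and $\ell_\uparrow + \ell_\downarrow < d_{S,T}$ by the invariant), so the loop body runs at least once and $\ell_\uparrow$, hence $\ell_\uparrow + \ell_\downarrow$, increases by at least one.

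I expect the control-flow bookkeeping to be routine; the only genuinely delicate points are establishing the invariant $\ell_\uparrow + \ell_\downarrow < d_{S,T}$ — whose inductive step hinges on correctly reading the compound guards at lines~\ref{algo:compression:l15} and~\ref{algo:compression:l17} as forbidding $\ell_\uparrow + \ell_\downarrow = d_{S,T}$ on a surviving iteration — and not overlooking the corner case $\X_\S = \emptyset$, which does not increase the counter but instead triggers the halting branch at line~\ref{algo:compression:l16} and is therefore perfectly consistent with the statement.
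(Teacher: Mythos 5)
Your proof is correct and follows essentially the same route as the paper's: a case split on $|\M|$ (halting at line~\ref{algo:compression:l10} when $|\M|>|\nonviableset|$), K\"onig's theorem to get $|\X_\S|\leq|\X|=|\M|\leq|\nonviableset|$, the corner case $\X_\S=\emptyset$ (halting at line~\ref{algo:compression:l16}), and otherwise the observation that all conjuncts of the guard at line~\ref{algo:bfs:l1} of \texttt{bfs\_phase()} hold, so $\ell_\uparrow$ gets incremented. The only divergence is in handling the boundary case: the paper disposes of $\ell_\uparrow+\ell_\downarrow=d_{S,T}$ by noting that the procedure then halts at line~\ref{algo:compression:l16} or line~\ref{algo:compression:l19} (matching the lemma's ``until'' disjunction), whereas you rule that case out at loop entry via an explicit invariant $\ell_\uparrow+\ell_\downarrow<d_{S,T}$ --- a slightly stronger but equally valid bookkeeping choice.
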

\begin{proof} Consider any iteration of the \texttt{while-loop}
at line~\ref{algo:compression:l2} of \texttt{compression\_} \texttt{phase()}.
Let $\G$ be the bipartite graph computed at line~\ref{algo:compression:l3},
and let $\M$ be the matching of $\G$ computed at line~\ref{algo:compression:l4}.
If $|\M| > |\F|$, then line~\ref{algo:compression:l10} gets executed, so the procedure halts
within a finite number of steps by virtue of our discussion in \Cref{sect:VertexDisjointPaths}.
Otherwise $|\M|\leq |\F|$. Recall that, since $|\M|\leq |\F|$, then $\M$ is a maximum matching of $\G$;
also recall that $\X_\S = \X\cap\S$ where $\X$ is a minimum vertex cover of $\G$ (line~\ref{algo:compression:l12}).
Since $|\X|=|\M|$, then $|\X_\S|\leq|\X|=|\M|\leq|\F|$.
Moreover, since $|\M| \leq |\F|$, \texttt{double-bfs\_phase()} gets invoked
at line~\ref{algo:compression:l14} on input $\langle \X_\S, \T, \F, \ell_\downarrow, \ell_\uparrow, d_{S,T}, n \rangle$
and
halts within a finite number of steps by \Cref{lemma:halt_double-bfs}.
Let us analyze its behavior with respect to $\X_\S$.
If $\X_\S=\emptyset$, then \texttt{double-bfs\_phase()} returns an empty frontier set $\S$ as output,
which leads to the termination of \texttt{compression\_phase()} at line~\ref{algo:compression:l16}.
Moreover, if $\ell_\uparrow + \ell_\downarrow=d_{S,T}$,
then
\texttt{compression\_phase()} halts
either at line~\ref{algo:compression:l16} or at line~\ref{algo:compression:l19}.
Otherwise, we must have $1 \leq |\X_\S|\leq |\F|$ and $\ell_\uparrow+\ell_\downarrow < d_{S,T}$,
in that case the condition for entering the
\texttt{while-loop} at line~\ref{algo:bfs:l1} of the \texttt{bfs\_phase()} is satisfied;
therefore, at line~\ref{algo:bfs:l3} of \texttt{bfs\_phase()}, the level counter $\ell_\uparrow$
gets incremented.
This implies the thesis.
\end{proof}

\begin{lemma}\label{lemma:halt_compression}
Any invocation of \texttt{compression\_phase()} halts within a finite number of steps.
In particular, the \texttt{while-loop} at line~\ref{algo:compression:l2}
of the \texttt{compression\_phase()} iterates at most $d_{S,T}$ times.
\end{lemma}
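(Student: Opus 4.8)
The plan is to obtain \Cref{lemma:halt_compression} essentially as a corollary of \Cref{lemma:pre_halt_compression}, by exhibiting $\ell_\uparrow + \ell_\downarrow$ as a bounded, strictly increasing integer monovariant of the \texttt{while-loop} at line~\ref{algo:compression:l2}. First I would collect the two ingredients that are already in place. On the one hand, \Cref{lemma:pre_halt_compression} tells us that every iteration of that loop raises $\ell_\uparrow + \ell_\downarrow$ by at least one unit, unless the procedure halts during that same iteration at line~\ref{algo:compression:l10}, line~\ref{algo:compression:l16}, or line~\ref{algo:compression:l19}. On the other hand, each iteration performs only finitely much work: the graph $\G$ (line~\ref{algo:compression:l3}), the matching $\M$ (line~\ref{algo:compression:l4}) and, in the case $|\M|\le|\F|$, the minimum vertex cover $\X$ (line~\ref{algo:compression:l11}) are all computed in finite time, and the single call to \texttt{double-bfs\_phase()} at line~\ref{algo:compression:l14} halts by \Cref{lemma:halt_double-bfs}; hence each individual iteration terminates.

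The remaining task is to bound the number of iterations, and here I would argue that $\ell_\uparrow + \ell_\downarrow$ can never exceed $d_{S,T}$. Indeed, this sum is a nonnegative integer; it is strictly below $d_{S,T}$ at the moment \texttt{compression\_phase()} is entered (this is precisely the case the phase is designed to handle); and whenever it would reach $d_{S,T}$ the loop terminates, since the guards at lines~\ref{algo:compression:l15} and~\ref{algo:compression:l17} together cover both sub-cases $\S\cap\T=\emptyset$ and $\S\cap\T\neq\emptyset$ of the condition $\ell_\uparrow + \ell_\downarrow = d_{S,T}$, returning either $\T'$ at line~\ref{algo:compression:l16} or a path at line~\ref{algo:compression:l19}. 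Since the loop thus keeps $\ell_\uparrow + \ell_\downarrow$ in the range $\{0,1,\ldots,d_{S,T}\}$ while \Cref{lemma:pre_halt_compression} forces a strict increase of at least one per surviving iteration, no more than $d_{S,T}$ iterations can occur before the procedure either exits at line~\ref{algo:compression:l10} or reaches $\ell_\uparrow + \ell_\downarrow = d_{S,T}$ and halts.

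Putting the two parts together yields the thesis: the \texttt{while-loop} runs at most $d_{S,T}$ times, and each of its iterations halts within finitely many steps, so the whole invocation of \texttt{compression\_phase()} halts within a finite number of steps. I do not expect a genuine obstacle in this argument, as the substantive per-iteration progress was already established in \Cref{lemma:pre_halt_compression}; the only point that deserves a careful sentence is the verification that the halting guards at lines~\ref{algo:compression:l15}--\ref{algo:compression:l19}, together with the early exit at line~\ref{algo:compression:l10}, really do clamp the monovariant at $d_{S,T}$, so that the counting argument closes without an off-by-one slip.
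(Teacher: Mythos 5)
Your proposal is correct and follows essentially the same route as the paper's proof: both invoke \Cref{lemma:pre_halt_compression} for the per-iteration increase of $\ell_\uparrow + \ell_\downarrow$, then observe that the guards at lines~\ref{algo:compression:l15} and~\ref{algo:compression:l17} force a halt (at line~\ref{algo:compression:l16} or~\ref{algo:compression:l19}) once $\ell_\uparrow + \ell_\downarrow = d_{S,T}$, bounding the loop by $d_{S,T}$ iterations. Your additional remarks --- that each individual iteration terminates in finite time and that the monovariant is clamped to $\{0,\ldots,d_{S,T}\}$ --- are details the paper leaves implicit, but they do not change the argument.
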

\begin{proof}
Firstly, recall Lemma~\ref{lemma:pre_halt_compression}.
Then, notice that as soon as $\ell_\uparrow + \ell_\downarrow=d_{S,T}$
the \texttt{compression\_phase()}
then halts either at line~\ref{algo:compression:l16} (if $\S\cap\T=\emptyset$)
or at line~\ref{algo:compression:l19} (if $\S\cap\T\neq\emptyset$).
This implies that the \texttt{while-loop} at
line~\ref{algo:compression:l2} of \texttt{compression\_phase()}
iterates at most $d_{S,T}$ times.
\end{proof}

We now prove some useful properties of \texttt{compression\_phase()} and \texttt{solve\_\mainproblem()}.

\begin{lemma}\label{lemma:reconstruct_path}
The following invariant is maintained at each line of \texttt{solve\_\mainproblem()}
and at each line of \texttt{compression\_phase()}.
For every $S'\in\S$ there exists a directed path in $\H_n$ that goes from $S$ to $S'$ avoiding $\F$;
similarly, for every $T'\in\T$ there is a directed path in $\H_n$ that goes from $T'$ to $T$ avoiding $\F$.
\end{lemma}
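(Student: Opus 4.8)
The plan is to establish the invariant by induction on the execution steps of the two procedures, tracking how the frontier sets $\S$ and $\T$ evolve. The key observation is that this invariant holds trivially at initialization (when $\S=\{S\}$ and $\T=\{T\}$ in lines~\ref{algo:solve:l2}--\ref{algo:solve:l3} of \texttt{solve\_\mainproblem()}, since the empty path from $S$ to $S$ and from $T$ to $T$ clearly avoids $\F$), and that it is preserved by every operation that modifies $\S$ or $\T$. The bulk of the argument therefore consists in identifying the (finitely many) lines at which $\S$ or $\T$ can change, and verifying preservation at each such line.

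First I would catalog the modification points. The frontier sets are altered only through (i)~the \texttt{next\_step\_bfs()} subprocedure invoked within \texttt{bfs\_phase()} (line~\ref{algo:bfs:l2}), (ii)~the restriction $\S\leftarrow\X_\S$ fed into \texttt{double-bfs\_phase()} at line~\ref{algo:compression:l14} of \texttt{compression\_phase()}, and (iii)~the compression accumulation $\T'\leftarrow\T'\cup\X_\T$ at line~\ref{algo:compression:l13}. For case~(i), I would argue as follows: if $\S$ grows via $\texttt{BFS}_\uparrow$, then each new vertex $v'\in N^{\texttt{out}}(v)\setminus\F$ (for some $v\in\S$ already satisfying the invariant) is reached by an arc $(v,v')$ with $v'\notin\F$; appending $v'$ to the witnessing path from $S$ to $v$ yields a path from $S$ to $v'$ avoiding $\F$, which is exactly what the invariant demands. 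The symmetric argument with $N^{\texttt{in}}$ handles the $\texttt{BFS}_\downarrow$ update of $\T$. For case~(ii), passing from $\S$ to $\X_\S=\X\cap\S$ only \emph{shrinks} the frontier, so the invariant is inherited since each surviving element already carried its witnessing path. Case~(iii) concerns only the auxiliary accumulator $\T'$, whose elements are drawn from the current $\T$; since those already satisfy the invariant, so do the elements collected into $\T'$.

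The main subtlety — and what I expect to be the chief obstacle — lies in case~(ii): when \texttt{double-bfs\_phase()} is re-invoked at line~\ref{algo:compression:l14} on the \emph{restricted} source frontier $\X_\S$ rather than the full $\S$, one must confirm that the invariant is not merely inherited but also correctly propagated through the subsequent BFS steps that operate on $\X_\S$. Here it is crucial to observe that \texttt{double-bfs\_phase()} treats its first argument as the new value of $\S$, so the witnessing paths for elements of $\X_\S\subseteq\S$ are precisely the ones guaranteed by the invariant on the prior $\S$; the BFS extension from this point onward then falls back into case~(i). A careful bookkeeping point is that $\ell_\uparrow$ and $\ell_\downarrow$ correctly track the cardinalities, so ``distance from $S$'' and ``distance to $T$'' in the invariant line up with the actual lengths of the witnessing paths; this is guaranteed by \Cref{lemma:halt_double-bfs} and the cardinality-uniformity remarks preceding it.

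Finally, I would wrap up by noting that since both procedures halt in finitely many steps (\Cref{lemma:halt_double-bfs} and \Cref{lemma:halt_compression}), and since the invariant holds at initialization and is preserved across every modification of $\S$ and $\T$ (as enumerated above), a straightforward induction on the sequence of executed lines establishes that the invariant holds at \emph{each} line of both \texttt{solve\_\mainproblem()} and \texttt{compression\_phase()}, completing the proof. The argument is essentially a structural case analysis; no clever construction is needed beyond the path-extension step in case~(i), which is where the directedness of $\H_n$ and the avoidance of $\F$ interact directly.
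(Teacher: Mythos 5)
Your proposal is correct and follows essentially the same route as the paper's proof: induction over the execution, with the base case at initialization ($\S=\{S\}$, $\T=\{T\}$) and preservation because new vertices enter the frontiers only through \texttt{next\_step\_bfs()}, which by construction (line~\ref{algo:next_bfs:l3}) extends witnessing paths while excluding $\F$. The paper's argument is just terser, leaving implicit your cases~(ii) and~(iii) --- that restriction to $\X_\S$ and accumulation into $\T'$ only select subsets of frontiers already satisfying the invariant --- which you spell out explicitly.
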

\begin{proof}
At the beginning of the procedure $\S=\{S\}$ and $\T=\{T\}$, so the thesis holds.
At each subsequent step, the only way in which a novel vertex can be added either to $\S$ or $\T$
is by invoking the \texttt{double\_bfs\_phase()},
which preserves connectivity and avoids $\F$ by construction at line~\ref{algo:next_bfs:l3} of \texttt{next\_step\_bfs()}.
\end{proof}

\begin{lemma}\label{lemma:compression_path}
Assume that any invocation of \texttt{compression\_phase()} halts by returning $\langle\texttt{YES},\p\rangle$.
Then $\p$ is a directed path in $\H_n$ that goes from source $S$ to target $T$ avoiding $\F$.
\end{lemma}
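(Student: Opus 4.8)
The plan is to split the argument according to the two lines at which \texttt{compression\_phase()} can return $\langle\texttt{YES},\p\rangle$, namely line~\ref{algo:compression:l10} and line~\ref{algo:compression:l19}, and to show in each case that the path $\p$ produced by \texttt{reconstruct\_path()} is a monotone directed path from $S$ to $T$ avoiding $\F$. Throughout I would rely on the key structural fact that every arc of $\H_n$ raises cardinality by exactly one, so that a directed path $v_0v_1\cdots v_k$ satisfies $|v_{i+1}|=|v_i|+1$; consequently every directed path in $\H_n$ is automatically simple, and two directed path-segments that share only a single vertex of matching cardinality concatenate into a single directed path. I would also recall from the surrounding description that all vertices of $\S$ sit at level $|S|+\ell_\uparrow$ and all vertices of $\T$ at level $|T|-\ell_\downarrow$.

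I would begin with the simpler case, line~\ref{algo:compression:l19}. Here the guard at line~\ref{algo:compression:l17} gives $\ell_\uparrow+\ell_\downarrow=d_{S,T}$ and $\S\cap\T\neq\emptyset$. Picking any $w\in\S\cap\T$, I would invoke Lemma~\ref{lemma:reconstruct_path} to obtain a directed path from $S$ to $w$ avoiding $\F$ (since $w\in\S$) and a directed path from $w$ to $T$ avoiding $\F$ (since $w\in\T$); these are exactly what \texttt{reconstruct\_path()} assembles from the parent-pointer \texttt{map} maintained during the BFSs, whose vertices are never in $\F$ by line~\ref{algo:next_bfs:l3} of \texttt{next\_step\_bfs()}. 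The identity $\ell_\uparrow+\ell_\downarrow=d_{S,T}=|T|-|S|$ forces $|w|=|S|+\ell_\uparrow=|T|-\ell_\downarrow$, so the two segments share exactly the vertex $w$ and their cardinality ranges $[|S|,|w|]$ and $[|w|,|T|]$ meet only at $|w|$. Their concatenation is therefore a single simple directed path from $S$ to $T$ avoiding $\F$.

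The case line~\ref{algo:compression:l10} is reached only when $|\M|=|\F|+1>|\F|$. By Theorem~\ref{thm:algo_lehmanron} (whose correctness rests on Theorem~\ref{thm:LehmanRon}), the call at line~\ref{algo:compression:l8} returns $|\M|$ pairwise vertex-disjoint directed paths $\p_1,\ldots,\p_{|\M|}$ whose union contains $\M_\S\subseteq\S$ and $\M_\T\subseteq\T$, with each $\p_i$ running from a vertex of $\M_\S$ to a vertex of $\M_\T$. The crux is a pigeonhole argument: because the paths are vertex-disjoint, every forbidden vertex lies on at most one of them, so at most $|\F|$ of the paths meet $\F$; as there are $|\F|+1$ paths, at least one $\p_i=v_0v_1\cdots v_k$ avoids $\F$ entirely. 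I would then apply Lemma~\ref{lemma:reconstruct_path} again to get an $\F$-avoiding path from $S$ to $v_0$ (using $v_0\in\M_\S\subseteq\S$) and one from $v_k$ to $T$ (using $v_k\in\M_\T\subseteq\T$), and concatenate these with $\p_i$. Since \texttt{compression\_phase()} is entered only when $\ell_\uparrow+\ell_\downarrow<d_{S,T}$, the three cardinality ranges $[|S|,|v_0|]$, $[|v_0|,|v_k|]$, $[|v_k|,|T|]$ are consecutive and overlap only at the shared endpoints $v_0,v_k$, so the concatenation is again a single simple directed path from $S$ to $T$ avoiding $\F$, precisely the output of \texttt{reconstruct\_path()} at line~\ref{algo:compression:l9}.

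The main obstacle I anticipate is not the \emph{existence} of the constituent segments, which is handed to us by Lemma~\ref{lemma:reconstruct_path} and the Lehman-Ron machinery, but the bookkeeping needed to certify that gluing the segments together yields a genuine \emph{simple} directed path rather than a walk with repeated vertices. This is exactly where the strict monotonicity of cardinality along arcs of $\H_n$, together with the level identity $|w|=|S|+\ell_\uparrow=|T|-\ell_\downarrow$ in the case of line~\ref{algo:compression:l19} and the strict inequality $\ell_\uparrow+\ell_\downarrow<d_{S,T}$ in the case of line~\ref{algo:compression:l10}, does the decisive work; I would take care to state these level computations explicitly so that the interiors of adjacent segments are seen to occupy disjoint cardinality levels.
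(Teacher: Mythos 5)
Your proposal is correct and follows essentially the same route as the paper's own proof: the same case split on whether \texttt{compression\_phase()} returns at line~\ref{algo:compression:l10} or line~\ref{algo:compression:l19}, the same use of Lemma~\ref{lemma:reconstruct_path} for the end segments, and the same counting over the Lehman-Ron vertex-disjoint paths (Theorem~\ref{thm:LehmanRon}) to extract one path avoiding $\F$. Your two additions --- making the pigeonhole step explicit and verifying via cardinality monotonicity that the concatenated segments form a simple directed path --- are worthwhile elaborations of points the paper leaves implicit, but they do not change the argument's structure.
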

\begin{proof}
If \texttt{compression\_phase()} returns $\p$ as output, then the last iteration of the
\texttt{while-loop} at line~\ref{algo:compression:l2} must reach
either line~\ref{algo:compression:l10} or line~\ref{algo:compression:l19}:
\begin{enumerate}
\item
Assume that line~\ref{algo:compression:l10} is reached at the last iteration.
Then, during that iteration, the matching $\M$ (computed at line~\ref{algo:compression:l4} on input $\G$) has size $|\M| > |\F|$.
Recall that $\G$ is a bipartite graph on bipartition $(\S, \T)$.
Let $\M_\S$ (resp. $\M_\T$ be the subset of all vertices in $\S$ (resp. $\T$) that
belong
to some edge in $\M$. 
Then, by Theorem~\ref{thm:LehmanRon}, there exist $|\M|$ vertex-disjoint directed paths in $\H_n$, say
$\p_1, \p_2, \ldots, \p_{|\M|}$, whose union contains all the vertices in $\M_\S$ and $\M_\T$.
Since $|\M|>|\F|$, at least one of those paths --- say, $\p_i=v_0\cdots v_k$ --- must avoid $\F$.
By Proposition~\ref{lemma:reconstruct_path}, the procedure
\texttt{reconstruct\_path()} (invoked at line~\ref{algo:compression:l9}) is
able
to compute a directed path $\p_{S, v_0}$ in $\H_n$ that goes from $S$ to $v_0$ avoiding $\F$
(because $v_0\in\S$, being the first step of $\p_i$), and it is also
able
to compute a directed path $\p_{v_k,T}$
that goes from $v_k$ to $T$ avoiding $\F$ (because $v_k\in\T$, being the last step of $\p_i$).
Let $\p=\p_{S, v_0} \p_i \p_{v_k,T}$ be the directed path obtained by concatenation.
\texttt{compression\_phase()} then returns $\p$ at line~\ref{algo:compression:l10}.
\item
Assume that line~\ref{algo:compression:l19} is reached at the last iteration.
Then, at that iteration, the condition checked at line~\ref{algo:compression:l17} of \texttt{compression\_phase()} must be satisfied;
that is, we have $\ell_\uparrow + \ell_\downarrow=d_{S,T}$ and $\S\cap\T\neq\emptyset$.
Let $X$ be an arbitrary vertex in $\S\cap\T$.
By \Cref{lemma:reconstruct_path}, there exists at least one directed path $\p_{S,X}$ in $\H_n$
that goes from $S$ to $X$ avoiding $\F$ (because $X\in\S$);
similarly, there exists at least one directed path $p_{X,T}$ in $\H_n$ that goes from $X$ to $T$ avoiding $\F$ (because $X\in\T$).
Therefore, during that iteration, the procedure \texttt{reconstruct\_path()}
(invoked at line~\ref{algo:compression:l18}) is
able
to compute a path $\p=\p_{S,X}\p_{X,T}$ that goes from $S$ to $X$, and then from $X$ to $T$, which is the result
returned by \texttt{compression\_phase()} at line~\ref{algo:compression:l19}.
\end{enumerate}
\end{proof}

The following result shows two useful properties of the frontier set returned by \texttt{compression\_phase()}, for which we will need additional notation.
Denote by $\max_i$ be the number of times that the \texttt{while-loop}
at line~\ref{algo:compression:l2} gets iterated throughout
the whole execution of the \texttt{compression\_phase()}.

Also, let us introduce the following notation, for each index $i\in [\max_i]$:
\begin{itemize}
\item let $\X^{(i)}$ be the vertex cover that is computed during the $i$-th iteration of line~\ref{algo:compression:l11};
\item let $\X_\S^{(i)}$ and $\X_\T^{(i)}$ be the sets computed during the $i$-th iteration of line~\ref{algo:compression:l12};
\item let $\S^{(i)}$ be the novel frontier set that is computed during the $i$-th iteration of line~\ref{algo:compression:l14};
\end{itemize}
Moreover, we assume the notation $\S^{(0)}=\S$,
so that $\X_\S^{(i)}=\S^{(i-1)}\cap \X^{(i)}$ holds for each iteration $i\in [\max_i]$.
Notice that, since $|\T| > |\F|\, d_{S,T}$ holds by hypothesis,
then $\T$ is not modified, at line~\ref{algo:compression:l14}, by the invocation of \texttt{double-bfs\_phase()}.
Indeed, $\T$ is never modified throughout the \texttt{compression\_phase()}.
Nevertheless, a novel set $\T'\subset\T$ gets constructed and possibly returned.

\begin{proposition}\label{prop:compressed_frontier}
Assume that the procedure \texttt{compression\_phase()} is invoked on input
$\langle \S, \T, \F, \ell_\uparrow, \ell_\downarrow, d_{S,T}, n \rangle$,
where $|\T| > |\F|\, d_{S,T}$ is required to hold as a pre-condition.
Also, assume that the procedure halts at line~\ref{algo:compression:l16}, returning a novel frontier set $\T'\subset \T$.
Then, the following properties hold:
\begin{enumerate}
\item $|\T'|\leq |\F|\, d_{S,T}$;
\item if $\p$ is any directed path  in $\H_n$ that goes from $\S$ to $\T$ avoiding $\F$, then $\p$ goes from $\S$ to $\T'$.
\end{enumerate}
\end{proposition}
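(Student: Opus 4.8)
The plan is to prove the two properties separately, using the notation $\max_i$, $\X^{(i)}$, $\X_\S^{(i)}$, $\X_\T^{(i)}$, $\S^{(i)}$ fixed just above, the convention $\S^{(0)}=\S$, and the fact that the set returned at line~\ref{algo:compression:l16} is exactly $\T'=\bigcup_{i=1}^{\max_i}\X_\T^{(i)}$ by line~\ref{algo:compression:l13}.

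For the cardinality bound~(1), I would first note that reaching line~\ref{algo:compression:l11} at an iteration $i$ means the test at line~\ref{algo:compression:l5} failed, so $|\M^{(i)}|\leq|\F|$; since the cut-off $|\F|+1$ was therefore never attained, $\M^{(i)}$ is in fact a maximum matching of $\G^{(i)}$, and K\"onig's theorem~\cite{Diestel2005} gives $|\X^{(i)}|=|\M^{(i)}|\leq|\F|$. Hence $|\X_\T^{(i)}|\leq|\X^{(i)}|\leq|\F|$ for every $i$. Combining this with Lemma~\ref{lemma:halt_compression}, which caps the number of iterations at $\max_i\leq d_{S,T}$, the union bound gives $|\T'|\leq\sum_{i=1}^{\max_i}|\X_\T^{(i)}|\leq d_{S,T}\,|\F|$, as wanted.

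For the redirection property~(2), the engine is the \emph{pivotal} remark already made in the text: in the graph $\G^{(i)}$ on bipartition $(\S^{(i-1)},\T)$, an edge joins $U\in\S^{(i-1)}$ to $V\in\T$ exactly when $U\subset V$, and since every $\H_n$-path from $\S^{(i-1)}$ to $\T$ adds one element at a time its two ends form such an edge; as $\X^{(i)}$ is a vertex cover, each such edge is met by $\X_\S^{(i)}$ or by $\X_\T^{(i)}$. I would convert this into an induction that follows the given path $\p$ through the successive frontiers. Viewing $\p$ as a monotone chain from $\sigma\in\S$ to $\tau\in\T$, I keep the invariant that $\p$ crosses the level of $\S^{(i-1)}$ at a vertex $\sigma_{i-1}\in\S^{(i-1)}$, the base case $\sigma_0=\sigma\in\S^{(0)}=\S$ being immediate. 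At iteration $i$ the covered edge $\{\sigma_{i-1},\tau\}$ forces one of two things: if $\tau\in\X_\T^{(i)}$ we are done, since $\X_\T^{(i)}\subseteq\T'$; otherwise $\sigma_{i-1}\in\X_\S^{(i)}$ seeds the $\texttt{BFS}_\uparrow$ at line~\ref{algo:compression:l14}, and because that search gathers \emph{every} vertex reachable from $\X_\S^{(i)}$ at its terminal level while avoiding $\F$, the crossing vertex $\sigma_i$ of $\p$ at that level lies in $\S^{(i)}$, restoring the invariant.

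To finish I would exploit the way the loop exits at line~\ref{algo:compression:l16}, namely (line~\ref{algo:compression:l15}) either $\S^{(\max_i)}=\emptyset$, or $\ell_\uparrow^{(\max_i)}+\ell_\downarrow=d_{S,T}$ with $\S^{(\max_i)}\cap\T=\emptyset$. Were the ``otherwise'' branch to hold at every iteration, the invariant would place $\sigma_{\max_i}\in\S^{(\max_i)}$: this contradicts $\S^{(\max_i)}=\emptyset$ in the first case, and in the second case the common level $|S|+\ell_\uparrow^{(\max_i)}$ equals $|\tau|$, so $\sigma_{\max_i}=\tau$ and thus $\tau\in\S^{(\max_i)}\cap\T$, again a contradiction. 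Hence the ``$\tau\in\X_\T^{(i)}$'' branch must fire at some iteration, giving $\tau\in\T'$, i.e.\ $\p$ goes from $\S$ to $\T'$. I expect the main obstacle to be exactly this closing bookkeeping: one must check that a single \texttt{double-bfs\_phase()} call may lift $\ell_\uparrow$ by several levels yet still records $\p$'s crossing vertex in $\S^{(i)}$, and that \emph{both} termination alternatives at line~\ref{algo:compression:l15} clash with the invariant --- in particular the boundary case in which the two frontiers meet precisely at the level of $\tau$.
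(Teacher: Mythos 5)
Your proof is correct, and its two halves match the paper's at the level of ideas, though part~(2) is organized quite differently. Part~(1) is the paper's argument verbatim: failure of the test at line~\ref{algo:compression:l5} means the cut-off $|\F|+1$ was never reached, so each $\M^{(i)}$ is a maximum matching with $|\M^{(i)}|\leq|\F|$; K\"onig's theorem gives $|\X^{(i)}|=|\M^{(i)}|\leq|\F|$, and \Cref{lemma:halt_compression} caps $\max_i$ at $d_{S,T}$. For part~(2), you and the paper share the same engine --- the vertex-cover dichotomy, the completeness of the BFS frontier computed at line~\ref{algo:compression:l14}, and a final clash with the exit condition at line~\ref{algo:compression:l15} --- but the paper routes it through three set-level lemmas: \Cref{lemma:inv:1} (the dichotomy), \Cref{lemma:inv:2} (a ``pullback'' lemma for paths leaving $\S^{(i-1)}$), and \Cref{lemma:inv:3} (an induction showing that any $\F$-avoiding path from $\S$ to $\T$ goes either from $\X^{(i)}_\S$ to $\T$ or from $\S$ to $\bigcup_{j\leq i}\X^{(j)}_\T$), after which the first alternative is ruled out at $i=\max_i$. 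Your single induction on the crossing vertex $\sigma_i$ of the fixed monotone chain $\p$, with invariant $\sigma_i\in\S^{(i)}$, is a vertex-level reformulation of \Cref{lemma:inv:3} that makes the analogue of \Cref{lemma:inv:2} unnecessary: since you track one path that starts in $\S$ by hypothesis, there is nothing to pull back. What this buys is precision --- the paper's ``goes from'' bookkeeping is delicate (read literally, \Cref{lemma:inv:2} asserts that a path starting in $\S^{(i-1)}$ ``goes from $\S$'', which only parses under a loose reading), whereas your invariant is unambiguous, and your treatment of the boundary exit case (the frontiers meeting at the level of $\tau$, forcing $\tau=\sigma_{\max_i}\in\S^{(\max_i)}\cap\T=\emptyset$) is exactly the contradiction the paper derives. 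Both arguments conclude identically: $\tau$ must be captured by some $\X^{(i)}_\T\subseteq\T'$.
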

\begin{proof}
Firstly notice that, if an invocation of the \texttt{compression\_phase()}
halts at line~\ref{algo:compression:l16} by returning a novel frontier set $\T'\subset \T$,
this means that neither line~\ref{algo:compression:l10} nor
line~\ref{algo:compression:l19} are ever reached throughout that invocation.
In particular this implies that, at each iteration $i$ of
the \texttt{while-loop} at line~\ref{algo:compression:l2},
the maximum matching $\M^{(i)}$ (computed at line~\ref{algo:compression:l4})
has size $|\M^{(i)}|\leq |\F|$; this fact is assumed throughout the whole proof.

%
\begin{enumerate}
    \item \emph{Proof of (1).} At each iteration $i\in[\max_i]$, the minimum vertex cover $\X^{(i)}$ has size:
\[|\X^{(i)}|=|\M^{(i)}|\leq |\F|.\]
Since $\X^{(i)}_\T=\X^{(i)}\cap\T$ at line~\ref{algo:compression:l12}, then $|\X^{(i)}_\T| \leq |\X^{(i)}| \leq |\F|$.
Moreover, recall that $\T'$ gets enriched by $\X^{(i)}$ at each iteration of line~\ref{algo:compression:l13},
so that the following holds at the termination of the \texttt{compression\_phase()}:
\[\T'=\bigcup_{i=1}^{\max_i} \X^{(i)}_\T.\] Also recall that, by Lemma~\ref{lemma:halt_compression},
the \texttt{while-loop} at line~\ref{algo:compression:l2} can be iterated at most $d_{S,T}$ times, so that $\max_i\leq d_{S,T}$.
Therefore, when \texttt{compression\_phase()} terminates, we have $|\T'|\leq |\F|\, d_{S,T}$.
%

\item \emph{Proof of (2).} In order to prove (2),
we exhibit a number of invariants which hold for each iteration of the \texttt{while-loop} at line~\ref{algo:compression:l2} of \texttt{compression\_phase()}.
In what follows, we assume that the procedure \texttt{compression\_phase()} gets invoked on input
$\langle \S,\T,\F,\ell_\uparrow, \ell_\downarrow, d_{S,T}, n \rangle$,
and that $\S^{(0)}=\S$ holds by notational convention.

\begin{itemize}
\item[] \begin{lemma}\label{lemma:inv:1}
Let $i\in[\max_i]$ be any iteration of the \texttt{while-loop} at line~\ref{algo:compression:l2} of \texttt{compression\_phase()}.
Let $\p$ be any directed path in $\H_n$ that goes from $\S^{(i-1)}$ to $\T$.
Then $\p$ goes either from $\X^{(i)}_\S$ to $\T$ or from $\S^{(i-1)}\setminus \X^{(i)}_\S$ to $\X^{(i)}_\T$.
In other words, there exists no directed path in $\H_n$ that
goes from $\S^{(i-1)}\setminus \X^{(i)}_\S$ to $\T\setminus \X^{(i)}_\T$.
\end{lemma}
\begin{proof}
Recall that $\X^{(i)}$ is a vertex cover of the bipartite graph defined as $\G^{(i)}=((\S^{(i-1)}, \T), \subset)$,
which is constructed during the $i$-th iteration of line~\ref{algo:compression:l3} within the procedure \texttt{compression\_phase()}.
Also, $\X^{(i)}_\S = \X^{(i)}\cap \S^{(i-1)}$ and $\X^{(i)}_\T = \X^{(i)}\cap \T$,
so that the existence of any directed path in $\H_n$ going from $\S^{(i-1)}\setminus \X^{(i)}_\S$ to $\T\setminus \X^{(i)}_\T$
would imply the existence of some edge of $\G^{(i)}$ that would be uncovered by $\X^{(i)}$,
contradicting the fact that $\X^{(i)}$ is vertex cover of $\G^{(i)}$.
\end{proof}

\figref{fig:undirected_bipartite_vertex_cover} illustrates the intuition underlying Lemma~\ref{lemma:inv:1}.

\begin{figure}[!htb]
    \centering
    \begin{tikzpicture}[scale=.8,transform shape,>=stealth]
    \node[circle, fill=black!30, draw] (t1) {$t_1$};
    \node[circle, draw, scale=1.5] (t1_cover) {};
    \node[circle, fill=black!30, xshift=1ex, draw, right = of t1] (t2) {$t_2$};
    \node[circle, fill=black!30, xshift=1ex, draw, right = of t2] (t3) {$t_3$};
    \node[circle, xshift=1.75ex, draw, scale=1.5, right = of t2] (t3_cover) {};
    \node[circle, fill=black!30, xshift=1ex, draw, right = of t3] (t4) {$t_4$};
    \node[circle, xshift=1.75ex, draw, right = of t3, scale=1.5] (t4_cover) {};
    \node[circle, fill=black!30, xshift=1ex, draw, right = of t4] (t5) {$t_5$};

    \node[circle, draw, below = of t1, yshift=-10ex] (s1) {$s_1$};
    \node[circle, draw, xshift=1ex, right = of s1] (s2) {$s_2$};
    \node[circle, draw, xshift=1.72ex, right = of s1, scale=1.5] (s2_cover) {};
    \node[circle, draw, xshift=1ex, right = of s2] (s3) {$s_3$};
    \node[circle, draw, xshift=1ex, right = of s3] (s4) {$s_4$};
    \node[circle, draw, xshift=1.72ex, right = of s3, scale=1.5] (s4_cover) {};
    \node[circle, draw, xshift=1ex, right = of s4] (s5) {$s_5$};
    \node[circle, draw, xshift=1.72ex, right = of s4, scale=1.5] (s5_cover) {};

    \node[circle, draw, below = of s3, yshift=-12ex] (bot) {$\bf{0}$};
    \node[circle, draw, above = of t3, yshift=12ex] (top) {$[\bf{n}]$};

    \node[left = of t1, xshift=4ex] (Tlbl) {$\T$};
    \node[left = of s1, xshift=6ex] (Slbl) {$\S^{(i-1)}$};
    \node[left = of Tlbl, xshift=7ex, yshift=6ex] (Glbl) {$\G^{(i)}$};

    \node[above = of t2, xshift=-2ex, yshift=3.5ex] (Hn) {$\H_n$};

    \node[circle, draw, above = of bot, xshift=-1ex, yshift=-2ex] (f1) {\tiny $\bf F$};
    \node[circle, draw, above = of bot, xshift=-9ex, yshift=1ex] (f2) {\tiny $\bf F$};
    \node[circle, draw, above = of bot, xshift=8.25ex, yshift=1.5ex] (f3) {\tiny $\bf F$};

    \node[circle, draw, below = of top, xshift=-1ex, yshift=3ex] (f4) {\tiny $\bf F$};
    \node[circle, draw, below = of top, xshift=-9ex, yshift=1ex] (f5) {\tiny $\bf F$};
    \node[circle, draw, below = of top, xshift=8ex, yshift=1.5ex] (f6) {\tiny $\bf F$};

    \draw[thick] (s1) edge [] (t1); \draw[thick] (s1) edge [] (t3);
    \draw[thick] (s2) edge [] (t2); \draw[thick] (s2) edge [] (t5);
    \draw[thick] (s3) edge [] (t3); \draw[thick] (s3) edge [] (t4);
    \draw[thick] (s4) edge [] (t2); \draw[thick] (s4) edge [] (t5);
    \draw[thick] (s5) edge [] (t2); \draw[thick] (s5) edge [] (t5);

    \draw[dashed, ultra thin, rounded corners=15pt] (-2,.75) rectangle (8.2,-4); 
    \draw[dashed, ultra thin, rounded corners=15pt] (-1.5,.65) rectangle (8,-.65); 
    \draw[dashed, ultra thin, rounded corners=15pt] (-1.75,-2.65) rectangle (8,-3.9); 

    \draw[dashed, thick] (bot.west) edge [] (s1.south west);
    \draw[dashed, thick] (bot.east) edge [] (s5.south east);
    \draw[dashed, thick] (top.west) edge [] (t1.north west);
    \draw[dashed, thick] (top.east) edge [] (t5.north east);

    \draw[->, dotted, thick]  plot[smooth, tension=.7] coordinates {(3.4,-6.1) (2.8,-5.5) (2.5,-5) (1.75,-4.7) (1.5,-4) (.3,-3.5)};
    \draw[->, dotted, thick]  plot[smooth, tension=.7] coordinates {(3.58,-6) (3,-5.5) (3,-5) (3.15,-4.7) (3,-4) (2.23,-3.4)};
    \draw[->, dotted, thick]  plot[smooth, tension=.7] coordinates {(3.8,-6.1) (4,-5.5) (4.2,-5) (4.2,-4.7) (3.75,-4) (3.64,-3.4)};
    \draw[->, dotted, thick]  plot[smooth, tension=.7] coordinates {(4,-6) (4.5,-5.5) (4.45,-5) (4.35,-4.7) (4.45,-4) (5.4,-3.43)};
    \draw[->, dotted, thick]  plot[smooth, tension=.7] coordinates {(3.9,-6.1) (4.7,-5.5) (4.8,-5) (5.4,-4.7) (6.15,-4) (6.8,-3.3)};

    \draw[<-, dotted, thick]  plot[smooth, tension=.7] coordinates {(3.3,2.95) (2.4,2.5) (2.1,2) (1.75,1.7) (1.25,0.8) (.26,0.25)};
    \draw[<-, dotted, thick]  plot[smooth, tension=.7] coordinates {(3.58,2.9) (3.2,2.5) (3,2) (3.15,1.7) (3,0.8) (2.17,0.25)};
    \draw[<-, dotted, thick]  plot[smooth, tension=.7] coordinates {(3.8,2.9) (3.9,2.3) (3.9,2) (4,1.7) (3.75,0.8) (3.64,.35)};
    \draw[<-, dotted, thick]  plot[smooth, tension=.7] coordinates {(4,3.1) (4.35,2.2) (4.35,2) (4.4,1.7) (4.67,.6) (5.4,0.33)};
    \draw[<-, dotted, thick]  plot[smooth, tension=.7] coordinates {(4.15,3.1) (4.5,2.5) (5.25,2) (5.5,1.7) (6.5,.6) (7.35,0.35)};
    \end{tikzpicture}
    \caption{The undirected bipartite graph $\G^{(i)}=((\S^{(i-1)},\T), \subset)$, 
	and vertex cover $\X^{(i)}=(X^{(i)}_\S, X^{(i)}_\T)$ (doubly-circular nodes).}
\label{fig:undirected_bipartite_vertex_cover}
\end{figure}
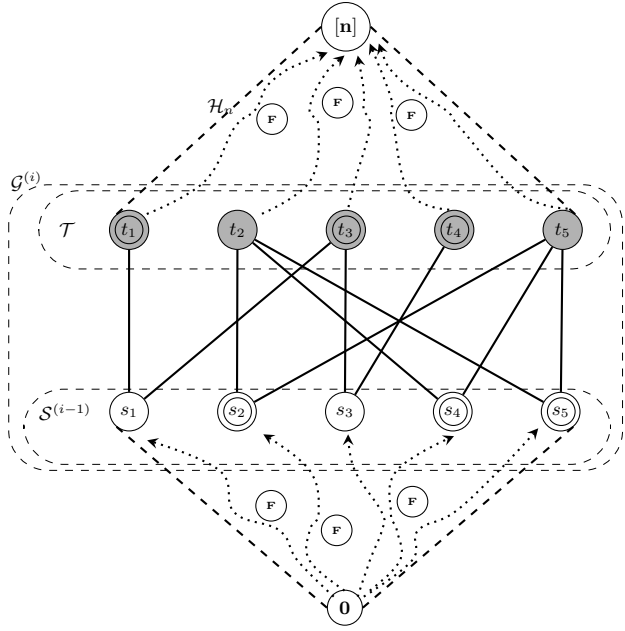

\item[] \begin{lemma}\label{lemma:inv:2}
Let $i\in[\max_i]$ be any iteration of the \texttt{while-loop} at line~\ref{algo:compression:l2} of \texttt{compression\_phase()}.
Let $U$ be any subset of $\S^{(i-1)}$ and let $V$ be any subset of $\T$. Let $\p$ be any directed path in $\H_n$ that goes from $U$ to $V$.
Then $\p$ goes from $\S$ to $V$ in $\H_n$.
\end{lemma}
\begin{proof}
Induction on $i\in[\max_i]$.
\begin{itemize}
\item \emph{Base Case.} If $i=1$, recall that $\S^{(0)}=\S$. Then $U\subseteq \S$, which  implies the base case.
\item \emph{Inductive Step.}
Let us assume, by induction hypothesis, that the claim holds for some $i\in[\max_i-1]$
and let us prove it for $i+1$.
So, let $U\subseteq \S^{(i)}$, and let $\p$ by any directed path
in $\H_n$ that goes from $U$ to $V$.
Recall that $\S^{(i)}$ is the frontier set that
is returned by an invocation of \texttt{double-bfs\_phase()}
on input $\X^{(i)}_\S$, at the $i$-th iteration of line~\ref{algo:compression:l14}, within \texttt{compression\_phase()}.
This amounts to saying that all vertices in $\S^{(i)}$ have been discovered by a BFS starting from $\X^{(i)}_\S$.
Recall that $\X_\S^{(i)}=\X^{(i)}\cap\S^{(i-1)}$ so that $\X_\S^{(i)}\subseteq \S^{(i-1)}$.
Therefore, $\p$ is indeed a directed path in $\H_n$ that goes from $\S^{(i-1)}$ to $V$ in $\H_n$.
By induction hypothesis, the thesis follows.
\end{itemize}
\end{proof}

\item[] \begin{lemma}\label{lemma:inv:3}
Let $i\in[\max_i]$ be any index of iteration of the \texttt{while-loop}
at line~\ref{algo:compression:l2} of \texttt{compression\_phase()}.
Let $\p$ be a directed path in $\H_n$ that goes from $\S$ to $\T$  avoiding $\F$.
Then, $\p$ goes either from $\X^{(i)}_\S$ to $\T$ or from $\S$ to $\bigcup_{j=1}^{i} \X^{(j)}_\T$.
\end{lemma}
\begin{proof}
Induction on $i\in [\max_i]$.

\begin{itemize}
\item \emph{Base Case.} If $i=1$, recall that $\S^{(0)}=\S$.
Then, by \Cref{lemma:inv:1}, we have that $\p$ either goes from
$X^{(1)}_\S$ to $\T$ or from $\S\setminus \X^{(1)}_\S$ to $\X^{(1)}_\T$.
If $\p$ goes from $\S\setminus \X^{(1)}_\S$ to $\X^{(1)}_\T$,
then clearly $\p$ goes from $\S$ to $\X^{(1)}_\T$.
This implies the base case.

\item \emph{Inductive Step.} Let us assume, by induction hypothesis,
that the claim holds for some $i\in[\max_i-1]$, and let us prove it for $i+1$.
By induction hypothesis, $\p$ either goes from $\X^{(i)}_\S$ to $\T$
or from $\S$ to $\bigcup_{j=1}^{i} \X^{(j)}_\T$ in $\H_n$.

If $\p$ goes from $\X^{(i)}_\S$ to $\T$  avoiding $\F$ in $\H_n$,
then $\p$ must go from $\S^{(i)}$ to $\T$:
in fact, recall that $\S^{(i)}$ is the frontier set that is returned
by the invocation of \texttt{double-bfs\_phase()} on input $\X^{(i)}_\S$,
at line~\ref{algo:compression:l14} of the \texttt{compression\_phase()}.

If $\p$ goes from $\S^{(i)}$ to $\T$ then, by Lemma~\ref{lemma:inv:1}, we also have that
$\p$ goes either from $\X^{(i+1)}_\S$ to $\T$ or from
$\S^{(i)}\setminus \X^{(i+1)}_\S$ to $\X^{(i+1)}_\T$ in $\H_n$.

If $\p$ goes from $\S^{(i)}\setminus \X^{(i+1)}_\S$ to $\X^{(i+1)}_\T$,
then $\p$ goes from $\S$ to $\X^{(i+1)}_\T$ by Lemma~\ref{lemma:inv:2}.

Since $\p$ either goes from $\X^{(i+1)}_\S$ to $\T$,
or from $\S$ to $\X^{(i+1)}_\T$, or from $\S$ to $\bigcup_{j=1}^{i} \X^{(j)}_\T$ in $\H_n$,
we have that $\p$ either goes from $\X^{(i+1)}_\S$ to $\T$, or from $\S$ to
$\bigcup_{j=1}^{i+1} \X^{(j)}_\T$ in $\H_n$,
thus concluding the induction and the proof of Lemma~\ref{lemma:inv:3}.
\end{itemize}
\end{proof}
\end{itemize}

We now have everything we need to prove (2).
Let $i=\max_i$ be the last iteration of the \texttt{while-loop} at line~\ref{algo:compression:l2} of \texttt{compression\_phase()}.
Moreover, assume that $\p$ is a directed path in $\H_n$ that goes from $\S$ to $\T$ avoiding $\F$.

By \Cref{lemma:inv:3}, $\p$ either goes from
$\X^{(\max_i)}_\S$ to $\T$ or from $\S$ to $\bigcup_{i=1}^{\max_i} \X^{(i)}_\T$.
We argue that $\p$ cannot go from $\X^{(\max_i)}_\S$ to $\T$ in $\H_n$.
In fact, any such path must first visit $\S^{(\max_i)}$ in order to reach $\T$.
Then, it is sufficient to show that there exists no path that goes from $\S^{(\max_i)}$ to $\T$.
Recall that $\max_i$ is the last iteration of the \texttt{while-loop} at line~\ref{algo:compression:l2},
and by hypothesis the \texttt{compression\_phase()} halts by returning $\T'$ at line~\ref{algo:compression:l16}.
Therefore, at line~\ref{algo:compression:l15}, it must hold that $\S^{(\max_i)}=\emptyset$ or that both
$\ell_\downarrow^{(\max_i)}+\ell_\uparrow=d_{S,T}$ and $\S^{(\max_i)}\cap\T=\emptyset$.
Thus, there exists no directed path in $\H_n$ that goes from $\S^{(\max_i)}$ to $\T$.

Since $\p$ does not go from $\X^{(\max_i)}_\S$ to $\T$,
it must go from $\S$ to $\bigcup_{i=1}^{\max_i} \X^{(i)}_\T$ instead; and since
$\T'= \bigcup_{i=1}^{\max_i} \X^{(i)}_\T$,
$\p$ must therefore go from $\S$ to $\T'$, which concludes the  proof of (2).
\end{enumerate}
\end{proof}

Now that we have established the correctness of the procedures it uses, we go back to establishing the correctness of \texttt{solve\_\mainproblem()}.

\begin{lemma}\label{lemma:pre_solve_halts_finite}
Each iteration of the \texttt{while-loop} at line~\ref{algo:solve:l4} of \texttt{solve\_\mainproblem()}
increases $\ell_\uparrow + \ell_\downarrow$ by at least one unit;
until $\ell_\uparrow + \ell_\downarrow=d_{S,T}$ or until the procedure halts by reaching
either line~\ref{algo:solve:l7}, line~\ref{algo:solve:l10} or line~\ref{algo:solve:l12}.
\end{lemma}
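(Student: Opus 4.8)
The plan is to mimic the proof of Lemma~\ref{lemma:pre_halt_compression}: I would charge the entire increase of $\ell_\uparrow+\ell_\downarrow$ to the single call of \texttt{double-bfs\_phase()} at line~\ref{algo:solve:l5}, observing first that lines~\ref{algo:solve:l11}--\ref{algo:solve:l13} leave both counters untouched (\texttt{compression\_phase()} returns either a path, caught at line~\ref{algo:solve:l12}, or a compressed frontier that is assigned to $\T$ at line~\ref{algo:solve:l13}). So it suffices to show that the call at line~\ref{algo:solve:l5} strictly increases $\ell_\uparrow+\ell_\downarrow$ whenever the procedure neither halts nor has already reached $\ell_\uparrow+\ell_\downarrow=d_{S,T}$. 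For this I would record three invariants holding at the start of every non-halting iteration: (i) $\S\neq\emptyset$, since $\S$ is $\{S\}$ initially and otherwise the non-empty frontier returned at line~\ref{algo:solve:l5} of the previous iteration (emptiness would have triggered \texttt{NO} at line~\ref{algo:solve:l7}), and $\S$ is never altered by lines~\ref{algo:solve:l11}--\ref{algo:solve:l13}; (ii) $\ell_\uparrow+\ell_\downarrow<d_{S,T}$, because $\ell_\uparrow+\ell_\downarrow\le d_{S,T}$ always holds and a value equal to $d_{S,T}$ would have forced a halt at line~\ref{algo:solve:l7} or~\ref{algo:solve:l10} through the tests at lines~\ref{algo:solve:l6},~\ref{algo:solve:l8}; and (iii) $|\T|\le|\F|\,d_{S,T}$, since $\T$ is either $\{T\}$ or the compressed frontier $\T'$ produced by \texttt{compression\_phase()}, which obeys the bound of Proposition~\ref{prop:compressed_frontier}.

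With these in hand I would unfold \texttt{double-bfs\_phase()} (Algorithm~\ref{ALGO:double_bfs}), which runs $\texttt{BFS}_\uparrow$ on $\S$ and then $\texttt{BFS}_\downarrow$ on $\T$, each turn of the inner loop at line~\ref{algo:bfs:l1} of \texttt{bfs\_phase()} raising its counter at line~\ref{algo:bfs:l3}. I would split on $|\S|$. If $|\S|\le|\F|\,d_{S,T}$, then by (i) and (ii) the guard at line~\ref{algo:bfs:l1} holds on entry to $\texttt{BFS}_\uparrow$, so $\ell_\uparrow$ increases by at least one. If $|\S|>|\F|\,d_{S,T}$, then $\texttt{BFS}_\uparrow$ makes no step and leaves $\ell_\uparrow$ unchanged; I then fall back on $\texttt{BFS}_\downarrow$, whose guard holds by (iii) together with $\ell_\downarrow+\ell_\uparrow<d_{S,T}$ as long as $\T\neq\emptyset$, giving an increase of $\ell_\downarrow$, while $\T=\emptyset$ forces a halt at line~\ref{algo:solve:l7}. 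Finally, reaching $\ell_\uparrow+\ell_\downarrow=d_{S,T}$ inside line~\ref{algo:solve:l5} triggers a halt at line~\ref{algo:solve:l7} or~\ref{algo:solve:l10}, and a return from \texttt{compression\_phase()} of a path is the halt at line~\ref{algo:solve:l12}; these account for the three escape clauses of the statement.

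The step I expect to be the main obstacle is precisely the case $|\S|>|\F|\,d_{S,T}$, where $\texttt{BFS}_\uparrow$ is stuck and the increment must come from $\texttt{BFS}_\downarrow$; this works only because the target frontier at the start of the iteration is \emph{small}. Establishing this is essentially a control-flow argument: the loop at line~\ref{algo:solve:l4} can be re-entered only via line~\ref{algo:solve:l13}, which installs the compressed set $\T'$ rather than the possibly exploded frontier returned at line~\ref{algo:solve:l5}, so invariant (iii)---and with it the usability of $\texttt{BFS}_\downarrow$---ultimately rests on Proposition~\ref{prop:compressed_frontier}; I would spell out this dependence carefully. For completeness I would also handle the degenerate regime $|\F|\,d_{S,T}=0$, where the size guard at line~\ref{algo:bfs:l1} can never be met: if $d_{S,T}=0$ the initial state already has $\ell_\uparrow+\ell_\downarrow=d_{S,T}$ and the procedure halts at line~\ref{algo:solve:l7} or~\ref{algo:solve:l10}, and the obstacle-free case $|\F|=0$ can be settled directly.
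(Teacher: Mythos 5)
Your proposal is correct and takes essentially the same route as the paper's proof: the paper likewise argues (by induction on the iterations of the loop at line~\ref{algo:solve:l4}) that the level counters change only inside \texttt{double-bfs\_phase()} at line~\ref{algo:solve:l5}, and invokes Proposition~\ref{prop:compressed_frontier} to ensure that the frontier installed at line~\ref{algo:solve:l13} satisfies $|\T'|\leq|\F|\,d_{S,T}$, so that the guard at line~\ref{algo:bfs:l1} of \texttt{bfs\_phase()} can be re-entered (the increment then coming from $\texttt{BFS}_{\downarrow}$) unless the procedure halts at line~\ref{algo:solve:l7}, line~\ref{algo:solve:l10} or line~\ref{algo:solve:l12}. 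If anything, your explicit split on $|\S|$ (identifying which of the two searches supplies the increment) and your remark on the degenerate regime $|\F|\,d_{S,T}=0$ make explicit two points that the paper's proof glosses over.
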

\begin{proof} Induction on the index $i$
of iteration of the \texttt{while-loop} at line~\ref{algo:solve:l4}.
\begin{itemize}
\item \emph{Base Case.} Consider the first iteration of the \texttt{while-loop} at line~\ref{algo:solve:l4}.
We have $\S=\{S\}$, $\T=\{T\}$, and $\ell_\uparrow=\ell_\downarrow=0$.
Therefore, if $d_{S,T}=0$, then the procedure halts immediately,
either at line~\ref{algo:solve:l7} (if $S\neq T$) or at line~\ref{algo:solve:l10} (if $S=T$).
If $d_{S,T}>0$, then a first execution of \texttt{double-bfs\_phase()}
is invoked at line~\ref{algo:solve:l5}, which halts after a finite number of steps by \Cref{lemma:halt_double-bfs}.
Notice that the condition for entering  the \texttt{while-loop}
at line~\ref{algo:bfs:l1} of \texttt{bfs\_phase()} is satisfied, so
$\ell_\uparrow+\ell_\downarrow$ gets incremented at line~\ref{algo:bfs:l3} of \texttt{bfs\_phase()}.

\item \emph{Inductive Step.}
Assume that at the $i$-th iteration of the \texttt{while-loop} at line~\ref{algo:solve:l4},
we have $\ell_\uparrow + \ell_\downarrow < d_{S,T}$. Furthermore, assume that none of the conditions
checked by \texttt{solve\_{\mainproblem}()} at line~\ref{algo:solve:l6},
line~\ref{algo:solve:l8} and line~\ref{algo:solve:l12} are satisfied.
Then, the procedure does not halt at $i$-th iteration.
Recall that \texttt{double-bfs\_phase()}, which is invoked at line~\ref{algo:solve:l5},
halts within finite time by Lemma~\ref{lemma:halt_double-bfs};
also, recall that \texttt{compression\_phase()}, which is invoked at line~\ref{algo:solve:l11},
halts within finite time by \Cref{lemma:halt_compression}.
Thus, at the end of the $i$-th iteration, line~\ref{algo:solve:l13} gets finally executed.
At line~\ref{algo:solve:l13}, the current frontier $\T$ gets replaced by the value $\T'$,
previously returned by \texttt{compression\_phase()} at line~\ref{algo:solve:l11}.
Notice that $|\T'| \leq |\F|\, d_{S,T}$ holds by \Cref{prop:compressed_frontier}.
The $(i+1)$-th iteration of the \texttt{while-loop} at line~\ref{algo:solve:l4} starts at this point.
Then, at line~\ref{algo:solve:l5}, another round of \texttt{double-bfs\_phase()} is executed.
If $\T\neq\emptyset$ and $\ell_\uparrow+\ell_\downarrow<d_{S,T}$,
the condition for entering the \texttt{while-loop} at line~\ref{algo:bfs:l1} of \texttt{bfs\_phase()} is satisfied,
so that $\ell_\uparrow+\ell_\downarrow$ gets incremented at line~\ref{algo:bfs:l3}.
If $\T=\emptyset$ or $\ell_{\uparrow}+\ell_\downarrow=d_{S,T}$,
then the procedure halts at line~\ref{algo:solve:l7}.
This implies that the invariant is maintained for each iteration $i$.
\end{itemize}
\end{proof}

\begin{proposition}\label{prop:solve_halts_finite}
The procedure \texttt{solve\_\mainproblem()} halts within a finite number of steps.
In particular, the \texttt{while-loop} at line~\ref{algo:solve:l4} iterates at most $d_{S,T}$ times.
\end{proposition}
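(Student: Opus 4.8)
The plan is to leverage the three preceding lemmas, which do essentially all of the heavy lifting, and to reduce the statement to a short monotonicity-plus-boundedness argument on the potential $\ell_\uparrow + \ell_\downarrow$.

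First I would argue that the body of the \texttt{while-loop} at line~\ref{algo:solve:l4} terminates in finite time. The body consists of a single call to \texttt{double-bfs\_phase()} (line~\ref{algo:solve:l5}), a constant number of conditional tests (lines~\ref{algo:solve:l6}--\ref{algo:solve:l10}), a single call to \texttt{compression\_phase()} (line~\ref{algo:solve:l11}), and the bookkeeping at lines~\ref{algo:solve:l12}--\ref{algo:solve:l13}. By \Cref{lemma:halt_double-bfs} the first call halts within a finite number of steps, and by \Cref{lemma:halt_compression} so does the second; everything else is performed in constant time. Hence no single iteration can run forever, and it remains only to bound the number of iterations.

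Second I would invoke \Cref{lemma:pre_solve_halts_finite}, which states that each iteration of the loop increases $\ell_\uparrow + \ell_\downarrow$ by at least one unit, unless the procedure halts at line~\ref{algo:solve:l7}, line~\ref{algo:solve:l10}, or line~\ref{algo:solve:l12}, or unless $\ell_\uparrow + \ell_\downarrow = d_{S,T}$ is reached. Since both counters are initialized to $0$ at lines~\ref{algo:solve:l2}--\ref{algo:solve:l3}, the quantity $\ell_\uparrow + \ell_\downarrow$ starts at $0$; by the lemma it is nondecreasing across iterations and strictly increases by at least one until termination. The one point that requires care is verifying that the loop cannot continue past $\ell_\uparrow + \ell_\downarrow = d_{S,T}$: when this sum equals $d_{S,T}$, either $\S\cap\T=\emptyset$, in which case the test at line~\ref{algo:solve:l6} triggers a halt at line~\ref{algo:solve:l7}, or $\S\cap\T\neq\emptyset$, in which case the test at line~\ref{algo:solve:l8} triggers a halt at line~\ref{algo:solve:l10}. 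Either way the procedure stops, so $\ell_\uparrow + \ell_\downarrow$ never exceeds $d_{S,T}$.

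Combining the two observations, the potential $\ell_\uparrow + \ell_\downarrow$ increases by at least one per iteration, starts at $0$, and is capped at $d_{S,T}$; hence the loop executes at most $d_{S,T}$ times, and since each iteration is finite, \texttt{solve\_\mainproblem()} halts within a finite number of steps. I expect no genuine obstacle beyond the case analysis at lines~\ref{algo:solve:l6} and~\ref{algo:solve:l8}, since the real work --- the finiteness of the two subprocedures and the monotone increase of the level counters --- has already been discharged by \Cref{lemma:halt_double-bfs}, \Cref{lemma:halt_compression}, and \Cref{lemma:pre_solve_halts_finite}.
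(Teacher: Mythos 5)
Your proposal is correct and follows essentially the same route as the paper: it invokes \Cref{lemma:pre_solve_halts_finite} for the monotone increase of $\ell_\uparrow + \ell_\downarrow$, observes via the case analysis on $\S\cap\T$ at lines~\ref{algo:solve:l6} and~\ref{algo:solve:l8} that the procedure halts once the sum reaches $d_{S,T}$, and concludes the bound of $d_{S,T}$ iterations. Your additional explicit appeal to \Cref{lemma:halt_double-bfs} and \Cref{lemma:halt_compression} for the finiteness of each loop body is a detail the paper leaves implicit, but it does not change the argument.
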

\begin{proof}
Recall the statement of \Cref{lemma:pre_solve_halts_finite}. As soon as $\ell_\uparrow + \ell_\downarrow=d_{S,T}$,
then \texttt{solve\_\mainproblem()} halts either at line~\ref{algo:solve:l7}
(if $\S\cap\T=\emptyset$) or at line~\ref{algo:solve:l10} (if $\S\cap\T\neq\emptyset$).
In particular, this implies that the \texttt{while-loop} at line~\ref{algo:solve:l4} of
the \texttt{solve\_\mainproblem()} can be iterated at most $d_{S,T}$ times.
\end{proof}

\begin{proposition}\label{prop:correctness_yes}
Assume that \texttt{solve\_\mainproblem()} halts by returning the pair $\langle \texttt{YES}, \p\rangle$.
Then $\p$ is a directed path in $\H_n$ that goes from $S$ to $T$ avoiding $\F$.
\end{proposition}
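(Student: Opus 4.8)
The plan is to trace the points at which \texttt{solve\_\mainproblem()} can emit the pair $\langle\texttt{YES},\p\rangle$ and verify the claim at each one. Inspecting \Cref{ALGO:solve}, a positive answer is returned either at line~\ref{algo:solve:l10} (inside the main \texttt{while}-loop, after \texttt{reconstruct\_path()} at line~\ref{algo:solve:l9}) or at line~\ref{algo:solve:l12} (when \texttt{compression\_phase()} has itself returned a \texttt{YES} pair). I would therefore split the argument into these two cases. Much of the work turns out to be already discharged by earlier lemmas, so the proof is largely a matter of assembling them and supplying one routine level argument.

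For the line~\ref{algo:solve:l12} branch the path is produced by \texttt{compression\_phase()}, and \Cref{lemma:compression_path} already settles it: whenever that subroutine returns $\langle\texttt{YES},\p\rangle$, the object $\p$ is a directed path in $\H_n$ going from $S$ to $T$ that avoids $\F$. Hence nothing further is required in this branch, and I would dispose of it in a single sentence.

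The line~\ref{algo:solve:l10} branch is where the genuine verification lives. Reaching line~\ref{algo:solve:l10} presupposes that the guard at line~\ref{algo:solve:l8} held, namely $\ell_\uparrow+\ell_\downarrow=d_{S,T}$ and $\S\cap\T\neq\emptyset$. I would fix an arbitrary $X\in\S\cap\T$. By \Cref{lemma:reconstruct_path}, since $X\in\S$ there is a directed path $\p_{S,X}$ in $\H_n$ from $S$ to $X$ avoiding $\F$, and since $X\in\T$ there is a directed path $\p_{X,T}$ from $X$ to $T$ avoiding $\F$; \texttt{reconstruct\_path()} recovers both from the \texttt{parent} map maintained throughout the BFS phases. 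The returned object is the concatenation $\p=\p_{S,X}\,\p_{X,T}$, which manifestly starts at $S$, ends at $T$, and avoids $\F$.

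The one point requiring care---and the main obstacle---is that this concatenation is a genuine simple directed path rather than a mere walk. Here I would invoke the level bookkeeping of the searches: every vertex of $\S$ has cardinality $|S|+\ell_\uparrow$ and every vertex of $\T$ has cardinality $|T|-\ell_\downarrow$, so the meeting condition $\ell_\uparrow+\ell_\downarrow=d_{S,T}=|T|-|S|$ forces $X$ to lie at the common level $|S|+\ell_\uparrow$. Since each arc of $\H_n$ increases cardinality by exactly one, $\p_{S,X}$ visits only vertices of cardinality in $[\,|S|,\,|S|+\ell_\uparrow\,]$ while $\p_{X,T}$ visits only vertices of cardinality in $[\,|S|+\ell_\uparrow,\,|T|\,]$; the two ranges overlap solely at the level of $X$, and there they share only $X$ itself. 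Consequently the concatenation repeats no vertex and is a valid directed $(S,T)$-path, which together with the line~\ref{algo:solve:l12} case completes the proof via \Cref{lemma:reconstruct_path,lemma:compression_path}.
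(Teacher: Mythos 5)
Your proof is correct and follows essentially the same route as the paper's: the same case split between the return at line~\ref{algo:solve:l12} (discharged by \Cref{lemma:compression_path}) and the return at line~\ref{algo:solve:l10} (discharged by \Cref{lemma:reconstruct_path} and concatenation at a vertex $X\in\S\cap\T$). Your closing level-counting argument that the concatenation is a simple path is a valid extra detail that the paper leaves implicit, since arcs of $\H_n$ strictly increase cardinality.
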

\begin{proof}
Observe that \texttt{solve\_\mainproblem()} can return $\langle\texttt{YES}, \p\rangle$
as output only at line~\ref{algo:solve:l10} or at line~\ref{algo:solve:l12}.
In the latter case, $\p$ gets constructed at line~\ref{algo:solve:l11} by invoking \texttt{compression\_phase()},
so the thesis follows by \Cref{lemma:compression_path}.
Otherwise, assume that $\p$ is returned at line~\ref{algo:solve:l10}.
Therefore, at the last iteration of line~\ref{algo:solve:l8}, it must hold that $\S\cap\T\neq\emptyset$.
Then, let $X\in\S\cap\T$.
By \Cref{lemma:reconstruct_path} there exists a directed path $p_{S,X}$ in $\H_n$ that goes from
$S$ to $X$ avoiding $\F$ (because $X\in\S$),
and there exists another directed path $\p_{X,T}$ in $\H_n$ that goes from $X$ to $T$ avoiding $\F$ (because $X\in\T$).
Therefore, \texttt{reconstruct\_path()} at line~\ref{algo:solve:l9},
is able to compute a directed path $\p=\p_{S,X}\p_{X,T}$ in $\H_n$ that goes from $S$ to $T$
avoiding $\F$, which gets returned at line~\ref{algo:solve:l12}.
\end{proof}

\begin{lemma}\label{lemma:no_invariant}
The following invariant is maintained at each line of \texttt{solve\_\mainproblem()}.
If $\p$ is any directed path in $\H_n$ that goes from $S$ to $T$ avoiding $\F$ ,
then $\p$ goes from $\S$ to $\T$.
\end{lemma}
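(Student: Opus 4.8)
The plan is to establish this as a loop invariant of the \texttt{while}-loop at line~\ref{algo:solve:l4}, proved by induction on the number of completed iterations and checked at every line that can alter $\S$, $\T$, $\ell_\uparrow$ or $\ell_\downarrow$. This lemma is the exact converse of \Cref{lemma:reconstruct_path}: where that result certifies that every frontier vertex is reachable, here I must certify that every valid $(S,T)$-path is \emph{caught} by the frontiers. First I would fix the reading of ``$\p$ goes from $\S$ to $\T$'': a directed path $\p=v_0v_1\cdots v_n$ in $\H_n$ from $S$ to $T$ is monotone, so $|v_i|=|S|+i$ and $\p$ contains exactly one vertex at each cardinality between $|S|$ and $|T|=|S|+d_{S,T}$. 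Since every $X\in\S$ has $|X|=|S|+\ell_\uparrow$ and every $X\in\T$ has $|X|=|T|-\ell_\downarrow$, the invariant to maintain is precisely $v_{\ell_\uparrow}\in\S$ and $v_{d_{S,T}-\ell_\downarrow}\in\T$. The only lines that touch the frontiers are the two searches inside \texttt{double-bfs\_phase()} (line~\ref{algo:solve:l5}) and the reassignment $\T\leftarrow\T'$ at line~\ref{algo:solve:l13}, so it suffices to show each of these three operations preserves the invariant; all other lines leave $\S,\T,\ell_\uparrow,\ell_\downarrow$ untouched and hence preserve it trivially.

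For the base case, lines~\ref{algo:solve:l2} and~\ref{algo:solve:l3} set $\S=\{S\}$, $\T=\{T\}$, $\ell_\uparrow=\ell_\downarrow=0$, so $v_0=S\in\S$ and $v_n=T\in\T$ for every such $\p$. For the $\texttt{BFS}_\uparrow$ advancement I would argue as follows: assume by the induction hypothesis that $v_{\ell_\uparrow}\in\S$ before the source frontier moves up one level. Since \texttt{next\_step\_bfs()} forms the new frontier as $\bigcup_{u\in\S}\outN{u}\setminus\F$ (line~\ref{algo:next_bfs:l3}), and since $v_{\ell_\uparrow+1}\in\outN{v_{\ell_\uparrow}}$ with $v_{\ell_\uparrow+1}\notin\F$ (because $\p$ avoids $\F$), the vertex $v_{\ell_\uparrow+1}$ is placed in the updated $\S$; the symmetric argument using $\inN{\cdot}$ handles $\texttt{BFS}_\downarrow$ and the target frontier. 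If a frontier does not advance (its loop at line~\ref{algo:bfs:l1} does not execute), the invariant persists unchanged, so iterating these single-level steps shows the invariant survives all of \texttt{double-bfs\_phase()}.

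The remaining and most delicate case is the compression reassignment at line~\ref{algo:solve:l13}, where I would invoke \Cref{prop:compressed_frontier}. When control reaches line~\ref{algo:solve:l11}, the tests at lines~\ref{algo:solve:l6} and~\ref{algo:solve:l8} have both failed, so $\S,\T\neq\emptyset$ and $\ell_\uparrow+\ell_\downarrow<d_{S,T}$; this forces $\texttt{BFS}_\downarrow$ to have exited on the size condition of line~\ref{algo:bfs:l1}, whence $|\T|>|\F|\,d_{S,T}$, which is exactly the pre-condition of \texttt{compression\_phase()}. Supposing the call returns a compressed frontier $\T'$ at line~\ref{algo:compression:l16}, take any $(S,T)$-path $\p$ avoiding $\F$. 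By the induction hypothesis $v_{\ell_\uparrow}\in\S$ and $v_{d_{S,T}-\ell_\downarrow}\in\T$, so the subpath $v_{\ell_\uparrow}\cdots v_{d_{S,T}-\ell_\downarrow}$ is a directed path that goes from $\S$ to $\T$ avoiding $\F$. \Cref{prop:compressed_frontier}(2) then yields that this subpath goes from $\S$ to $\T'$, i.e.\ $v_{d_{S,T}-\ell_\downarrow}\in\T'$; since $\ell_\downarrow$ is unchanged by the compression, replacing $\T$ by $\T'$ re-establishes the invariant.

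I expect the compression case to be the genuine obstacle, both because it rests entirely on the nontrivial \Cref{prop:compressed_frontier} and because one must confirm its pre-condition $|\T|>|\F|\,d_{S,T}$ holds exactly whenever line~\ref{algo:solve:l13} is executed. By contrast the two breadth-first cases become routine once the ``one vertex of $\p$ per cardinality level'' observation pins down which vertex of $\p$ must sit on each frontier. A final minor point to verify is the ``at each line'' phrasing rather than merely ``between iterations'': I would dispatch this by observing that no line other than~\ref{algo:solve:l5} and~\ref{algo:solve:l13} mutates any of $\S$, $\T$, $\ell_\uparrow$, $\ell_\downarrow$.
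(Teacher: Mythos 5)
Your proof is correct and takes essentially the same route as the paper's: induction on the iterations of the outer \texttt{while}-loop, with the double-BFS step preserved by the construction of \texttt{next\_step\_bfs()} and the compression step handled by \Cref{prop:compressed_frontier}. The only difference is that you spell out details the paper leaves to the reader --- the one-vertex-per-cardinality-level reading of ``goes from $\S$ to $\T$'' (the paper dismisses the BFS case as a direct consequence of the construction) and the verification of the pre-condition $|\T|>|\F|\,d_{S,T}$ before invoking \Cref{prop:compressed_frontier} --- which strengthens rather than changes the argument.
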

\begin{proof}
Induction on the index $i$ of iteration of the \texttt{while-loop} at line~\ref{algo:solve:l2}.
\begin{itemize}
\item \emph{Base Case.} Before entering the first iteration,
since $\S=\{S\}$ and $\T=\{T\}$, the thesis holds.
\item \emph{Inductive Step.} Assume that the thesis holds at the end of the $i$-th iteration.
So, let $\S^{(i)}$ and $\T^{(i)}$ be the frontier sets at the end of the $i$-th iteration.
When $i=0$, just recall that $\S^{(0)}=\{S\}$ and $\T^{(0)}=\{T\}$.
Now, at the beginning of the $(i+1)$-th iteration,
in particular at line~\ref{algo:solve:l5} of \texttt{solve\_\mainproblem()},
let $\S$ and $\T$ be the frontier sets returned by the invocation of \texttt{double-bfs\_phase()}.
If $\p$ is any directed path in $\H_n$ that goes from $S$ to $T$ avoiding $\F$,
then $\p$ goes from $\S^{(i)}$ to $\T^{(i)}$ by induction hypothesis.
It is not difficult to see that if $\p$ goes from $\S^{(i)}$ to $\T^{(i)}$ avoiding $\F$,
then $\p$ must go from $\S$ to $\T$ as well:
at this point, the reader can check that this is
a direct consequence of \texttt{double-bfs\_phase()}'s construction.
If the $(i+1)$-th iteration doesn't halt,
then the \texttt{compression\_phase()} at line~\ref{algo:solve:l11} gets invoked.
Then, let $\T'$ be the value returned by \texttt{compression\_phase()} at line~\ref{algo:solve:l11}.
By \Cref{prop:compressed_frontier},
if $\p$ is a directed path in $\H_n$ that goes from $\S$ to $\T$ avoiding $\F$, then $\p$ goes from $\S$ to $\T'$.
Thus, it is indeed correct to update $\T$ by $\T'$ at line~\ref{algo:solve:l13} of \texttt{solve\_\mainproblem()}.
This implies that the thesis holds for each iteration of the \texttt{while-loop} at line~\ref{algo:solve:l2},
until termination.
\end{itemize}
\end{proof}

\begin{proposition}\label{prop:correctness_no}
Assume that \texttt{solve\_\mainproblem()} halts by returning \texttt{NO}.
Then there is no directed path in $\H_n$ that goes from $S$ to $T$ avoiding $\F$.
\end{proposition}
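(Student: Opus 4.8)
The plan is to argue by contradiction, combining the termination guarantee of \Cref{prop:solve_halts_finite} with the necessary-condition invariant of \Cref{lemma:no_invariant}. First I would note that \texttt{solve\_\mainproblem()} can return \texttt{NO} only at line~\ref{algo:solve:l7}, because its two remaining exit points, lines~\ref{algo:solve:l10} and~\ref{algo:solve:l12}, both return an affirmative answer. Hence, whenever \texttt{NO} is emitted, the guard tested at line~\ref{algo:solve:l6} must have held, so that at that moment at least one of the following is true: \emph{(a)} $\S=\emptyset$; \emph{(b)} $\T=\emptyset$; or \emph{(c)} $\ell_\uparrow+\ell_\downarrow=d_{S,T}$ and $\S\cap\T=\emptyset$.

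Next I would assume, for contradiction, that some directed path $\p$ in $\H_n$ goes from $S$ to $T$ while avoiding $\F$. By \Cref{lemma:no_invariant}, the property ``every such $\p$ goes from $\S$ to $\T$'' is preserved at every line of \texttt{solve\_\mainproblem()}, and in particular it holds at the line~\ref{algo:solve:l7} on which \texttt{NO} is returned. Since each arc of $\H_n$ raises cardinality by exactly one, $\p$ meets every cardinality level between $|S|$ and $|T|$ in precisely one vertex; moreover all members of $\S$ share cardinality $|S|+\ell_\uparrow$ and all members of $\T$ share cardinality $|T|-\ell_\downarrow$. Thus ``$\p$ goes from $\S$ to $\T$'' amounts to saying that the unique vertex $u$ of $\p$ at level $|S|+\ell_\uparrow$ belongs to $\S$ and the unique vertex $w$ of $\p$ at level $|T|-\ell_\downarrow$ belongs to $\T$.

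I would then dispatch the three cases. In case~\emph{(a)} the requirement $u\in\S=\emptyset$ is impossible, and in case~\emph{(b)} the requirement $w\in\T=\emptyset$ is impossible, so each flatly contradicts the existence of $\p$. In case~\emph{(c)}, the identity $\ell_\uparrow+\ell_\downarrow=d_{S,T}=|T|-|S|$ gives $|S|+\ell_\uparrow=|T|-\ell_\downarrow$, so the source and target frontiers lie at one and the same level; consequently $u=w$, whence this single vertex satisfies both $u\in\S$ and $u\in\T$, i.e.\ $u\in\S\cap\T$, contradicting $\S\cap\T=\emptyset$. As every case is contradictory, no path from $S$ to $T$ avoiding $\F$ can exist, which is exactly \Cref{prop:correctness_no}.

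I expect the only delicate point to be the bridge between the abstract phrase ``$\p$ goes from $\S$ to $\T$'' supplied by \Cref{lemma:no_invariant} and the level geometry of $\H_n$: one has to confirm that $\S$ and $\T$ are each concentrated on a single cardinality level (namely $|S|+\ell_\uparrow$ and $|T|-\ell_\downarrow$) and that a cardinality-monotone path crosses each such level in a unique vertex, so that in case~\emph{(c)} the two crossing vertices truly coincide. Once that geometric observation is secured, the remainder is a routine three-way contradiction driven entirely by \Cref{lemma:no_invariant}.
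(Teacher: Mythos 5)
Your proof is correct and takes essentially the same approach as the paper's: both reduce the claim to the guard at line~\ref{algo:solve:l6} and combine the three exit conditions with Lemma~\ref{lemma:no_invariant} to rule out any $\F$-avoiding path from $S$ to $T$. The only difference is that you make explicit the level-geometry step (each frontier $\S$, $\T$ lies on a single cardinality level and a monotone path crosses each level exactly once, so in the case $\ell_\uparrow+\ell_\downarrow=d_{S,T}$ the two crossing vertices coincide and would lie in $\S\cap\T$), which the paper asserts without elaboration when it claims that no directed path from $\S$ to $\T$ exists under each condition.
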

\begin{proof}
Since \texttt{solve\_\mainproblem()} returns \texttt{NO},
the condition checked at line~\ref{algo:solve:l6} must be satisfied:
if $\S=\emptyset$ or $\T=\emptyset$, then there exists no directed path in $\H_n$ that goes from $\S$ to $\T$;
similarly, if $\ell_\uparrow+\ell_\downarrow=d_{S,T}$ and $\S\cap\T=\emptyset$,
then there exists no directed path in $\H_n$ that goes from $\S$ to $\T$.
By Lemma~\ref{lemma:no_invariant},
there exists no directed path in $\H_n$ that goes from $S$ to $T$ avoiding $\F$.
\end{proof}

\noindent Theorem~\ref{thm:correctness_main} follows, at this point,
from \Cref{prop:solve_halts_finite,prop:correctness_yes,prop:correctness_no}.

\subsection{Complexity Analysis}\label{subsect:complexity}

We now analyze the time complexity of \texttt{solve\_\mainproblem()}, starting with that of the procedures it relies on.

\begin{lemma}\label{lemma:complexity_double-bfs_phase}
The $\texttt{double-bfs\_phase()}$ always halts within $O(|\F|\, d^2_{S,T}\, n)$ time.
\end{lemma}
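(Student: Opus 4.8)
The plan is to split the running time of \texttt{double-bfs\_phase()} according to the two calls it makes to \texttt{bfs\_phase()} (lines~\ref{algo:two_bfs:l1} and~\ref{algo:two_bfs:l2}) and to bound each separately; since the two bounds are identical, it suffices to analyse a single invocation of \texttt{bfs\_phase()}. First I would recall \Cref{lemma:halt_double-bfs}, which guarantees that the \texttt{while-loop} at line~\ref{algo:bfs:l1} is executed at most $d_{S,T}$ times. Consequently, the cost of one \texttt{bfs\_phase()} is at most $d_{S,T}$ times the cost of a single iteration, and the cost of an iteration is dominated by the call to \texttt{next\_step\_bfs()} at line~\ref{algo:bfs:l2} (the level-counter update at line~\ref{algo:bfs:l3} being $O(1)$). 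The problem thus reduces to showing that every execution of \texttt{next\_step\_bfs()} runs in time $O(|\F|\, d_{S,T}\, n)$.

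To bound \texttt{next\_step\_bfs()}, the key observation is that the loop guard at line~\ref{algo:bfs:l1} is tested before the body is entered, so whenever \texttt{next\_step\_bfs()} is invoked the current frontier $\X$ satisfies $|\X|\leq |\F|\, d_{S,T}$. The procedure iterates over the vertices of $\X$ (line~\ref{algo:next_bfs:l2}) and, for each $v\in\X$, inspects its neighbourhood $N^{\texttt{drt}}(v)$ (line~\ref{algo:next_bfs:l3}). Since a vertex of $\H_n$ has at most $n$ out-neighbours and at most $n$ in-neighbours, at most $|\X|\,n\leq |\F|\, d_{S,T}\, n$ candidate neighbours are generated across one call. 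Maintaining $\nonviableset$ in a hash table keyed by the (labelled) subsets, and likewise accumulating the new frontier $\X'$ in a hash set to suppress duplicates, each candidate can be produced from $v$ by a single element insertion or deletion, tested for membership in $\nonviableset$, and inserted into $\X'$, all in $O(1)$ time. Hence a single \texttt{next\_step\_bfs()} costs $O(|\F|\, d_{S,T}\, n)$, as required.

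Combining the two estimates gives $O(d_{S,T}\cdot |\F|\, d_{S,T}\, n)=O(|\F|\, d^2_{S,T}\, n)$ for one \texttt{bfs\_phase()}, and since \texttt{double-bfs\_phase()} merely performs two such calls and returns (line~\ref{algo:two_bfs:l3}), its total running time is $O(|\F|\, d^2_{S,T}\, n)$; termination is immediate since all quantities involved are finite. I expect the only delicate point to be the $O(1)$-per-candidate claim: it rests on a RAM/hashing model in which each vertex of $\H_n$ is handled through its label, so that adjacency updates and membership queries against $\nonviableset$ take constant (expected) time. The remaining ingredients — the $d_{S,T}$ bound on the number of iterations from \Cref{lemma:halt_double-bfs}, and the $|\F|\, d_{S,T}$ bound on the frontier size enforced by the guard at line~\ref{algo:bfs:l1} — are read off directly from the pseudocode.
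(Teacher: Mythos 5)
Your proof is correct and takes essentially the same route as the paper's: both reduce the claim to a single invocation of \texttt{bfs\_phase()}, use Lemma~\ref{lemma:halt_double-bfs} to bound the number of \texttt{while-loop} iterations by $d_{S,T}$, and bound each call to \texttt{next\_step\_bfs()} by $O(|\F|\, d_{S,T}\, n)$ via the loop guard $|\X|\leq |\F|\, d_{S,T}$ combined with the $O(n)$ degree bound in $\H_n$. The only difference is that you spell out the constant-time-per-neighbour accounting (hash-based membership tests against $\F$ and duplicate suppression in the new frontier), which the paper leaves implicit in its $O(|\X|\,\max_{v\in\X}\{|N^{\texttt{drt}}(v)|\})$ estimate.
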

\begin{proof}
It is sufficient to prove that \texttt{bfs\_phase()} always halts within $O(|\F|\, d^2_{S,T}\, n)$ time.
Recall that, by Lemma~\ref{lemma:halt_double-bfs}, the \texttt{while-loop} at line~\ref{algo:bfs:l1} of \texttt{bfs\_phase()}
iterates at most $d_{S,T}$ times. At each iteration,
\texttt{next\_step\_bfs()} gets invoked on some input set
$\X\in\wp_n$
and flag variable $\texttt{drt}\in\{\texttt{in}, \texttt{out}\}$ (see line~\ref{algo:bfs:l2} of \texttt{bfs\_phase()}).

We argue that each of these invocations takes at most $O(|\F|\, d_{S,T}\, n)$ time.
Assume that $N^{\texttt{drt}}$ is $N^{\texttt{in}}$ when $\texttt{drt}=\texttt{in}$,
and that it is $N^{\texttt{out}}$ otherwise.
Then, each invocation of \texttt{next\_step\_bfs()} takes $O(|\X|\, \max_{v\in \X}\{|N^{\texttt{drt}}(v)|\})$
time,
because it involves visiting $N^{\texttt{drt}}(v)$ for each $v\in\X$;
still, in order to enter the \texttt{while-loop} at line~\ref{algo:bfs:l1} of \texttt{bfs\_phase()},
we must have $|\X|\leq |\F|\, d_{S,T}$,
and moreover we have $|N^{\texttt{drt}}(v)|=O(n)$ for every $v\in\X$.
Since the total number of iterations is bounded above by $d_{S,T}$, the bound follows.
\end{proof}

\begin{lemma}\label{lemma:complexity_compression_phase}
Assume that \texttt{compression\_phase()} gets invoked at line~\ref{algo:solve:l11} of
the procedure \texttt{solve\_\mainproblem()}. If $\texttt{compression\_phase()}$ halts without ever executing
the procedure \texttt{compute\_Lehman-Ron\_paths()} at line~\ref{algo:compression:l8},
then it halts within the following time bound:
\begin{equation}\label{eqn:compression-phase-first-time-bound}
O\left(\min\Big(\sqrt{|\F|\, d_{S,T}\, n}, |\F|\Big) |\F|^2\, d^2_{S,T}\, n \right)
\end{equation}
Otherwise, if $\texttt{compression\_phase()}$ executes
\texttt{compute\_Lehman-Ron\_paths()} at line~\ref{algo:compression:l8},
then it halts within the following time bound:
\begin{equation}\label{eqn:compression-phase-second-time-bound}
O\left(\min\Big(\sqrt{|\F|\, d_{S,T}\, n}, |\F|\Big) |\F|^2\, d^2_{S,T}\, n +
 |\nonviableset|^{5/2} n^{3/2} d_{S,T} \right)
\end{equation}
\end{lemma}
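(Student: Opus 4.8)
The plan is to analyze `compression_phase()` line by line, accumulating the cost of each component over the at-most-$d_{S,T}$ iterations of the while-loop at line~\ref{algo:compression:l2} (which is the iteration bound guaranteed by \Cref{lemma:halt_compression}), and then split the total into the two cases according to whether line~\ref{algo:compression:l8} is ever reached. The dominant per-iteration cost will come from constructing the bipartite graph $\G$ (line~\ref{algo:compression:l3}), computing the bounded-size matching via `self-reduction` (line~\ref{algo:compression:l4}), and invoking `double-bfs_phase()` inside the loop (line~\ref{algo:compression:l14}); the two optional expensive steps that occur \emph{at most once} are the single call to `compute_Lehman-Ron_paths()` and the subsequent path reconstruction, which account for the extra additive term in the second bound.

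**First I would** bound the graph construction. Using the pre-condition $|\T|\le|\F|\,d_{S,T}\,n$ together with $|\S|\le|\F|\,d_{S,T}\,n$ (both from line~\ref{algo:bfs:l1} of `bfs_phase()`), we get $|V_\G|\le 2|\F|\,d_{S,T}\,n$, and testing the subset relation $U\subset V$ for each candidate pair costs $O(n)$, so building $\G$ naively costs $O(|V_\G|^2\,n)=O(|\F|^2\,d^2_{S,T}\,n^3)$; this is why the self-reduction of Algorithm~\ref{ALGO:self_reduction} matters. **Next I would** analyze `self-reduction`: since we only need a matching of size up to $k=|\F|+1$, the key observation is that it suffices to retain for each vertex at most $k$ incident edges, shrinking the effective edge count from $|V_\G|^2$ down to $O(|V_\G|\cdot|\F|)=O(|\F|^2\,d_{S,T}\,n)$. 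On this sparsified graph, either Hopcroft--Karp (invoked at line~\ref{algo:self-red:l5}) runs in time $O(\sqrt{|V_\G|}\,|E'_\G|)=O(\sqrt{|\F|\,d_{S,T}\,n}\cdot|\F|^2\,d_{S,T}\,n)$, or the recursion of depth at most $k$ on lines~\ref{algo:self-red:l7}--\ref{algo:self-red:l9} gives the alternative bound $O(|\F|\cdot|\F|^2\,d_{S,T}\,n)$; taking the minimum yields the factor $\min(\sqrt{|\F|\,d_{S,T}\,n},\,|\F|)$ appearing in~\eqref{eqn:compression-phase-first-time-bound}, multiplied by $|\F|^2\,d_{S,T}\,n$ per iteration.

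**Then I would** account for the $d_{S,T}$ iterations of the outer loop. The per-iteration matching cost carries a factor $d_{S,T}$, and the per-iteration `double-bfs_phase()` cost is $O(|\F|\,d^2_{S,T}\,n)$ by \Cref{lemma:complexity_double-bfs_phase}, which is subsumed. Multiplying the matching cost $\min(\sqrt{|\F|\,d_{S,T}\,n},|\F|)\,|\F|^2\,d_{S,T}\,n$ by the iteration bound $d_{S,T}$ produces exactly the first stated bound~\eqref{eqn:compression-phase-first-time-bound}. **For the second case,** when line~\ref{algo:compression:l8} is reached, I would invoke \Cref{thm:algo_lehmanron} with $m=|\M|\le|\F|+1$ and distance $d\le d_{S,T}$, giving a one-time additive cost of $O(m^{5/2}\,n^{3/2}\,d)=O(|\F|^{5/2}\,n^{3/2}\,d_{S,T})$; since reaching line~\ref{algo:compression:l8} terminates the procedure (it returns at line~\ref{algo:compression:l10}), this term is added, not multiplied, yielding~\eqref{eqn:compression-phase-second-time-bound}.

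**The hard part will be** correctly establishing the sparsification claim inside `self-reduction`, in particular arguing that removing the maximum-degree vertex and recursing with $k-1$ genuinely lets us cap the number of edges we ever examine at $O(|V_\G|\cdot|\F|)$ rather than $|V_\G|^2$, so that the $n\cdot d_{S,T}$ improvement claimed just before this lemma is realized. One must verify both that the vertex $u$ required at line~\ref{algo:self-red:l9} always exists (guaranteed by $\delta(\hat v)\ge k$, since at most $k-1$ neighbours of $\hat v$ can be matched in $\M'$), and that the early cut-off variant of Hopcroft--Karp at line~\ref{algo:self-red:l5} halts as soon as $|\M|$ reaches $k$ without incurring the full maximum-matching cost. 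Once these two points are pinned down, the arithmetic assembling the stated bounds is routine.
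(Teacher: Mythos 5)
Your proposal takes essentially the same route as the paper's proof: the same preliminary bounds $|\S|,|\T|\le|\F|\,d_{S,T}\,n$ drawn from line~\ref{algo:bfs:l1} of \texttt{bfs\_phase()}, the same use of \texttt{self-reduction()} to shrink the effective edge set of $\G$ to $O(|V_\G|\cdot|\F|)$ so that the matching step costs $O\big(\min\big(\sqrt{|\F|\,d_{S,T}\,n},|\F|\big)\,|\F|^2\,d_{S,T}\,n\big)$ per iteration, the same multiplication by the $d_{S,T}$ iteration bound of \Cref{lemma:halt_compression}, and the same one-time additive term $O(|\F|^{5/2}n^{3/2}d_{S,T})$ from \Cref{thm:algo_lehmanron}, which is additive precisely because reaching line~\ref{algo:compression:l8} terminates the procedure. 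The only divergences are presentational: you obtain the factor $\min\big(\sqrt{|\F|\,d_{S,T}\,n},|\F|\big)$ by taking the better of plain Hopcroft--Karp and the depth-$k$ recursion, whereas the paper reads it off as the phase bound of the cut-off Hopcroft--Karp run, and you explicitly flag the naive $O(|V_\G|^2\,n)$ cost of materializing $\G$ at line~\ref{algo:compression:l3} --- a cost which the paper's own analysis leaves unaccounted for.
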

\begin{proof}
We start with some preliminary observations that will be useful in proving time
bounds~\eqref{eqn:compression-phase-first-time-bound} and \eqref{eqn:compression-phase-second-time-bound}.
Let us assume that \texttt{compression\_phase()} is
invoked on the following input $\langle \S, \T, \F, \ell_\uparrow, \ell_\downarrow, d_{S,T}, n \rangle$
at line~\ref{algo:solve:l11} of \texttt{solve\_\mainproblem()}.
We argue that the following bounds hold on the size of $\S$ and $\T$:
\begin{equation}\label{eqn:bounds-S-T}
|\S|\leq |\F|\, d_{S,T}\, n \;\;\text{ and }\;\; |\T|\leq |\F|\, d_{S,T}\, n.
\end{equation}
In fact, notice that $\S$ and $\T$ were computed during a previous invocation of $\texttt{double-bfs\_phase()}$,
at line~\ref{algo:solve:l5} of \texttt{solve\_\mainproblem()}.
Therefore, it suffices to consider the set $\X$ which is computed by passing through the
\texttt{while-loop} at line~\ref{algo:bfs:l1} of \texttt{bfs\_phase()}.
The condition for entering that \texttt{while-loop} requires $|\X|\leq |\F|\, d_{S,T}$.
Therefore, as soon as \texttt{bfs\_phase()} exits that \texttt{while-loop}, we must have $|\X|\leq |\F|\, d_{S,T}\, n$.
This implies the bounds specified by \eqref{eqn:bounds-S-T}.

Consider the bipartite graph $\G=(V_\G, E_\G)=((\S, \T), \subset)$, which is constructed
at line~\ref{algo:compression:l3} of \texttt{compression\_phase()}.
Since we have:
\[|V_\G| = |\S| + |\T| \leq 2\,|\F|\, d_{S,T}\, n,\]
we also have the following bound on the size of its edge set:
\[|E_\G| \leq |V_\G|^2 \leq 4\,|\F|^2\, d^2_{S,T}\, n^2. \]

We can now proceed with the proof of the two time bounds.

\begin{enumerate}
    \item In the case where  \texttt{compute\_Lehman-Ron\_paths()} never gets executed,
    recall that, at line~\ref{algo:compression:l4}, the \texttt{compression\_phase()}
computes a matching $\M$ of $\G$ such that $|\M|=\min(m^*, |\F|+1)$, where $m^*$ is the
size of a maximum cardinality matching of $\G$.
At this point, the $\texttt{self-reduction}(\G, |\F|+1)$ (Algorithm~\ref{ALGO:self_reduction}),
allows us to shrink the upper bound on the size of $|E_\G|$ from $|V_\G|^2$ down to:
\[|V_\G|\cdot |\nonviableset| \leq 2\,|\nonviableset|^2\, d_{S,T}\, n.\]
The total overhead introduced by $\texttt{self-reduction()}$ is only $O(|V_\G| + |E_\G|)$,
because there are at most $|V_\G|$ recursive calls, each one inspecting the neighbourhood of some node of $\G$.
So, $\M$ is computed within the following time bound $t_\M$:
\begin{align*}
t_\M & = O\left(\min(\sqrt{|V_\G|}, |\F|)\,|E_\G|\right)  \\
     & = O\left(\min\Big(\sqrt{|\F|\, d_{S,T}\, n}, |\F|\Big) |\F|^2\, d_{S,T}\, n \right)
\end{align*}
At this point, let us observe that the time complexity of \texttt{compute\_min\_vertex\_cover()},
which is invoked at line~\ref{algo:compression:l11} of \texttt{compression\_phase()},
is bounded above by the time complexity of computing $\M$ at line~\ref{algo:compression:l4}.
Also, by Lemma~\ref{lemma:complexity_double-bfs_phase},
the time complexity of the \texttt{double-bfs\_phase()}, which is invoked at line~\ref{algo:compression:l14}
of \texttt{compression\_} \texttt{phase()}, is bounded above by the same quantity.

If \texttt{compute\_Lehman-Ron\_paths()} never gets executed at line~\ref{algo:compression:l8},
then during each iteration of the \texttt{while-loop}
at line~\ref{algo:compression:l2} of \texttt{compression\_phase()},
the most expensive task is that of computing the matching $\M$ at line~\ref{algo:compression:l4}.
Recall that, according to
Lemma~\ref{lemma:halt_compression}, the \texttt{while-loop}
at line~\ref{algo:compression:l2} iterates at most $d_{S,T}$ times.
We conclude that, in this case, the \texttt{compression\_phase()} halts within the following time bound:
\[t_\M\, d_{S,T} = O\left(\min\Big(\sqrt{|\F|\, d_{S,T}\, n}, |\F|\Big) |\F|^2\, d^2_{S,T}\, n \right).\]

    \item In the case where  \texttt{compute\_Lehman-Ron\_paths()} gets executed,
    which happens whenever $|\M|=|\F|+1$, we must now take its time complexity into account, which we analyze below.

First, consider the set $\M_\S$ computed at line~\ref{algo:compression:l6} of \texttt{compression\_phase()}.
The following bound holds on its size: \[|\M_\S| = |\M| = |\F| + 1.\]
The same bound holds for the set $\M_\T\subseteq \T$ which is computed at line~\ref{algo:compression:l7} --- namely:
$|\M_\T| = |\M| = |\F|+1$.
By \Cref{thm:algo_lehmanron}, provided that we consider the parameter $m=|\M|=O(|\F|)$,
invoking \texttt{compute\_Lehman-Ron\_paths()}
on input $\langle \M_\S, \M_\T, \M, n \rangle$ takes time at most $t_{\text{LR}}$, where:
$$t_{\text{LR}} = O\left( m^{5/2} n^{3/2} d_{S,T}  \right) = O\left( |\nonviableset|^{5/2} n^{3/2} d_{S,T} \right).$$

Recall that, by Lemma~\ref{lemma:halt_compression}, the \texttt{while-loop}
at line~\ref{algo:compression:l2} iterates at most $d_{S,T}$ times.
At each of such iterations, a brand new matching $\M$ gets computed at line~\ref{algo:compression:l4}.
Finally, at the very last of such iterations, provided that $|\M|>|\F|$, then the procedure
\texttt{compute\_Lehman-Ron\_paths()} is invoked at line~\ref{algo:compression:l8}.
Therefore, we conclude that whenever \texttt{compression\_phase()}
executes \texttt{compute\_Lehman-Ron\_paths()} at line~\ref{algo:compression:l8},
then it halts within the following time bound:
\begin{align*}
& t_\M  \, d_{S,T}  + t_{\text{LR}}  = \\
& O\left(\min\Big(\sqrt{|\F|\, d_{S,T}\, n}, |\F|\Big) |\F|^2\, d^2_{S,T}\, n + |\nonviableset|^{5/2} n^{3/2} d_{S,T} \right)
\end{align*}
\end{enumerate}
\end{proof}

\begin{proposition}\label{prop:complexity_solve_decision}
The \textsc{Decision-Task} of {\mainproblem} can be solved within the following time bound
on any input $\langle S, T, \F, n \rangle$:
\[
O\left(\min\Big(\sqrt{|\F|\, d_{S,T}\, n}, |\F|\Big) |\F|^2\, d^3_{S,T}\, n \right).
\]
\end{proposition}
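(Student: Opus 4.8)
The plan is to bound the running time of the simplified, decision-only variant of \texttt{solve\_\mainproblem()} described in \Cref{sect:remark_dec_search}, by combining the per-phase complexities already established in \Cref{lemma:complexity_double-bfs_phase,lemma:complexity_compression_phase} with the bound on the number of outer iterations from \Cref{prop:solve_halts_finite}. The whole saving over the \textsc{Search-Task} bound of \Cref{thm:main} will come from a single structural observation about the decision variant, after which the time bound follows by a routine multiplication.

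The crucial first step is to observe that, when only the \textsc{Decision-Task} is of interest, the procedure \texttt{compute\_Lehman-Ron\_paths()} is never invoked. Indeed, as in \Cref{sect:remark_dec_search}, as soon as the matching $\M$ computed at line~\ref{algo:compression:l4} of \texttt{compression\_phase()} satisfies $|\M|>|\F|$ at line~\ref{algo:compression:l5}, \Cref{thm:LehmanRon} already guarantees the existence of $|\M|$ vertex-disjoint directed paths in $\H_n$ joining the current source frontier $\S$ to the target frontier $\T$; since $|\M|>|\F|$, at least one of these paths must avoid $\F$, so the algorithm may return \texttt{YES} immediately, without ever constructing them. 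I would note that the soundness of this early exit rests on the \emph{existence} half of \Cref{thm:LehmanRon} alone (not on the algorithmic \Cref{thm:algo_lehmanron}), and that the correctness of the overall decision is unaffected, being already covered by \Cref{thm:correctness_main} applied to the simplified procedure. The consequence for the complexity is that the expensive term $|\F|^{5/2}n^{3/2}d_{S,T}$ of \eqref{eqn:compression-phase-second-time-bound} is avoided, so that the \emph{first} bound \eqref{eqn:compression-phase-first-time-bound} of \Cref{lemma:complexity_compression_phase}, namely $O(\min(\sqrt{|\F|\,d_{S,T}\,n},|\F|)\,|\F|^2\,d^2_{S,T}\,n)$, applies to \emph{every} invocation of \texttt{compression\_phase()}.

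It then remains only to aggregate this cost over the whole execution. By \Cref{prop:solve_halts_finite}, the outer \texttt{while-loop} at line~\ref{algo:solve:l4} of \texttt{solve\_\mainproblem()} is iterated at most $d_{S,T}$ times. Each such iteration performs one call to \texttt{double-bfs\_phase()} at line~\ref{algo:solve:l5}, costing $O(|\F|\,d^2_{S,T}\,n)$ by \Cref{lemma:complexity_double-bfs_phase}, followed by at most one call to \texttt{compression\_phase()} at line~\ref{algo:solve:l11}, costing \eqref{eqn:compression-phase-first-time-bound}. Since $\min(\sqrt{|\F|\,d_{S,T}\,n},|\F|)\geq 1$ and $|\F|\geq 1$, the compression cost dominates the double-BFS cost, so the per-iteration cost is $O(\min(\sqrt{|\F|\,d_{S,T}\,n},|\F|)\,|\F|^2\,d^2_{S,T}\,n)$. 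Multiplying by the at most $d_{S,T}$ iterations of the outer loop yields exactly the claimed bound $O(\min(\sqrt{|\F|\,d_{S,T}\,n},|\F|)\,|\F|^2\,d^3_{S,T}\,n)$.

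The only delicate point I anticipate is not the arithmetic but justifying that the decision variant genuinely escapes the Lehman-Ron computation in \emph{every} branch in which \texttt{compression\_phase()} could otherwise reach line~\ref{algo:compression:l8}: one must check that the early \texttt{YES} short-circuit intercepts precisely the case $|\M|=|\F|+1$ that triggers that line, and that no other execution path ever calls \texttt{compute\_Lehman-Ron\_paths()}. Once this is secured, the polynomial-factor improvement of $n\cdot d_{S,T}$ over the \textsc{Search-Task} bound of \Cref{thm:main} is immediate, since it corresponds exactly to dropping the additive $|\F|^{5/2}n^{3/2}d_{S,T}$ contribution.
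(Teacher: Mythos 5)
Your proposal is correct and follows essentially the same route as the paper's own proof: bound the outer loop by $d_{S,T}$ iterations via \Cref{prop:solve_halts_finite}, observe that the decision variant never executes \texttt{compute\_Lehman-Ron\_paths()} (so the first bound of \Cref{lemma:complexity_compression_phase} applies to every call of \texttt{compression\_phase()}, which dominates \texttt{double-bfs\_phase()}), and multiply. Your extra care in justifying the early-\texttt{YES} short-circuit via the existence half of \Cref{thm:LehmanRon} is exactly the argument the paper delegates to \Cref{sect:remark_dec_search}.
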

\begin{proof}
Let us consider the procedure \texttt{solve\_\mainproblem()} of Algorithm~\ref{ALGO:solve}.
By \Cref{prop:solve_halts_finite}, the \texttt{while-loop} at line~\ref{algo:solve:l4} iterates at most $d_{S,T}$ times.
At each iteration, \texttt{double-bfs\_phase()} is invoked at line~\ref{algo:solve:l5},
and \texttt{compression\_phase()} is invoked soon after at line~\ref{algo:solve:l11}.
By \Cref{lemma:complexity_double-bfs_phase}, the most expensive one between the two procedures is clearly \texttt{compression\_phase()}.
Recall that, if we are content with solving the \textsc{Decision-Task} of {\mainproblem},
then the \texttt{compression\_phase()} can be implemented so that it always halts without
ever executing the  procedure \texttt{compute\_Lehman-Ron\_paths()} at line~\ref{algo:compression:l8}.
Therefore, by \Cref{lemma:complexity_compression_phase},
each invocation of \texttt{compression\_phase()} takes time at most
$$O\left(\min\Big(\sqrt{|\F|\, d_{S,T}\, n}, |\F|\Big) |\F|^2\, d^2_{S,T}\, n \right).$$
Since we have at most $d_{S,T}$ of such invocations, then the thesis follows.
\end{proof}

\begin{proposition}\label{prop:complexity_solve_search}
The \textsc{Search-Task} of {\mainproblem} can be solved within the following time bound
on any input $\langle S, T, \F, n \rangle$:
\[
O\left(\min\Big(\sqrt{|\F|\, d_{S,T}\, n}, |\F|\Big) |\F|^2\, d^3_{S,T}\, n +
 |\nonviableset|^{5/2} n^{3/2} d_{S,T} \right).
\]
\end{proposition}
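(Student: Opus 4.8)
The plan is to mirror the proof of \Cref{prop:complexity_solve_decision}, which bounds the cost of the \textsc{Decision-Task}, and to additionally account for the single invocation of \texttt{compute\_Lehman-Ron\_paths()} that may be required in order to reconstruct an actual witnessing path. By \Cref{prop:solve_halts_finite}, the \texttt{while-loop} at line~\ref{algo:solve:l4} of \texttt{solve\_\mainproblem()} iterates at most $d_{S,T}$ times; at each iteration both \texttt{double-bfs\_phase()} (line~\ref{algo:solve:l5}) and \texttt{compression\_phase()} (line~\ref{algo:solve:l11}) are invoked, and by \Cref{lemma:complexity_double-bfs_phase} the dominant cost lies in \texttt{compression\_phase()}.

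The key observation I would exploit is that \texttt{compute\_Lehman-Ron\_paths()} (line~\ref{algo:compression:l8}) is executed at most once throughout the whole computation. Indeed, that line is reached only when $|\M|>|\F|$, and in that case \texttt{compression\_phase()} immediately reconstructs a solution path $\p$ and returns $\langle\texttt{YES},\p\rangle$ at line~\ref{algo:compression:l10}, which in turn causes \texttt{solve\_\mainproblem()} to halt at line~\ref{algo:solve:l12}. Hence, across all of the at most $d_{S,T}$ iterations of the outer \texttt{while-loop}, every invocation of \texttt{compression\_phase()} except possibly the last one falls into the first case of \Cref{lemma:complexity_compression_phase}, and therefore takes time at most $O\big(\min(\sqrt{|\F|\,d_{S,T}\,n},|\F|)\,|\F|^2\,d^2_{S,T}\,n\big)$; only the final invocation may fall into the second case and additionally pay the Lehman-Ron term $t_{\text{LR}}=O\big(|\nonviableset|^{5/2}n^{3/2}d_{S,T}\big)$ from \Cref{thm:algo_lehmanron}.

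Summing the per-iteration bound over at most $d_{S,T}$ iterations yields an overall contribution of $O\big(\min(\sqrt{|\F|\,d_{S,T}\,n},|\F|)\,|\F|^2\,d^3_{S,T}\,n\big)$, to which I add the single occurrence of $t_{\text{LR}}$. This gives precisely the claimed bound
\[
O\left(\min\Big(\sqrt{|\F|\, d_{S,T}\, n}, |\F|\Big) |\F|^2\, d^3_{S,T}\, n + |\nonviableset|^{5/2} n^{3/2} d_{S,T} \right).
\]

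The only point that requires care---and the main obstacle---is justifying that the expensive Lehman-Ron computation contributes only \emph{additively} (once) rather than being multiplied by the $d_{S,T}$ iteration count; this is secured precisely by the immediate termination of the algorithm upon the first execution of line~\ref{algo:compression:l8}, so that the second time bound of \Cref{lemma:complexity_compression_phase} is triggered at most once while the cheaper first bound governs all remaining iterations.
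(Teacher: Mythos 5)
Your proposal is correct and matches the paper's own proof essentially step for step: both bound the loop count by $d_{S,T}$ via \Cref{prop:solve_halts_finite}, identify \texttt{compression\_phase()} as the dominant cost via \Cref{lemma:complexity_double-bfs_phase}, and crucially observe that \texttt{compute\_Lehman-Ron\_paths()} can run only in the final invocation (since the algorithm halts at line~\ref{algo:solve:l12} right after), so its cost $O\big(|\nonviableset|^{5/2} n^{3/2} d_{S,T}\big)$ enters only additively while the first bound of \Cref{lemma:complexity_compression_phase} covers all other iterations. No gaps; nothing to change.
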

\begin{proof}
Let us consider the procedure \texttt{solve\_\mainproblem()} of Algorithm~\ref{ALGO:solve}.
By \Cref{prop:solve_halts_finite}, the \texttt{while-loop} at line~\ref{algo:solve:l4} iterates at most $d_{S,T}$ times.
At each iteration, \texttt{double-bfs\_phase()} is invoked at line~\ref{algo:solve:l5},
and \texttt{compression\_phase()} is invoked shortly after at line~\ref{algo:solve:l11}.
By \Cref{lemma:complexity_double-bfs_phase}, the most expensive step between the two is clearly the \texttt{compression\_phase()}.
Recall that, if we aim to solve the \textsc{Search-Task} of \mainproblem, then the \texttt{compression\_phase()} possibly executes
the \texttt{compute\_Lehman-Ron\_paths()} procedure at line~\ref{algo:compression:l8}.
Nevertheless, whenever \texttt{compression\_phase()} executes \texttt{compute\_Lehman-Ron\_paths()} at line~\ref{algo:compression:l8},
then the procedure \texttt{solve\_\mainproblem()} halts shortly after at line~\ref{algo:solve:l12}.
This means that the only invocation of \texttt{compression\_phase()} that possibly
executes \texttt{compute\_Lehman-Ron\_paths()} is the very last invocation.
Then, each invocation of \texttt{compression\_phase()}, except the very last one,
halts within the following time bound by Lemma~\ref{lemma:complexity_compression_phase}:
$O\big(\min\big(\sqrt{|\F|\, d_{S,T}\, n}, |\F|\big) |\F|^2\, d^2_{S,T}\, n \big).$
Since the very last invocation of \texttt{compression\_phase()} possibly executes the procedure
\texttt{compute\_Lehman-Ron\_paths()} at line~\ref{algo:compression:l8}, the following time bound holds on the last invocation of
\texttt{compression\_phase()} by Lemma~\ref{lemma:complexity_compression_phase}:
\[O\left(\min\Big(\sqrt{|\F|\, d_{S,T}\, n}, |\F|\Big) |\F|^2\, d^2_{S,T}\, n + |\nonviableset|^{5/2} n^{3/2} d_{S,T} \right).\]
Since there are at most $d_{S,T}$ invocations of the \texttt{compression\_phase()}, the thesis follows.
\end{proof}

\section{Conclusion}\label{sect:conclusion}

With the intention of integrating more biologically relevant constraints into classical genome rearrangement problems,
we introduced in this paper the {\sc guided sorting} problem.
We broadly define it as the problem of transforming two genomes into one another using as few operations as possible
from a given fixed set of allowed operations while avoiding a set of nonviable genomes.
We gave a polynomial time algorithm for solving this problem in the case where genomes are represented by permutations,
under the assumptions that
 \begin{inparaenum}[1)]
     \item permutations can only be modified by exchanging any two elements,
     \item the sequence to seek must be optimal, and
     \item the permutation to sort is an involution.
 \end{inparaenum}



Many questions remain open, most notably that of the computational complexity of the {\sc guided sorting} problem, whether under
assumptions (1) and (2) or in a more general setting (i.e., using structures other than permutations, operations other than exchanges,
or allowing sequences to be ``as short as possible'' instead of optimal).
One could also investigate ``implicit'' representations for the set of forbidden intermediate permutations,
e.g. all permutations that avoid a given (set of) pattern(s), or that belong to a specific conjugacy class.
Aside from complexity issues, future work shall also focus on extending the approach
we proposed to other families of instances of the {\sc guided sorting} problem,
and identifying other tractable (or intractable) cases or variants of it;
for instance, we plan to extend our algorithmic results to the family of graphs satisfying
the \emph{shadow-matching}~\cite{LS04} condition.

\bibliographystyle{mynatstyle}
\bibliography{rearr-corr}

\begin{thebibliography}{19}
\providecommand{\natexlab}[1]{#1}
\providecommand{\url}[1]{\texttt{#1}}
\expandafter\ifx\csname urlstyle\endcsname\relax
  \providecommand{\doi}[1]{doi: #1}\else
  \providecommand{\doi}{doi: \begingroup \urlstyle{rm}\Url}\fi

\bibitem[B\'erard et~al.(2004)B\'erard, Bergeron, and
  Chauve]{Berard:Bergeron:Chauve:RECOMB:2004}
{\sc S.~B\'erard, A.~Bergeron, and C.~Chauve}, {\em Conservation of
  combinatorial structures in evolution scenarios}, in RECOMB'04, vol.~3388 of
  LNCS, Springer, 2004, pp.~16--19.

\bibitem[Bergeron et~al.(2004)Bergeron, Blanchette, Chateau, and
  Chauve]{Bergeron:Blanchette:Chateau:Chauve:WABI:2004}
{\sc A.~Bergeron, M.~Blanchette, A.~Chateau, and C.~Chauve}, {\em
  Reconstructing ancestral gene orders using conserved intervals}, in WABI'04,
  vol.~3240 of LNCS, Springer, 2004, pp.~14--25.

\bibitem[Comin et~al.(2016)Comin, Labarre, Rizzi, and Vialette]{Comin2016}
{\sc C.~Comin, A.~Labarre, R.~Rizzi, and S.~Vialette}, {\em Sorting with
  Forbidden Intermediates}, Algorithms for Computational Biology: Third
  International Conference, AlCoB 2016, Trujillo, Spain, June 21-22, 2016,
  Proceedings, Springer, 2016, pp.~133--144.

\bibitem[Cu\'enot(1905)]{Cuenot1905}
{\sc L.~Cu\'enot}, {\em Les races pures et leurs combinaisons chez les souris},
  Archives de Zoologie Experimentale, 3 (1905), pp.~cxxiii--cxxxii.

\bibitem[Deza and Huang(1998)]{Deza:1998}
{\sc M.~Deza and T.~Huang}, {\em {Metrics on permutations, a survey}}, Journal
  of Combinatorics, Information and System Sciences, 23 (1998), pp.~173--185.

\bibitem[Diestel(2005)]{Diestel2005}
{\sc R.~Diestel}, {\em Graph Theory (Graduate Texts in Mathematics)}, Springer,
  2005.

\bibitem[Fertin et~al.(2009)Fertin, Labarre, Rusu, Tannier, and
  Vialette]{Fertin2009}
{\sc G.~Fertin, A.~Labarre, I.~Rusu, E.~Tannier, and S.~Vialette}, {\em
  {Combinatorics of Genome Rearrangements}}, The MIT Press, 2009.

\bibitem[Figeac and Varr\'{e}(2004)]{Figeac:Varre:WABI:2004}
{\sc M.~Figeac and J.-S. Varr\'{e}}, {\em {Sorting by reversals with common
  intervals}}, in WABI'04, vol.~3240 of LNCS, Springer, 2004, pp.~26--37.

\bibitem[Gabow et~al.(1976)Gabow, Maheshwari, and Osterweil]{GMO76}
{\sc H.~Gabow, S.~Maheshwari, and L.~Osterweil}, {\em On two problems in the
  generation of program test paths}, IEEE Trans. Software Eng.,  (1976),
  pp.~227--231.

\bibitem[Gluecksohn-Waelsch(1963)]{Gluecksohn-Waelsch63}
{\sc S.~Gluecksohn-Waelsch}, {\em Lethal genes and analysis of
  differentiation}, Science, 142 (1963), pp.~1269--1276.

\bibitem[Goldreich et~al.(2000)Goldreich, Goldwasser, Lehman, Ron, and
  Samorodnitsky]{GoldsLehmanRon2001}
{\sc O.~Goldreich, S.~Goldwasser, E.~Lehman, D.~Ron, and A.~Samorodnitsky},
  {\em Testing monotonicity}, Combinatorica, 20 (2000), pp.~301--337.

\bibitem[Hopcroft and Karp(1973)]{HK73}
{\sc J.~Hopcroft and R.~Karp}, {\em An $n^{5/2}$ algorithm for maximum
  matchings in bipartite graphs}, SIAM Journal on Computing, 2 (1973),
  pp.~225--231.

\bibitem[Knuth(1973)]{DBLP:books/aw/Knuth73}
{\sc D.~E. Knuth}, {\em The Art of Computer Programming, Volume {III:} Sorting
  and Searching}, Addison-Wesley, 1973.

\bibitem[Krause et~al.(1973)Krause, Smith, and Goodwin]{KSG73}
{\sc K.~Krause, R.~Smith, and M.~Goodwin}, {\em Optimal software test planning
  through automated network analysis}, in Proceedings 1973 IEEE Symposium
  Computer Software Reliability, IEEE, 1973, pp.~18--22.

\bibitem[Labarre(2013)]{Labarre2013}
{\sc A.~Labarre}, {\em Lower bounding edit distances between permutations},
  SIAM Journal on Discrete Mathematics, 27 (2013), pp.~1410--1428.

\bibitem[Lakshmivarahan et~al.(1993)Lakshmivarahan, Jwo, and
  Dhall]{Lakshmivarahan1993}
{\sc S.~Lakshmivarahan, J.-S. Jwo, and S.~K. Dhall}, {\em {Symmetry in
  interconnection networks based on Cayley graphs of permutation groups: A
  survey}}, Parallel Computing, 19 (1993), pp.~361--407.

\bibitem[Lehman and Ron(2001)]{LehmanRon2001}
{\sc E.~Lehman and D.~Ron}, {\em On disjoint chains of subsets}, Journal of
  Combinatorial Theory, Series A, 94 (2001), pp.~399--404.

\bibitem[Logan and Shahriari(2004)]{LS04}
{\sc M.~Logan and S.~Shahriari}, {\em A new matching property for posets and
  existence of disjoint chains}, Journal of Combinatorial Theory, Series A, 108
  (2004).

\bibitem[Yinnone(1997)]{Yin97}
{\sc H.~Yinnone}, {\em On paths avoiding forbidden pairs of vertices in a
  graph}, Discrete Applied Mathematics, 74 (1997), pp.~85--92.

\end{thebibliography}

\label{BibliographyPage}

\end{document}